\newcommand{\f}[2]{\frac{#1}{#2}}
\newcommand{\Norm}[1]{\left\|{#1}\right\|}
\newcommand{\norm}[1]{\|{#1}\|}
\DeclareMathOperator{\poly}{poly}
\DeclareMathOperator{\Sym}{Sym}
\DeclareMathOperator{\adj}{adj}
\renewcommand{\natural}{\mathbb{N}}
\newcommand{\integer}{\mathbb{Z}}
\newcommand{\II}{\mathbb{I}}
\newcommand{\ket}[1]{|{#1}\rangle}
\newcommand{\bra}[1]{\langle{#1}|}
\newcommand{\braket}[2]{\langle{#1}|{#2}\rangle}
\newcommand{\CP}{\mathrm{CP}}
\newcommand{\CD}{\mathrm{C}\theta}
\newcommand{\XX}{\mathrm{X}}
\newcommand{\CZ}{\mathrm{CZ}}
\newcommand{\CNOT}{\mathrm{CNOT}}
\newcommand{\Had}{\mathrm{H}}
\DeclareMathOperator{\T}{T}
\DeclareMathOperator{\Bas}{B}
\renewcommand{\H}{\mathcal{H}}
\newcommand{\K}{\mathcal{K}}
\renewcommand{\O}{\mathcal{O}}
\newcommand{\med}{\mathrm{m}}
\theoremstyle{plain}
\newtheorem{theorem}{Theorem}
\newtheorem{lemma}{Lemma}
\newtheorem{prop}{Proposition}
\newcommand{\eq}[1]{\hyperref[eq:#1]{(\ref*{eq:#1})}}
\renewcommand{\sec}[1]{\hyperref[sec:#1]{Appendix~\ref*{sec:#1}}}
\newcommand{\app}[1]{\hyperref[app:#1]{Appendix~\ref*{app:#1}}}
\newcommand{\fig}[1]{\hyperref[fig:#1]{Figure~\ref*{fig:#1}}}
\newcommand{\thm}[1]{\hyperref[thm:#1]{Theorem~\ref*{thm:#1}}}
\newcommand{\lem}[1]{\hyperref[lem:#1]{Lemma~\ref*{lem:#1}}}
\newcommand{\pro}[1]{\hyperref[pro:#1]{Proposition~\ref*{pro:#1}}}
\title{Universal computation by multi-particle quantum walk}
\author
{Andrew M. Childs,$^{1,2}$ David Gosset,$^{1,2}$ Zak Webb$^{2,3}$\\
\\
\normalsize{$^1$ Department of Combinatorics \& Optimization, University of Waterloo}\\
\normalsize{$^2$ Institute for Quantum Computing, University of Waterloo}\\
\normalsize{$^3$ Department of Physics \& Astronomy, University of Waterloo}\\
\\
}
\date{}
\begin{document}
\baselineskip24pt
\maketitle

\begin{abstract} 
A quantum walk is a time-homogeneous quantum-mechanical process on a graph defined by analogy to classical random walk. The quantum walker is a particle that moves from a given vertex to adjacent vertices in quantum superposition. Here we consider a generalization of quantum walk to systems with more than one walker. A continuous-time multi-particle quantum walk  is generated by a time-independent Hamiltonian with a term corresponding to a single-particle quantum walk for each particle, along with an interaction term. Multi-particle quantum walk includes a broad class of interacting many-body systems such as the Bose-Hubbard model and systems of fermions or distinguishable particles with nearest-neighbor interactions. We show that multi-particle quantum walk is capable of universal quantum computation. Since it is also possible to efficiently simulate a multi-particle quantum walk of the type we consider using a universal quantum computer, this model exactly captures the power of quantum computation. In principle our construction could be used as an architecture for building a scalable quantum computer with no need for time-dependent control. 
\end{abstract}
\singlespace
\section*{Introduction}

Quantum walk is a versatile and intuitive framework for developing quantum algorithms.  Applications of continuous- \cite{FG98} and discrete-time \cite{AAKV01,ABNVW01} models of quantum walk  include an example of exponential speedup over classical computation \cite{CCDFGS03} and optimal algorithms for element distinctness \cite{Amb07} and formula evaluation \cite{FGG08}.

Quantum walk can also be viewed as a model of computation. From this perspective it is natural to ask which quantum computations can be performed efficiently by quantum walk. This question was answered in reference \cite{Chi09} where it was shown that continuous-time quantum walk on sparse unweighted graphs is equivalent in power to the quantum circuit model.

While this universality construction shows that quantum walk is a powerful computational model, it does not directly provide an architecture for a quantum computer. Since each vertex of the underlying graph corresponds to a basis state in the Hilbert space, the graph associated with a quantum computation on $n$ qubits is exponentially large as a function of $n$. This means that this quantum walk cannot be efficiently implemented using an architecture where each vertex of the graph occupies a different spatial location. Although the quantum walk on any sufficiently sparse graph can be efficiently simulated by a universal quantum computer \cite{AT03,BACS05,Chi10,CCDFGS03}, it may be no easier to implement this quantum walk than to perform a general quantum computation. 

Nevertheless, many experimental implementations of quantum walk have been carried out.  Typically, such experiments rely on non-scalable encodings of the quantum walk into a physical system such that locality of the graph translates into locality of the Hamiltonian, but with substantial overhead that precludes the possibility of dramatic computational speedup (see for example \cite{DSBE05,KFC09,PLP08}).  While such experiments may serve as useful testbeds of presently available quantum information processors, they cannot directly realize efficient universal quantum computation.

In this paper we consider a natural generalization of quantum walk.  Whereas a standard quantum walk concerns a single walker moving (in superposition) on a graph, we consider the generalization to many interacting walkers. We show that such a multi-particle quantum walk is universal for quantum computation. Specifically, we show that any $n$-qubit circuit with $g$ gates can be simulated by the dynamics of $\O(n)$ particles that interact for a time $\poly(n,g)$ on an unweighted planar graph of maximum degree $4$ with $\poly(n,g)$ vertices. Our construction is based on a discrete version of scattering theory and implements single-qubit gates similarly to the single-particle universality construction \cite{Chi09}.  However, we use a different encoding of quantum data, and we implement two-qubit gates via interactions between pairs of particles.  For indistinguishable particles, almost any interaction gives rise to universality.  We present explicit universal constructions based on the Bose-Hubbard model, fermions with nearest-neighbor interactions, and distinguishable particles with nearest-neighbor interactions.

To prove that our construction works, we develop several tools for analyzing multi-particle scattering on graphs. We prove error bounds for the propagation of one- and two-particle wave packets,  and we prove a truncation lemma formalizing the idea that a particle moving at a fixed speed only ``sees'' part of the graph in a fixed time interval. Using these tools we prove that multi-particle quantum walk can efficiently simulate a standard quantum computer, establishing the computational power of multi-particle quantum walk. This result also provides limitations on the classical simulation of many-body interacting systems.   For example, assuming quantum computers are more powerful than classical ones, our work implies that the dynamics of the Bose-Hubbard model on a sparse, planar graph cannot be efficiently simulated on a classical computer. 

Prior to this work it was known that systems of interacting particles on a lattice can be used to perform universal computation by changing the Hamiltonian as a function of time~\cite{IZ02,mizel07}. Because of the time dependence, such a system is not a quantum walk.  In contrast, in our scheme the Hamiltonian is time-independent and the computation is encoded entirely in the graph on which the particles interact.

 Multi-particle quantum walk has previously been considered as an algorithmic tool for solving graph isomorphism (see for example \cite{gamble2010}). However, this technique has known limitations (see for example \cite{smith2011}). Other previous work on multi-particle quantum walk has focused on two-particle quantum walk \cite{sheridan2006,pathak2007,bromberg2009,PLM10,OBB11,lahini2012,sansoni2012,schreiber2012,ahlbrecht2012} and multi-particle quantum walk without interactions \cite{sheridan2006,pathak2007,bromberg2009,PLM10,OBB11,Rohde11,sansoni2012}. Here we consider multi-particle quantum walks with interactions, which seem to be required in order to achieve efficient computational universality. Although non-interacting bosonic quantum walks may be difficult to simulate classically \cite{AA11}, such systems are probably not capable of performing universal quantum computation.  For fermionic systems the situation is even clearer: non-interacting fermions can be efficiently simulated on a classical computer~\cite{TD02}.

We hope that our work motivates further experimental investigations of  multi-particle quantum walk. Because our graphs are exponentially smaller (as a function of $n$) than those used in the single-particle construction, the multi-particle quantum walks we describe can be efficiently implemented using an architecture where vertices of the graph are represented by devices at different spatial locations (although we have not addressed issues of fault tolerance). The two-particle bosonic quantum walk experiments of references \cite{bromberg2009,PLM10,OBB11,sansoni2012} can be viewed as a first step toward implementing our construction.  However, these experiments only involve non-interacting particles; as we discussed earlier, a nontrivial interaction appears necessary for universality. Conversely, our work shows that almost any interaction can be used to perform efficient universal computation.  A Bose-Hubbard model of the type we consider could naturally be realized in a variety of experimental systems, including traditional nonlinear optics \cite{CY95}, neutral atoms in optical lattices \cite{BCDJ99} (with a lattice pattern implemented using a quantum gas microscope \cite{BGPFG09}), or photons in arrays of superconducting qubits \cite{HSS11}.  

\section*{Multi-particle quantum walk}

In a multi-particle quantum walk, the particles interact in a local manner on a given simple graph $G$ with vertex set $V(G)$ and edge set $E(G)$.  We consider quantum walks with distinguishable or indistinguishable particles, where in the latter case the particles can be either bosons or fermions.

The Hilbert space for $m$ distinguishable particles on $G$ is spanned by the basis 
\begin{equation}
\{\ket{i_1,\ldots,i_m} \colon i_1,\ldots,i_m\in V(G)\}\label{eq:firstq}
\end{equation}
where $i_w$ is the location of the $w$th particle. A continuous-time multi-particle quantum walk of $m$ distinguishable particles on $G$ is generated by a time-independent Hamiltonian 
\begin{equation}
H_G^{(m)}=\sum_{w=1}^m \sum_{(i,j)\in E(G)} \bigg(|i\rangle\langle j|_w+|j\rangle\langle i|_w\bigg)+ \sum_{i,j\in V(G)} \mathcal{U}_{ij}(\hat{n}_i,\hat{n}_j)\label{eq:dist_ham}
\end{equation}
where the subscript $w$ indicates that the operator acts on the location register for the $w$th particle (tensored with the identity on all other particles). Here $\mathcal{U}_{ij} (\hat{n}_i,\hat{n}_j)$ is a function of the number operators $\hat{n}_i$  and $\hat{n}_j$ that count the numbers of  particles located at vertices $i$ and $j$, respectively (explicitly, $\hat{n}_i=\sum_{w=1}^m \ket{i}\bra{i}_w$).

The first term of \eq{dist_ham} moves particles between adjacent sites, while the second term is an interaction between particles. We only consider interaction terms acting between two or more particles, so we assume that $\mathcal{U}_{ij}$ is zero whenever one of its arguments evaluates to zero and that $\mathcal{U}_{ii}$ is zero if there is only one particle at vertex $i$.  We also assume that the interaction $\mathcal{U}_{ij}$ has a constant range $C\in\{0,1,2,\ldots\}$ (i.e., $\mathcal{U}_{ij}=0$ whenever the shortest path between vertices $i$ and $j$ has more than $C$ edges).  In our universality construction we consider graphs $G$ that include long paths. Each vertex in one of these paths has degree $2$ in the full graph $G$, i.e., the path is only connected at either end to other vertices in $G$. We require that for vertices $i$ and $j$ on such a path, $\mathcal{U}_{ij}$ depends only on the distance between $i$ and $j$ (a kind of translation invariance).  Finally, we assume that the norm of each term $\mathcal{U}_{ij}$ is upper bounded by a polynomial in $m$.

Note that for any $\mathcal{U}$ satisfying our assumptions, the Hamiltonian \eq{dist_ham} for a single particle reduces to
\begin{equation}
H_G^{(1)}=\sum_{(i,j)\in E(G)} |i\rangle\langle j|+|j\rangle\langle i|, \label{eq:single_particle_ham}
\end{equation}
the Hamiltonian for a standard continuous-time quantum walk (namely, the adjacency matrix of $G$).

States representing $m$ indistinguishable particles can be represented in the basis \eq{firstq} as states that are either symmetric (if the particles are bosons) or antisymmetric (if the particles are fermions) under the interchange of any two particles.  Since the Hamiltonian \eq{dist_ham} is symmetric, it preserves both the symmetric and antisymmetric subspaces. Restricted to the appropriate subspace, the Hamiltonian \eq{dist_ham} generates a quantum walk of $m$ bosons or $m$ fermions on $G$. 

This framework includes several well-known interacting many-body systems defined on graphs. For example, it includes the Bose-Hubbard model, where the interaction term is $\mathcal{U}_{ij}(\hat{n}_i,\hat{n}_j)=  (U/2) \delta_{i,j} \hat{n}_i(\hat{n}_i-1)$. It also includes systems with nearest-neighbor interactions, such as the model with interaction term $\mathcal{U}_{ij}(\hat{n}_i,\hat{n}_j) = U \delta_{(i,j)\in E(G)} \hat{n}_i \hat{n}_j$. 

\section*{Single-particle and two-particle scattering}\label{sec:momentum_states}

In our scheme for universal quantum computation we design a multi-particle quantum walk where the dynamics can be understood by considering scattering events involving one or two particles. In this section we show how single-particle and two-particle quantum walks on certain graphs can be analyzed using a discrete version of scattering theory. 

\paragraph*{Single-particle scattering}
\begin{figure}
\centering
\capstart
\begin{tikzpicture}[
  label distance=-5.5pt,
  thin,
  vertex/.style={circle,draw=black,fill=black,inner sep=1.25pt,
    minimum size =0mm},
  attach/.style={circle,draw=black,fill=white,inner sep=1.25pt,
    minimum size =0mm},
  dots/.style={circle,fill=black,inner sep=.5pt,
    minimum size= 0pt},
  every text node part/.style={font=\footnotesize}]

  \node at (.15, .63) [rectangle,fill=white] {$(1,1)$};
  \node at (.48, .17) [rectangle,fill=white,inner sep=0pt] {$(1,2)$};
  \node at (.22,-.5) [rectangle,fill=white] {$(1,N)$};

  \draw[thick] (15:1) arc (15:335:1);
  \draw[densely dotted] (15:1)  arc (15:5:1);
  \draw[densely dotted] (-15:1)  arc (-15:-25:1);

  \node at (-.42,.12) [rectangle]{${\widehat{G}}$};

  \foreach \x  in {50, 25,-35}{
    \draw (\x:1cm) -- (\x:3.15cm);
    \draw (\x:2.9cm) -- (\x:3.4cm) [densely dotted];
  }
  
  \node at (50:1)[attach]{};
  \node at (50:1.66)[vertex,label=150:{$(2,1)$}]{};
  \node at (50:2.33)[vertex,label=150:{$(3,1)$}]{};
  \node at (50:3)[vertex,label=150:{$(4,1)$}]{};

  \node at (25:1)[attach]{};
  \node at (25:1.66)[vertex]{};
  \node at (25:2.33)[vertex]{};
  \node at (25:3)[vertex]{};
  
  \node at (12:1.65) {$(2,2)$};
  \node at (15:2.5) {$(3,2)$};
  \node at (17.5:3.35) {$(4,2)$};

  \node at (-35:1)[attach]{};
  \node at (-35:1.66)[vertex,label=60:{$(2,N)$}]{};
  \node at (-35:2.33)[vertex,label=60:{$(3,N)$}]{};
  \node at (-35:3)[vertex,label=60:{$(4,N)$}]{};

  \node at (-2.5:2.9)[dots] {};
  \node at (2.5:2.9)[dots] {};
  \node at (-7.5:2.9)[dots] {};
\end{tikzpicture}
\caption{The graph $G$ obtained by attaching $N$ semi-infinite paths to a graph $\widehat{G}$. We label the vertices on the semi-infinite paths as $(x,j)$, with $j\in\{1,\ldots,N\}$ indexing the path and $x\in\natural = \{1,2,3,\ldots\}$ denoting the distance from $\widehat{G}$.  The vertices labeled $(1,1),$ $(1,2),$ $\ldots,$ $(1,N)$ are vertices of the original graph $\widehat{G}$.}
\label{fig:graph}
\end{figure}
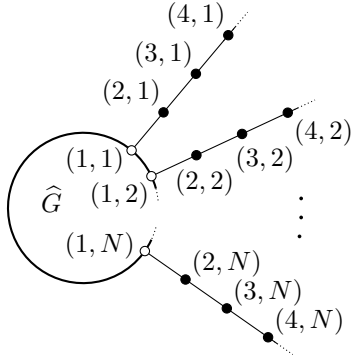

Consider a single-particle quantum walk on an infinite graph $G$ obtained by attaching a semi-infinite path to each of $N$ chosen vertices of an arbitrary $(N+m)$-vertex graph $\widehat{G}$ as depicted in \fig{graph}. Here we discuss the single-particle scattering process where a particle is initially prepared in a state that moves (under Schr\"{o}dinger time evolution) toward the subgraph $\widehat{G}$ along one of the semi-infinite paths.  After scattering through the subgraph, the particle moves away from $\widehat{G}$ in superposition along the semi-infinite paths. To understand this scattering process we discuss the scattering eigenstates of the Hamiltonian $H_{G}^{(1)}$.

Scattering states are states with definite incoming momentum. Given the graph $\widehat{G}$, for each momentum $k\in (-\pi,0)$ and each path $j\in \{1,2,\ldots,N\}$, there is a scattering state $|\mathrm{sc}_{j}(k)\rangle$ with amplitudes
\begin{align}
\langle x,q|\mathrm{sc}_{j}(k)\rangle 
  &= e^{-i k x} \delta_{qj}
   + e^{i k x} S_{qj}(k)
\label{eq:single_particle_states}
\end{align}
on the semi-infinite paths (with the labeling $(x,q)$ of vertices on the paths as in \fig{graph}), where the $N \times N$ matrix $S(k)$ appearing in the above equation is a unitary matrix called the S-matrix. In \sec{sing_part} we show how to obtain the S-matrix and the amplitudes of $|\mathrm{sc}_{j}(k)\rangle$ within the graph $\widehat{G}$ \cite{Childs_Gosset}. The state $|\mathrm{sc}_{j}(k)\rangle$ is an eigenstate of $H^{(1)}_G$ with energy $2\cos k$.  We also show in \sec{sing_part} that the states $|\mathrm{sc}_j(k)\rangle$ are delta-function normalized.

A wave packet is a normalized state with most of its amplitude on scattering states with momentum close to some particular value. The scattering state $|\mathrm{sc}_{j}(k)\rangle$ gives us information about how a  wave packet with momentum near $k$ located on the semi-infinite path $j$ scatters from the graph $\widehat{G}$.  The wave packet initially moves toward the graph with speed $|\frac{dE}{dk}|=|2\sin k|$. After scattering, the wave packet moves away from $\widehat{G}$ along each of the semi-infinite paths (in superposition), and the amplitude associated with finding the wave packet on path $q$ is $S_{qj}(k)$.

In \sec{analysis} we discuss scattering of single-particle wave packets in more detail. In particular, we show that this picture of the scattering process is valid when the finite extent of the wave packets is taken into account, and when the infinite paths are truncated to be long but finite.

\paragraph*{Two-particle scattering}

Now consider two-particle scattering on an infinite path. Translation symmetry makes this system easier to analyze than more general two-particle quantum walks (see for example references \cite{VP08,V10}).

We now discuss the scattering of two indistinguishable particles initially prepared in spatially separated wave packets moving toward each other along a path with momenta $k_1\in(-\pi,0) $ and $k_2\in (0,\pi)$ (we discuss distinguishable particles in \sec{distinguishable}). Due to conservation of energy and momentum, the state of this system after scattering still consists of one particle with momentum $k_1$ and another with momentum $k_2$, but now moving apart. Since the particles are indistinguishable, there is no distinction between transmission and reflection of the particles, so the effect of the interaction is to change the global phase of the wave function after scattering (relative to the case with no interaction). We show in the Appendices how to calculate this phase given the two momenta, the interaction $\mathcal{U}$, and the particle type (bosons or fermions). In \sec{analysis} we show that this picture of the two-particle scattering process holds on a long (but finite) path and with finite wave packets.

\section*{Computation by multi-particle quantum walk}

In this section we describe our scheme for performing quantum computation using a multi-particle quantum walk of indistinguishable particles (we present a refinement of our scheme that uses distinguishable particles in \sec{distinguishable}). We encode $n$ logical qubits into the locations of $n$ particles on a graph of size $\poly(n)$.  In addition to these $n$  particles, we use another particle to encode an ancilla qubit that facilitates two-qubit gates. We call the original $n$ qubits \emph{computational qubits}, and we call the ancilla qubit the \emph{mediator qubit}.  Time evolution of a simple initial state with the Hamiltonian corresponding to a suitably chosen graph $G$ implements a quantum circuit on this encoded data. 

The quantum circuit to be simulated (on the $n$ computational qubits) is given as a product of one- and two-qubit gates from a universal gate set consisting of single-qubit gates along with the controlled phase gate 
\[
\CP=
\begin{pmatrix}
1 & 0 & 0 & 0\\
0 & 1 & 0 & 0\\
0 & 0 & 1 &0\\
0 & 0 & 0 & -i\\
\end{pmatrix}.
\]
Note that a controlled phase between computational qubits $i$ and $j$ can be expanded as the following set of gates also acting on the mediator qubit $\med$ (initialized to $\ket{0}$):
\begin{align*}
  \CP_{ij} \ket{a_i, b_{j},0_{\med}}
     &= \CNOT_{i\med} \CP_{j\med} \CNOT_{i\med} \ket{a_i, b_{j},0_{\med}}\\
     &= \Had_\med \CP_{i\med}^2 \Had_\med \CP_{j\med} \Had_\med \CP_{i\med}^2 \Had_\med
         \ket{a_i, b_{j},0_{\med}}.
\end{align*}
Writing the controlled phase gate in this way, we can transform the given quantum circuit into one with only single-qubit gates on computational qubits, Hadamard gates on the mediator qubit, and controlled phase gates between the mediator qubit and arbitrary computational qubits. In our construction we use this gate set acting on $n+1$ qubits to simulate a quantum computation on $n$ qubits.

Each of the $n+1$ qubits is represented in a dual-rail encoding using two paths that run through the graph, as shown in \fig{wavetrain_encoding}. The encoded state $|0\rangle$ has a particle moving along the top path whereas the encoded state $|1\rangle$ has a particle moving along the bottom path. The particle moves as a wave packet with momentum near $k$. For the $n$ computational qubits we choose $k=-{\pi}/{4}$ and for the mediator qubit we choose $k=-{\pi}/{2}$ (for concreteness).

To implement single-qubit unitaries on each of the encoded qubits, we design the graph so that the particles scatter through a series of small subgraphs while remaining far apart. When the particles are all far from each other, the interaction term in the Hamiltonian is negligible and the $n+1$ wave packets propagate independently through the graph (here we neglect the interaction only for ease of exposition; our detailed analysis in \sec{description} handles the full interacting system). In this case the multi-particle quantum walk can be viewed as a single-particle quantum walk for each of the particles. 

To implement the controlled phase gate between the mediator and a computational qubit, we design a subgraph that routes two particles toward each other and causes them to interact on a long path for a short time.  The two particles then scatter away from one another and the system returns to a state where the particles are all far apart.

\begin{figure}
\centering
\capstart
\subfigure[Encoded $|0\rangle$.] 
{
\begin{tikzpicture}[
  vert/.style={circle,fill=black,inner sep=.7pt,minimum width=0pt},
  dots/.style={circle,fill=black,inner sep=.25pt,minimum width=0pt},
  thick,
  scale=0.4]
  \foreach \x in {0,.4,...,10}{
    \node at (\x ,0) [vert]{};
    \node at (\x ,3) [vert]{};
  }

  \draw (0,0) -- (10,0);
  \draw (0,3) -- (10,3);

  \begin{scope}[yshift=0cm]
  \draw (1.6,3.25) -- (2,3.25) -- (2,4) to node [above] {$k\rightarrow$}
        (8,4) -- (8,3.25) -- (8.4,3.25);
  \end{scope}

  \foreach \xsh in {-1.5cm , 10.5cm}{
  \foreach \ysh in {0cm, 3cm}{
    \begin{scope}[xshift=\xsh,yshift=\ysh]
      \node at (0,0) [dots]{};
      \node at (0.5,0) [dots] {};
      \node at (1,0) [dots]{};
    \end{scope}
  }}
\end{tikzpicture}
}
\hspace{2cm}
\subfigure[Encoded $|1\rangle$.] 
{
\begin{tikzpicture}[
  vert/.style={circle,fill=black,inner sep=.7pt,minimum width=0pt},
  dots/.style={circle,fill=black,inner sep=.25pt,minimum width=0pt},
  thick,
  scale=0.4]
  \foreach \x in {0,.4,...,10}{
    \node at (\x ,0) [vert]{};
    \node at (\x ,3) [vert]{};
  }

  \draw (0,0) -- (10,0);
  \draw (0,3) -- (10,3);

  \begin{scope}[yshift=-3cm]
  \draw (1.6,3.25) -- (2,3.25) -- (2,4) to node [above] {$k\rightarrow$}
        (8,4) -- (8,3.25) -- (8.4,3.25);
  \end{scope}

  \foreach \xsh in {-1.5cm , 10.5cm}{
  \foreach \ysh in {0cm, 3cm}{
    \begin{scope}[xshift=\xsh,yshift=\ysh]
      \node at (0,0) [dots]{};
      \node at (0.5,0) [dots] {};
      \node at (1,0) [dots]{};
    \end{scope}
  }}
\end{tikzpicture}
}
\caption{A qubit is encoded using single-particle wave packets.  For an $n$-qubit computation we use $n$ particles with momentum $k=-{\pi}/{4}$ and one with $k=-{\pi}/{2}$. }
\label{fig:wavetrain_encoding}
\end{figure}
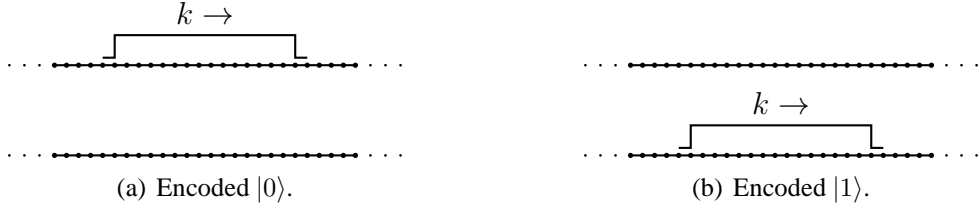

\paragraph*{Single-qubit gates}
To apply a one-qubit unitary $U$ to a computational qubit, we insert an associated graph $\widehat{G}$ into the paths representing the qubit as follows. We attach two long ``input'' paths and two long ``output'' paths to four suitably chosen vertices of $\widehat{G}$ so that the S-matrix at momentum $-{\pi}/{4}$ has the form
\begin{equation}
S=
\begin{pmatrix}
0 & U^{\prime}\\
U & 0\\
\end{pmatrix}
\label{eq:S_matrix_circuit}.
\end{equation}
Each block of this matrix has size $2\times 2$.  A particle incident on the input paths with momentum $-{\pi}/{4}$ transmits perfectly to the output paths, with amplitudes determined by the unitary $U$. The scattering process implements the unitary $U$ on the encoded qubit. 

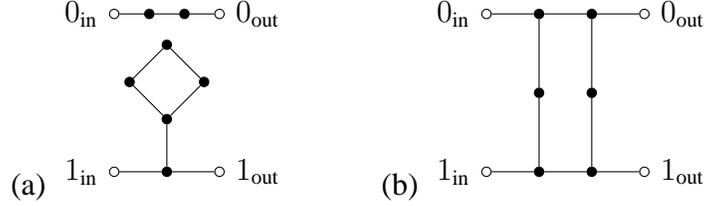
\begin{figure}
\centering
\capstart
\subfigure{(a)}
\begin{tikzpicture}
  [ scale=0.7,inner/.style={circle,draw=black!100,fill=black!100,inner sep = 1.25pt},
    attach/.style={circle,draw=black!100,fill=black!0,thin,inner sep = 1.25pt}]
  \node (0) at (-1,      0)      [attach,label=left:$1_\text{in}$] {};
  \node (1) at ( 0,      0)      [inner]  {};
  \node (2) at ( 1,      0)      [attach,label=right:$1_\text{out}$] {};  
  \node (3) at ( 0,      1)      [inner]  {};
  \node (4) at ( 0,      2.4142) [inner]  {};
  \node (5) at (-0.7071, 1.7071) [inner]  {};
  \node (6) at ( 0.7071, 1.7071) [inner]  {};
  \node (7) at (-1,      3)      [attach,label=left:$0_\text{in}$] {};
  \node (8) at (-0.3333, 3)      [inner]  {};
  \node (9) at ( 0.3333, 3)      [inner]  {};
  \node (10)at ( 1,      3)      [attach,label=right:$0_\text{out}$] {};

  \draw (1) to (0) [thin];
  \draw (1) to (2) [thin];
  \draw (1) to (3) [thin];
  \draw (3) to (5) [thin];
  \draw (3) to (6) [thin];
  \draw (5) to (4) [thin];
  \draw (6) to (4) [thin];  
  \draw (8) to (7) [thin];
  \draw (8) to (9) [thin];
  \draw (9) to (10)[thin];
\end{tikzpicture}
\hspace{1cm}
\subfigure{(b)}
\begin{tikzpicture}
  [ scale=0.7,
    yscale=1.5,
    inner/.style={circle,draw=black!100,fill=black!100,inner sep = 1.25pt},
    attach/.style={circle,draw=black!100,fill=black!0,thin,inner sep = 1.25pt}]

  \node (1) at ( 0, 2) [inner]  {};
  \node (2) at ( 0, 0) [inner]  {};
  \node (3) at ( 1, 2) [inner]  {};
  \node (4) at ( 1, 0) [inner]  {};
  \node (5) at ( 0, 1) [inner]  {};
  \node (6) at ( 1, 1) [inner]  {};
  \node (7) at (-1, 2) [attach,label=left:$0_\text{in}$] {};
  \node (8) at (-1, 0) [attach,label=left:$1_\text{in}$] {};
  \node (9) at ( 2, 2) [attach,label=right:$0_{\text{out}}$] {};
  \node (0) at ( 2, 0) [attach,label=right:$1_{\text{out}}$] {};

  \draw (7) to (1) [thin];
  \draw (8) to (2) [thin];
  \draw (3) to (9) [thin];
  \draw (4) to (0) [thin];
  \draw (1) to (3) [thin];
  \draw (1) to (5) [thin];
  \draw (2) to (4) [thin];
  \draw (2) to (5) [thin];
  \draw (3) to (6) [thin];
  \draw (4) to (6) [thin];
\end{tikzpicture}
\caption{One-qubit gates at $k=-{\pi}/{4}$. (a) Phase gate. (b) Basis-changing gate. }
\label{fig:onepphase}
\end{figure}

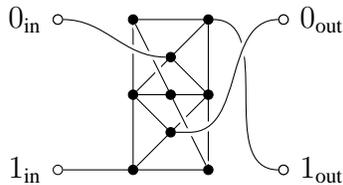
\begin{figure}
\centering
\capstart
\begin{tikzpicture}
  [ scale=1,
    thin,
    attach/.style={circle,fill=white,draw=black,
      inner sep=1.25pt,minimum size=0pt},
    vert/.style={circle,draw=black,fill=black,
      inner sep=1.25pt,minimum size=0pt}]
  
    \node (0inhad) at (0,2) [attach]{};
    \node (1inhad) at (0,0) [attach]{};
    \node (0outhad) at (3,2) [attach]{};
    \node (1outhad) at (3,0) [attach]{};
    
    \node (1had) at (1,0) [vert]{};
    \node (2had) at (2,0) [vert]{};
    \node (3had) at (1.5,.5) [vert]{};
    \node (4had) at (1,1) [vert]{};
    \node (5had) at (1.5,1) [vert] {};
    \node (6had) at (2,1) [vert] {};
    \node (7had) at (1.5,1.5) [vert]{};
    \node (8had) at (1,2) [vert]{};
    \node (9had) at (2,2) [vert]{};

    \draw (8had) -- (1had);
    \draw (9had) -- (2had);
    \draw (4had) -- (9had);
    \draw (1had) -- (6had);
    \draw (4had) -- (6had);
    \draw (3had) -- (4had);
    \draw (7had) -- (6had);
    \draw (1inhad) -- (2had);
    \draw (8had) -- (9had);
    \draw (8had) -- (5had)[line width=3pt,white];
    \draw (5had) -- (2had)[line width=3pt,white];
    \draw (8had) -- (2had);
    
    \draw (9had) to[out=0,in=180] (1outhad) [looseness=.3];
    \draw (0inhad) to[out=0,in=180] (7had)[line width = 3pt,white];
    \draw (0inhad) to[out=0,in=180] (7had);

    \draw (3had) --(1.75,.5) to[out=0,in=180] (0outhad)[line width =3pt,white];
    \draw (3had) --(1.75,.5) to[out=0,in=180] (0outhad);

    \node at (0inhad) [attach]{};
    \node at (1outhad) [attach]{};
    \node at (0outhad) [attach]{};

  \node at (1inhad.west) [left] {$1_{\text{in}}$};
  \node at (0inhad.west) [left] {$0_{\text{in}}$};
  \node at (0outhad.east) [right] {$0_{\text{out}}$};
  \node at (1outhad.east) [right] {$1_{\text{out}}$};
\end{tikzpicture}
\caption{Graph implementing a Hadamard gate at $k = -{\pi}/{2}$.}
\label{fig:halfpihad}
\end{figure}
Graphs $\widehat{G}$ that implement a phase gate and a basis-changing gate at momentum $-{\pi}/{4}$ are shown in \fig{onepphase} \cite{Chi09}.  The input and output paths are attached to the vertices denoted by open circles. The S-matrix at momentum $-{\pi}/{4}$ for each of these graphs is a $4\times 4$ matrix of the form \eq{S_matrix_circuit}, with the lower left $2\times 2$ submatrix given by 
\[
  U_{\text{phase}} = \begin{pmatrix}e^{-i\pi/4} & 0\\ 0 &  1\end{pmatrix} \qquad	
  U_{\text{basis}} = -\frac{i}{\sqrt{2}}\begin{pmatrix} 1 & - i  \\- i & 1\end{pmatrix}.
\]

These two gates allow us to approximate arbitrary single-qubit unitaries on each of the $n$ computational qubits. To obtain a graph implementing two of these unitaries in series, simply concatenate the two graphs by attaching the output vertices of the first graph to the input vertices of the second graph. At momentum $-{\pi}/{4}$, the S-matrix of the resulting graph is the product of the S-matrices of the two graphs being concatenated (this is true whenever the two graphs have perfect transmission from input vertices to output vertices \cite{Chi09}).  

The Hadamard gate is the only nontrivial single-qubit gate we apply to the mediator qubit. To do so we use the graph depicted \fig{halfpihad} \cite{2011PhRvA..84f2302B}.  This graph has S-matrix at momentum $k={-\pi}/{2}$ of the form \eq{S_matrix_circuit}, with lower left submatrix
\[
U_H=-\frac{e^{i\frac{\pi}{4}}}{\sqrt{2}}\begin{pmatrix}1 & 1\\ 1 &  -1\end{pmatrix},
\]
which is the Hadamard gate up to an irrelevant global phase.
\paragraph*{Two-qubit gates}
To implement the controlled phase gate between the mediator qubit and a computational qubit we use some facts about two-particle scattering on a long path. Recall  that two indistinguishable particles of momentum $k_1$ and $k_2$ initially traveling toward each other will, after scattering, continue to travel as if no interaction occurred, except that the phase of the wave function is modified by the interaction. In general this phase depends on $k_1$ and $k_2$ (as well as the interaction $\mathcal{U}$ and the particle statistics).  For us, $k_1=-{\pi}/{2}$ and $k_2={\pi}/{4}$ (moving in opposite directions).  We write $e^{i\theta}$ for the phase acquired at these momenta.

In our scheme we design a subgraph that routes a computational particle and a mediator particle toward each other along a long path only when the two associated qubits are in state $\ket{11}$. This allows us to implement the two-qubit gate
\[
\CD=
\begin{pmatrix}
1 & 0 & 0 & 0\\
0 & 1 & 0 & 0\\
0 & 0 & 1 &0\\
0 & 0 & 0 & e^{i\theta}\\
\end{pmatrix}.
\]
For some models $\CD =\CP$. We show in \sec{twopart_scat} that this holds in the Bose-Hubbard model (where the interaction term is  $\mathcal{U}_{ij}(\hat{n}_i,\hat{n}_j)=  (U/2) \delta_{i,j} \hat{n}_i(\hat{n}_i-1)$) when the interaction strength is chosen to be $U=2+\sqrt{2}$, since in this case $e^{i\theta}=-i$. For nearest-neighbor interactions with fermions, with $\mathcal{U}_{ij}(\hat{n}_i,\hat{n}_j) = U \delta_{(i,j) \in E(G)} \hat{n}_i \hat{n}_j$, the choice $U=-2-\sqrt{2}$ gives $e^{i\theta}=i$, so $\CP=(\CD)^3$. While tuning the interaction strength makes the $\CP$ gate easier to implement, almost any interaction between indistinguishable particles allows for universal computation. We can approximate the required CP gate by repeating the $\CD$  gate $a$ times, where $e^{i a \theta} \approx -i$ (which is possible for most values of $\theta$, assuming $\theta$ is known \cite{note2}). 

Our strategy requires routing the particles onto a long path.  This is done via a subgraph we call the \emph{momentum switch}, as depicted in \fig{onepsplit}(a). The S-matrices for this graph at momenta $-{\pi}/{4}$ and $-{\pi}/{2}$ are
\begin{equation}
  S_{\text{switch}}\left(-\pi/4\right) = \begin{pmatrix} 0 & 0 & e^{-i\pi/4}\\
    0 & -1 & 0\\
    e^{-i\pi/4} & 0 & 0\end{pmatrix}\qquad
  S_{\text{switch}}\left(-\pi/2\right) = \begin{pmatrix}1 & 0 &0\\
    0 & 0 & -1\\
    0 & -1 & 0\end{pmatrix}.
\label{eq:switch_S}
\end{equation}
 The momentum switch has perfect transmission between vertices 1 and 3 at momentum $-{\pi}/{4}$ and perfect transmission between vertices 2 and 3 at momentum $-{\pi}/{2}$. In other words, in the schematic shown in \fig{onepsplit}(a), the path a particle follows through the switch depends on its momentum. A particle with momentum $-{\pi}/{2}$ follows the double line, while a particle with momentum $-{\pi}/{4}$ follows the single line.

The graph used to implement the $\CD$ gate has the form shown in \fig{onepsplit}(b).  We specify the number of vertices on each of the paths in \sec{description}. To see why this graph implements a $\CD$ gate, consider the movement of two particles as they pass through the graph. If either particle begins in the state $\ket{0_{\text{in}}}$, then it travels along a path to the output without interacting with the second particle. When the computational particle (qubit $c$ in the figure) begins in the state $\ket{1_{\text{in}}}^c$, it is routed downward as it passes through the top momentum switch (following the single line). It travels down the vertical path and then is routed to the right (along the single line) as it passes through the bottom switch.  Similarly, when the mediator particle begins in the state $\ket{1_{\text{in}}}^\med$, it is routed upward (along the double line) through the vertical path at the bottom switch and then to the right (along the double line) at the top switch. If both particles begin in the state $\ket{1_{\text{in}}}$, then they interact on the vertical path. In this case, as the two particles move past each other, the wave function acquires a phase $e^{i\theta}$ arising from this interaction. 

\begin{figure}
\centering
\capstart
\subfigure{(a)}
\begin{tikzpicture}
  [ scale = 0.6,
    thin,
    inner/.style={circle,draw=black!100,fill=black!100,inner sep=1.25pt},
    attach/.style={circle,draw=black!100,fill=black!0,thin,inner sep=1.25pt},
    sin/.style={line width=.7pt},
    doub/.style={line width=2.1pt},
    trip/.style={draw=white,line width=.7pt}]
    \node at (0,0){};
\begin{scope}[yshift=1.8cm]
  \node (2)  at (0,0) [attach,label=left:3] {};
  \node (1)  at (0,1) [attach,label=left:1] {};
  \node (3)  at (3,1) [attach,label=right:2] {};
  \node (4)  at (1,0) [inner]  {};
  \node (5)  at (2,0) [inner]  {};
  \node (6)  at (1,1) [inner]  {};
  \node (7)  at (2,1) [inner]  {};
  \node (8)  at (2,2) [inner]  {};
  \node (9)  at (2.866,2.5)  [inner] {};
  \node (10) at (1.134,2.5)  [inner] {};
  \node (11) at (2,-1)       [inner] {};
  \node (12) at (2.866,-1.5) [inner] {};
  \node (13) at (1.134,-1.5) [inner] {};

  \draw (2) to (4);
  \draw (4) to (5);
  \draw (3) to (7);
  \draw (1) to (6);
  \draw (6) to (4);
  \draw (6) to (7);
  \draw (7) to (5);
  \draw (7) to (8);
  \draw (8) to (9);
  \draw (8) to (10);
  \draw (11) to (5);
  \draw (11) to (12);
  \draw (11) to (13);

  \node (split) at (-3.6,.8) [draw=black,circle,inner sep=3mm,
         label=left:1,label=right:2,label=below:3] {};
 
  \node at (-1.6,.5) {=};

  \draw[sin]  (split.west) to[out=0,in=90]   (split.south);
  \draw[doub] (split.east) to[out=180,in=90] (split.south);
  \draw[trip] (split.east) to[out=180,in=90] (split.south);
  
  \node at (split.east) [attach] {};
  \node at (split.west) [attach] {};
  \node at (split.south) [attach]{};
\end{scope}
\end{tikzpicture}
\hspace{0.5cm}
\subfigure{(b)}
\begin{tikzpicture}
  [ scale = 1,
    yscale = .8,
    attach/.style={circle,draw=black!100,fill=white,thick,
    minimum size = 6mm},
    cross/.style={line width=4pt, draw=white},
    drawn/.style={draw=black},
    vert/.style = {circle,fill=black,inner sep=.6pt, minimum size=0},
    nofill/.style = {circle,draw=black,fill=white,inner sep = 1.25pt,minimum size=0},
    decoration={markings,
		mark=between positions 0 and 10 step .1cm
 		with { \node at (0,0) [vert]{}; }}]

  \node (bottom) at ( 1, 0) [attach] {};
  \node (top)    at ( 1, 2.5) [attach] {};
  
  \foreach \x in {0,.1,...,3.5}{
  \foreach \y in {.75, 3.25}{
    \node at (\x,\y) [vert] {};
  }}

  \draw[postaction={decorate}] (0,0) node[left] {$1_{\med,\text{in}}$} 
    -- (bottom.west);
  \draw (0,.75) node [left] {$0_{\med,\text{in}}$} 
    -- (3.5,.75) node [right] {$0_{\med,\text{out}}$};
  \draw (0,3.25) node [left] {$0_{c,\text{in}}$}
    -- (3.5,3.25) node [right] {$0_{c,\text{out}}$};
  \draw[postaction={decorate}] (0,2.5) node [left] {$1_{c,\text{in}}$} 
    -- (top.west) ;
  \draw (top.south) -- (bottom.north) [cross];

  \draw (top.south) -- (bottom.north) [drawn,postaction={decorate}];
  \draw (top.east) -- ( 2, 2.5)  .. controls (2.5,2.5) and (2.5, 0) 
                   .. (3,0) -- (3.5, 0)  [cross];
  \draw (top.east) -- ( 2, 2.5)  .. controls (2.5,2.5) and (2.5, 0) 
                    .. (3,0) -- (3.5, 0) 
                    node [right] {$1_{\med,\text{out}}$} [drawn,postaction={decorate}];
  \draw (bottom.east) -- ( 2, 0) .. controls (2.5,0) and (2.5,2.5)  
                      .. (3,2.5) -- (3.5, 2.5)  [cross];
  \draw[drawn,postaction={decorate}] (bottom.east) -- ( 2, 0) .. controls (2.5,0) and (2.5,2.5)  
                      .. (3,2.5) -- (3.5, 2.5) 
                      node [right] {$1_{c,\text{out}}$} ;

  \draw (top.west) to[out=0,in=90] (top.south) [line width = .7pt];
  \draw (bottom.east) to[out=-180,in=-90] (bottom.north) [line width=.7pt];
  \draw (top.east) to[out=-180,in=90] (top.south) [line width=2.1pt];
  \draw (top.east) to[out=-180,in=90] (top.south) [line width=.7pt,draw=white];
  \draw (bottom.west) to[out=0,in=-90] (bottom.north) [line width=2.1pt];
  \draw (bottom.west) to[out=0,in=-90] (bottom.north) [line width=.7pt,draw=white];

  \foreach \x in {0, 3.5}{
  \foreach \y in {0,.75, 2.5, 3.25}{
    \node at (\x,\y) [nofill] {};
  }}
 
 \node at (top.west) [nofill]{};
 \node at (top.east) [nofill]{};
 \node at (top.south) [nofill]{};
 \node at (bottom.west) [nofill]{};
 \node at (bottom.east) [nofill]{};
 \node at (bottom.north) [nofill]{};
  
\end{tikzpicture}

\caption{(a) Momentum switch. (b) $\CD$ gate.}
\label{fig:onepsplit}
\end{figure}

Note that timing is important: the wave packets of the two particles must be on the vertical path at the same time. We achieve this by choosing the number of vertices on each of the segments in the graph appropriately, taking into account the different propagation speeds of the two wave packets (see \sec{description} for details). 

\paragraph*{Discussion}
Our scheme for simulating a quantum circuit is summarized as follows. We encode $n$ computational qubits as single-particle wave packets with momentum $-{\pi}/{4}$ traveling along two paths, along with a single mediator qubit similarly encoded but with momentum $-{\pi}/{2}$.  We perform single-qubit gates as shown in \fig{onepphase} and \fig{halfpihad}, and we implement two-qubit gates using the mediator qubit and the graph shown in \fig{onepsplit}. The subgraphs representing circuit elements are connected by paths. To illustrate how these ingredients are put together, a graph corresponding to a simple 2-qubit computation is depicted in \fig{example}. 

In \sec{description} we present all the details of our scheme and we prove that it performs the desired quantum computation up to an error term that can be made arbitrarily small. To prove this error bound we analyze the full $(n+1)$-particle interacting many-body system. Our analysis goes beyond the scattering theory discussion presented in the previous section; we take into account the fact that the wave packets are finite (and therefore have support on a range of momenta), as well as the fact that the graphs involved in our scheme are finite. Specifically, we prove that by choosing the size of the wave packets, the number of vertices in the graph, and the total evolution time to be polynomial functions of both $n$ and $g$, the error in simulating an $n$-qubit, $g$-gate quantum circuit is bounded above by an arbitrarily small constant (\sec{analysis}).  The bounds we prove are sufficient to establish universality with only polynomial overhead, but they are almost certainly not optimal. For example, for the Bose-Hubbard model and for models with nearest-neighbor interactions, we prove that the error can be made arbitrarily small by choosing the size of the wave packets to be $\O(n^{12}g^4)$, the total number of vertices in the graph to be  $\O(n^{13}g^5)$, and the total evolution time to be $\O(n^{12}g^5)$.

We also describe two refinements of the scheme presented here. In \sec{planar} we show how the scheme presented above can be adapted so that the graph is planar and has maximum degree four, making it more amenable to implementation.  In \sec{distinguishable} we show how the scheme can be adapted to use distinguishable particles with nearest-neighbor interactions.

\begin{figure}
\centering
\begin{tikzpicture}
  [ scale = .45,
    xscale = 0.8,
    inner/.style={circle,draw=black!100,fill=black!100,inner sep=1pt},
    attach/.style={circle,draw=black!100,fill=black!0,thin,inner sep=.7pt},
    switch/.style={circle,draw=black!100,fill=black!50,thin,inner sep=1.25 pt},
    cross/.style={draw=white,double=black,ultra thick},
    splitter/.style={circle,fill=white,draw=black!100,inner sep=.8mm},
    had/.style={rectangle,draw=black!100,fill=white,rounded corners,thick,minimum size = 1},
    verts/.style={circle,fill=black,inner sep =.6pt,minimum size = 0pt},
    vert/.style={circle,fill=black,inner sep=0.5pt,minimum size=0pt},
    decoration={markings, 
      mark= between positions 0 and 1 step 1.2 mm with{
        \node[vert] at (0,0){};
      }}
    ]
  \node (mao) at (20,0) [attach,label=right:$0_{\med,\text{out}}$] {};

  \node (mbo) at (20,-1) [attach,label=right:$1_{\med,\text{out}}$] {};
 
  \node (l2bo) at (20,1.5) [attach,label=right:$1_{2,\text{out}}$] {};

  \node (l2ao) at (20,2.5)   [attach,label=right:$0_{2,\text{out}}$] {};

  \node (l1bo) at (20,4) [attach,label=right:$1_{1,\text{out}}$] {};

  \node (l1ao) at (20,5) [attach,label=right:$0_{1,\text{out}}$] {};
 
  \node (l1aa) at (-20,5) [attach,label=left:$0_{1,\text{in}}$] {};
  \node (l1ba) at (-20,4) [attach,label=left:$1_{1,\text{in}}$] {};
  \node (l2aa) at (-20,2.5) [attach,label=left:$0_{2,\text{in}}$] {};
  \node (l2ba) at (-20,1.5) [attach,label=left:$1_{2,\text{in}}$] {};
  \node (maa)  at (-20,0) [attach,label=left:$0_{\med,\text{in}}$] {};
  \node (mba)  at (-20,-1) [attach,label=left:$1_{\med,\text{in}}$] {};

  \node (bl1) at (-11,-1) [splitter] {};
  \node (bl2) at (0, -1) [splitter] {};
  \node (bl3) at (9, -1) [splitter] {};

  \node (br1) at (-9,-1) [splitter] {};
  \node (br3) at (11, -1) [splitter] {};

  \node (tl1) at (-11, 4) [splitter] {};
  \node (tl2) at (0 ,1.5) [splitter] {};
  \node (tl3) at (9,  4) [splitter] {};

  \node (tr1) at (-9, 4) [splitter] {};
  \node (tr3) at (11,  4) [splitter] {};

  \begin{scope}[thin]
  \draw[postaction={decorate}] (l1aa) -- (l1ao);
  \draw[postaction={decorate}] (l2aa) -- (14.25,2.5);
  \draw[postaction={decorate}] (15.75,2.5) -- (l2ao);
  \draw[postaction={decorate}] (l1ba) -- (tl1.west);
  \draw[postaction={decorate}] (tl1.east) -- (tr1.west);
  \draw[postaction={decorate}] (tr1.east) -- (tl3.west);
  \draw[postaction={decorate}] (tl3.east) -- (tr3.west);
  \draw[postaction={decorate}] (tr3.east) -- (l1bo);
  \draw[postaction={decorate}] (l2ba) -- (tl2.west);
  \draw[white,line width=5pt] (0.5,1.5) to[out=0,in=180] (2.5,-1);
  \draw[postaction={decorate}] (tl2.east) -- (0.5,1.5) 
      to[out=0,in=180] (2.5,-1) -- (4.25,-1);
  \draw[postaction={decorate}] (5.75,-1) -- (bl3.west);
  \draw[postaction={decorate}] (15.75,1.5) -- (l2bo);
  \draw[postaction={decorate}] (maa)  -- (-15.75,0);
  \draw[postaction={decorate}] (-14.25,0) -- (-5.75,0);
  \draw[postaction={decorate}] (-4.25,0) -- (4.25,0);
  \draw[postaction={decorate}] (5.75,0) -- (14.25,0);
  \draw[postaction={decorate}] (15.75,0) -- (mao);
  \draw[postaction={decorate}] (mba)  -- (-15.75,-1);
  \draw[postaction={decorate}] (-14.25,-1) -- (bl1.west);
  \draw[postaction={decorate}] (bl1.east) -- (br1.west);
  \draw[postaction={decorate}] (br1.east) -- (-5.75,-1);
  \draw[postaction={decorate}] (-4.25,-1) -- (bl2.west);
  \draw[white,line width=5pt] (0.5,-1) to[out=0,in=180] (2.5,1.5);
  \draw[postaction={decorate}] (bl2.east) -- (0.5,-1) 
      to[out=0,in=180] (2.5,1.5) -- (14.25,1.5);
  \draw[postaction={decorate}] (bl3.east) -- (br3.west);
  \draw[postaction={decorate}] (br3.east) -- (14.25,-1);
  \draw[postaction={decorate}] (15.75,-1) -- (mbo);
  \end{scope}

  \begin{scope}[draw=white, line width = 5 pt]
  \draw (bl1.north) -- (tl1.south);
  \draw (bl1.north) -- (tl1.south);
  \draw (bl2.north) -- (tl2.south);
  \draw (bl3.north) -- (tl3.south);
  
  \draw (br1.north) -- (tr1.south);
  \draw (br3.north) -- (tr3.south);
  \end{scope}

  \begin{scope}[thin]
  \draw[postaction={decorate}] (bl1.north) -- (tl1.south);
  \draw[postaction={decorate}] (bl2.north) -- (tl2.south);
  \draw[postaction={decorate}] (bl3.north) -- (tl3.south);
  
  \draw[postaction={decorate}] (br1.north) -- (tr1.south);
  \draw[postaction={decorate}] (br3.north) -- (tr3.south);
  \end{scope}
 
  \draw (br1.west) to[out=0,in=-90] (br1.north);
  \draw (br3.west) to[out=0,in=-90] (br3.north);

  \draw (tr1.south) to[out=90,in=180] (tr1.east);
  \draw (tr3.south) to[out=90,in=180] (tr3.east);

  \draw (tl1.west) to[out=0,in=90] (tl1.south);
  \draw (tl2.west) to[out=0,in=90] (tl2.south);
  \draw (tl3.west) to[out=0,in=90] (tl3.south);

  \draw (bl1.north) to[out=-90,in=180] (bl1.east);
  \draw (bl2.north) to[out=-90,in=180] (bl2.east);
  \draw (bl3.north) to[out=-90,in=180] (bl3.east);

  \draw[line width=.9pt] (bl1.west) to[out=0,in=-90] (bl1.north);
  \draw[line width=.9pt] (bl2.west) to[out=0,in=-90] (bl2.north);
  \draw[line width=.9pt] (bl3.west) to[out=0,in=-90] (bl3.north);

  \draw[line width=.9pt] (br1.east) to[out=180,in=-90] (br1.north);
  \draw[line width=.9pt] (br3.east) to[out=180,in=-90] (br3.north);

  \draw[line width=.9pt] (tr1.south) to[out=90,in=0] (tr1.west);
  \draw[line width=.9pt] (tr3.south) to[out=90,in=0] (tr3.west);

  \draw[line width=.9pt] (tl1.east) to[out=180,in=90] (tl1.south);
  \draw[line width=.9pt] (tl2.east) to[out=180,in=90] (tl2.south);
  \draw[line width=.9pt] (tl3.east) to[out=180,in=90] (tl3.south);

  \draw[line width=.3pt,draw=white] (bl1.west) to[out=0,in=-90] (bl1.north);
  \draw[line width=.3pt,draw=white] (bl2.west) to[out=0,in=-90] (bl2.north);
  \draw[line width=.3pt,draw=white] (bl3.west) to[out=0,in=-90] (bl3.north);

  \draw[line width=.3pt,draw=white] (br1.east) to[out=180,in=-90] (br1.north);
  \draw[line width=.3pt,draw=white] (br3.east) to[out=180,in=-90] (br3.north);

  \draw[line width=.3pt,draw=white] (tr1.south) to[out=90,in=0] (tr1.west);
  \draw[line width=.3pt,draw=white] (tr3.south) to[out=90,in=0] (tr3.west);

  \draw[line width=.3pt,draw=white] (tl1.east) to[out=180,in=90] (tl1.south);
  \draw[line width=.3pt,draw=white] (tl2.east) to[out=180,in=90] (tl2.south);
  \draw[line width=.3pt,draw=white] (tl3.east) to[out=180,in=90] (tl3.south);

  \foreach \n in {tl1, tl2, tl3, tr1, tr3}{
    \node at (\n.east) [attach]{};
    \node at (\n.west) [attach]{};
    \node at (\n.south) [attach]{};
  }
  
  \foreach \n in {bl1, bl2, bl3, br1, br3}{
    \node at (\n.east) [attach]{};
    \node at (\n.west) [attach]{};
    \node at (\n.north) [attach]{};
  }


\begin{scope}[xshift=14.25 cm, yshift= 1.5cm]
  \node [had] at (0.8,0.5) {$B$};
  \foreach \y in {0,1}{
  \foreach \z in {-.15,1.75}{
  \node [attach] at (\z cm, \y cm) {};}}
\end{scope}

\begin{scope}[xshift=-15.75 cm]
\foreach \x in {0,10,20,30}{
\begin{scope}[xshift= \x cm, yshift= -1cm]
  \node [had] at (0.8,0.5) {$H$};
  \foreach \y in {0,1}{
  \foreach \z in {-.2,1.8}{
  \node [attach] at (\z cm, \y cm) {};}}
\end{scope}}
\end{scope}


\begin{scope}[xshift=-15.75 cm, yshift = 4 cm]
  \node [had] at (0.8,0.5) {$T$};
  \foreach \y in {0,1}{
  \foreach \z in {-.08,1.68}{
  \node [attach] at (\z cm, \y cm) {};}}
\end{scope}
\end{tikzpicture}
\caption{ Schematic depiction of a graph simulating $\Bas_2\CP_{1,2}\T_1$ on two qubits, for the Bose-Hubbard model with interaction strength $U=2+\sqrt{2}$. The dotted lines represent paths and the single-qubit unitary gates represent their corresponding subgraphs ($B$ is the basis-changing gate, $T$ is the phase gate, and $H$ is the Hadamard gate).}
\label{fig:example}
\end{figure}
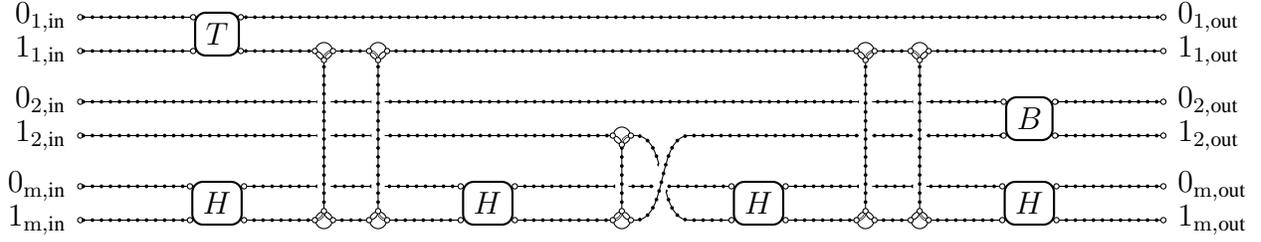

We have shown how multi-particle quantum walk can be used to perform efficient universal quantum computation. Our results provide an impetus for further experimental investigation of multi-particle quantum walk. We also hope that some of the tools we have developed in this work can be used to design and analyze new quantum algorithms based on multi-particle quantum walk.
\bibliographystyle{unsrt}
\bibliography{refs}

\begin{thebibliography}{10}

\bibitem{FG98}
Edward Farhi and Sam Gutmann.
\newblock Quantum computation and decision trees.
\newblock {\em Physical Review A}, 58(2):915--928, 1998.

\bibitem{AAKV01}
Dorit Aharonov, Andris Ambainis, Julia Kempe, and Umesh Vazirani.
\newblock Quantum walks on graphs.
\newblock In {\em {Proceedings of the 33rd ACM Symposium on Theory of
  Computing}}, pages 50--59, 2001.

\bibitem{ABNVW01}
Andris Ambainis, Eric Bach, Ashwin Nayak, Ashvin Vishwanath, and John Watrous.
\newblock One-dimensional quantum walks.
\newblock In {\em {Proceedings of the 33rd ACM Symposium on Theory of
  Computing}}, pages 37--49, 2001.

\bibitem{CCDFGS03}
Andrew~M. Childs, Richard Cleve, Enrico Deotto, Edward Farhi, Sam Gutmann, and
  Daniel~A. Spielman.
\newblock Exponential algorithmic speedup by quantum walk.
\newblock In {\em {Proceedings of the 35th ACM Symposium on Theory of
  Computing}}, pages 59--68, 2003.

\bibitem{Amb07}
Andris Ambainis.
\newblock Quantum walk algorithm for element distinctness.
\newblock {\em SIAM Journal on Computing}, 37(1):210--239, 2007.
\newblock Preliminary version in FOCS 2004.

\bibitem{FGG08}
Edward Farhi, Jeffrey Goldstone, and Sam Gutmann.
\newblock A quantum algorithm for the {H}amiltonian {NAND} tree.
\newblock {\em Theory of Computing}, 4(1):169--190, 2008.

\bibitem{Chi09}
Andrew~M. Childs.
\newblock Universal computation by quantum walk.
\newblock {\em Physical Review Letters}, 102(18):180501, 2009.

\bibitem{AT03}
Dorit Aharonov and Amnon Ta-Shma.
\newblock Adiabatic quantum state generation and statistical zero knowledge.
\newblock In {\em {Proceedings of the 35th ACM Symposium on Theory of
  Computing}}, pages 20--29, 2003.

\bibitem{BACS05}
Dominic~W. Berry, Graeme Ahokas, Richard Cleve, and Barry~C. Sanders.
\newblock Efficient quantum algorithms for simulating sparse {H}amiltonians.
\newblock {\em Communications in Mathematical Physics}, 270(2):359--371, 2007.

\bibitem{Chi10}
Andrew~M. Childs.
\newblock On the relationship between continuous- and discrete-time quantum
  walk.
\newblock {\em Communications in Mathematical Physics}, 294(2):581--603, 2010.

\bibitem{DSBE05}
Binh Do, Michael~L. Stohler, Sunder Balasubramanian, Daniel~S. Elliott,
  Christopher Eash, Ephraim Fischbach, Michael~A. Fischbach, Arthur Mills, and
  Benjamin Zwickl.
\newblock Experimental realization of a quantum quincunx by use of linear
  optical elements.
\newblock {\em Journal of the Optical Society of America B}, 22(2):499--504,
  2005.

\bibitem{KFC09}
Micha{\l} Karski, Leonid F{\"o}rster, Jai-Min Choi, Andreas Steffen, Wolfgang
  Alt, Dieter Meschede, and Artur Widera.
\newblock Quantum walk in position space with single optically trapped atoms.
\newblock {\em Science}, 325(5937):174--177, 2009.

\bibitem{PLP08}
Hagai~B. Perets, Yoav Lahini, Francesca Pozzi, Marc Sorel, Roberto Morandotti,
  and Yaron Silberberg.
\newblock Realization of quantum walks with negligible decoherence in waveguide
  lattices.
\newblock {\em Physical Review Letters}, 100(17):170506, 2008.

\bibitem{IZ02}
Radu Ionicioiu and Paolo Zanardi.
\newblock Quantum-information processing in bosonic lattices.
\newblock {\em Physical Review A}, 66:050301, Nov 2002.

\bibitem{mizel07}
Ari Mizel, Daniel~A. Lidar, and Morgan Mitchell.
\newblock Simple proof of equivalence between adiabatic quantum computation and
  the circuit model.
\newblock {\em Physical Review Letters}, 99:070502, Aug 2007.

\bibitem{gamble2010}
John~King Gamble, Mark Friesen, Dong Zhou, Robert Joynt, and S.~N. Coppersmith.
\newblock Two-particle quantum walks applied to the graph isomorphism problem.
\newblock {\em Physical Review A}, 81:052313, May 2010.

\bibitem{smith2011}
Jamie Smith.
\newblock On the limitations of graph invariants inspired by quantum walks.
\newblock {\em Electronic Notes in Discrete Mathematics}, 38:795--801, 2011.

\bibitem{sheridan2006}
Y.~Omar, N.~Paunkovi\ifmmode~\acute{c}\else \'{c}\fi{}, L.~Sheridan, and
  S.~Bose.
\newblock Quantum walk on a line with two entangled particles.
\newblock {\em Physical Review A}, 74:042304, Oct 2006.

\bibitem{pathak2007}
P.~K. Pathak and G.~S. Agarwal.
\newblock Quantum random walk of two photons in separable and entangled states.
\newblock {\em Physical Review A}, 75:032351, Mar 2007.

\bibitem{bromberg2009}
Yaron Bromberg, Yoav Lahini, Roberto Morandotti, and Yaron Silberberg.
\newblock Quantum and classical correlations in waveguide lattices.
\newblock {\em Physical Review Letters}, 102:253904, Jun 2009.

\bibitem{PLM10}
Alberto Peruzzo, Mirko Lobino, Jonathan C.~F. Matthews, Nobuyuki Matsuda,
  Alberto Politi, Konstantinos Poulios, Xiao-Qi Zhou, Yoav Lahini, Nur Ismail,
  Kerstin W{\"o}rhoff, Yaron Bromberg, Yaron Silberberg, Mark~G. Thompson, and
  Jeremy~L. O'Brien.
\newblock Quantum walks of correlated particles.
\newblock {\em Science}, 329(5998):1500--1503, 2010.

\bibitem{OBB11}
James~O. Owens, Matthew~A. Broome, Devon~N. Biggerstaff, Michael~E. Goggin,
  Alessandro Fedrizzi, Trond Linjordet, Martin Ams, Graham~D. Marshall, Jason
  Twamley, Michael~J. Withford, and Andrew~G. White.
\newblock Two-photon quantum walks in an elliptical direct-write waveguide
  array.
\newblock {\em New Journal of Physics}, 13(7):075003, 2011.

\bibitem{lahini2012}
Yoav Lahini, Mor Verbin, Sebastian~D. Huber, Yaron Bromberg, Rami Pugatch, and
  Yaron Silberberg.
\newblock Quantum walk of two interacting bosons.
\newblock {\em Physical Review A}, 86:011603, Jul 2012.

\bibitem{sansoni2012}
Linda Sansoni, Fabio Sciarrino, Giuseppe Vallone, Paolo Mataloni, Andrea
  Crespi, Roberta Ramponi, and Roberto Osellame.
\newblock Two-particle bosonic-fermionic quantum walk via integrated photonics.
\newblock {\em Physical Review Letters}, 108:010502, Jan 2012.

\bibitem{schreiber2012}
A.~{Schreiber}, A.~{G{\'a}bris}, P.~P. {Rohde}, K.~{Laiho}, M.~{{\v S}tefa{\v
  n}{\'a}k}, V.~{Poto{\v c}ek}, C.~{Hamilton}, I.~{Jex}, and C.~{Silberhorn}.
\newblock {A 2D Quantum Walk Simulation of Two-Particle Dynamics}.
\newblock {\em Science}, 336:55--, April 2012.

\bibitem{ahlbrecht2012}
Andre Ahlbrecht, Andrea Alberti, Dieter Meschede, Volkher~B Scholz, Albert~H
  Werner, and Reinhard~F Werner.
\newblock Molecular binding in interacting quantum walks.
\newblock {\em New Journal of Physics}, 14(7):073050, 2012.

\bibitem{Rohde11}
P.~P. {Rohde}, A.~{Schreiber}, M.~{{\v S}tefa{\v n}{\'a}k}, I.~{Jex}, and
  C.~{Silberhorn}.
\newblock {Multi-walker discrete time quantum walks on arbitrary graphs, their
  properties and their photonic implementation}.
\newblock {\em New Journal of Physics}, 13(1):013001, January 2011.

\bibitem{AA11}
Scott Aaronson and Alex Arkhipov.
\newblock The computational complexity of linear optics.
\newblock In {\em {Proceedings of the 43rd ACM Symposium on Theory of
  Computing}}, pages 333--342, 2011.

\bibitem{TD02}
Barbara~M. Terhal and David~P. DiVincenzo.
\newblock Classical simulation of noninteracting-fermion quantum circuits.
\newblock {\em Physical Review A}, 65:032325, Mar 2002.

\bibitem{CY95}
Isaac~L. Chuang and Yoshihisa Yamamoto.
\newblock Simple quantum computer.
\newblock {\em Physical Review A}, 52(5):3489--3496, 1995.

\bibitem{BCDJ99}
Gavin~K. Brennen, Carlton~M. Caves, Poul~S. Jessen, and Ivan~H. Deutsch.
\newblock Quantum logic gates in optical lattices.
\newblock {\em Physical Review Letters}, 82(5):1060--1063, 1999.

\bibitem{BGPFG09}
Waseem~S. Bakr, Jonathon~I. Gillen, Amy Peng, Simon Folling, and Markus
  Greiner.
\newblock A quantum gas microscope for detecting single atoms in a
  {H}ubbard-regime optical lattice.
\newblock {\em Nature}, 462(7269):74--77, 2009.

\bibitem{HSS11}
Anthony~J. Hoffman, Srikanth~J. Srinivasan, Sebastian Schmidt, Lafe Spietz,
  Jos{\'e} Aumentado, Hakan~E. T{\"u}reci, and Andrew~A. Houck.
\newblock Dispersive photon blockade in a superconducting circuit.
\newblock {\em Physical Review Letters}, 107(5):053602, 2011.

\bibitem{Childs_Gosset}
Andrew~M. Childs and David Gosset.
\newblock Levinson's theorem for graphs ii.
\newblock {\em Journal of Mathematical Physics}, 53(10):102207, 2012.

\bibitem{VP08}
Manuel Valiente and David Petrosyan.
\newblock Two-particle states in the {H}ubbard model.
\newblock {\em Journal of Physics B: Atomic, Molecular and Optical Physics},
  41(16):161002, 2008.

\bibitem{V10}
Manuel Valiente.
\newblock Lattice two-body problem with arbitrary finite-range interactions.
\newblock {\em Physical Review A}, 81:042102, Apr 2010.

\bibitem{2011PhRvA..84f2302B}
Benjamin~A. {Blumer}, Michael~S. {Underwood}, and David~L. {Feder}.
\newblock {Single-qubit unitary gates by graph scattering}.
\newblock {\em Physical Review A}, 84(6):062302, December 2011.

\bibitem{note2}
We can find such an $a$ provided $\theta = 2\pi\gamma$ for some rational number
  $\gamma$ whose denominator in lowest terms is a multiple of $4$. If $\gamma$
  is irrational, then we can choose $a$ so that $e^{i a \theta}$ is arbitrarily
  close to any desired phase, say $-i$. If $\gamma$ is rational but its
  denominator is not a multiple of $4$ (e.g., if $\gamma = 1/3$), then we
  cannot approximate the phase $-i$ arbitrarily closely, but a slight
  perturbation will allow for the correct phase to be approximated (possibly
  with a large but constant overhead).

\end{thebibliography}

\section*{Acknowledgments}
 This work was supported in part by MITACS; NSERC; the Ontario Ministry of Research and Innovation; the Ontario Ministry of Training, Colleges, and Universities; and the US ARO.

\newpage
\appendix

\section{Single-particle scattering states}
\label{sec:sing_part}

In this section we establish some basic facts about the scattering states of the single-particle Hamiltonian $H^{(1)}_G$. Write
\[
\widehat{H}=
\begin{pmatrix}
A & B^{\dagger} \\
B & D\\
\end{pmatrix}
\]
for the adjacency matrix of $\widehat{G}$, where $A$ is $N\times N$, $D$ is $m\times m$, and $B$ is $m \times N$.  The first $N$ rows and columns of this matrix correspond to the vertices attached to the semi-infinite paths. 

The results of this section apply to a broader class of Hamiltonians $H_G^{(1)}$ than those considered so far. In particular, the results are valid when the finite graph $\widehat{G}$ is a weighted, directed graph with edges in opposite directions having complex conjugate weights, in which case the Hamiltonian $\widehat{H}$ is an arbitrary Hermitian matrix. In this situation the full Hamiltonian $H_G^{(1)}$  has a term corresponding to $\widehat{H}$ and a term representing the (unweighted, undirected) adjacency matrix of the semi-infinite paths.

The S-matrix and the amplitudes of $|\mathrm{sc}_{j}(k)\rangle$ within the graph $\widehat{G}$ can be obtained using the following procedure \cite{Childs_Gosset}. Let $\vec{\psi}_j (k)$ be the $m$-component vector containing the amplitudes of $|\mathrm{sc}_{j}(k)\rangle$ at the $m$ vertices of $\widehat{G}$ not connected to the infinite paths, and let
\[
\Psi(k)= 
\begin{pmatrix} 
 \vec{\psi}_1(k) & \vec{\psi}_2(k) & \cdots & \vec{\psi}_N(k)\\
\end{pmatrix}.
\]
 Let $z=e^{ik}$ and define
\[
\gamma(z)=
\begin{pmatrix}
zA-1 & zB^{\dagger}\\
zB & zD-z^2-1 \\
\end{pmatrix}.
\]
Then the S-matrix and the amplitudes $\Psi$ are given by 
\begin{equation}
\begin{pmatrix}
S(k) & 0\\
\frac{1}{z}\Psi(k) & -\frac{1}{z^2} \\
\end{pmatrix}
= -\gamma(z)^{-1}\gamma(z^{-1}).
\label{eq:smatrixgamma}
\end{equation}
An equivalent expression for the S-matrix is 
\begin{align}
S(k) & =-Q(z)^{-1}Q(z^{-1})\label{eq:S_matrix}
\end{align}
where
\begin{align*}
Q(z) = 1-z\left(A+B^{\dagger}\frac{1}{\frac{1}{z}+z-D}B\right).
\end{align*}
Using this expression, one can verify that the graphs described in this paper implement the claimed unitaries. However, it is not clear how to construct a graph that implements a desired unitary. In some cases one can find a suitable graph by performing a numerical search over graphs with a small number of vertices. Graphs with known properties at a given momentum can sometimes be combined (for example, the graph in \fig{halfpiplanhad} was obtained in this manner).

Using the expression 
\[
\gamma(z)^{-1}=\frac{1}{\det\gamma(z)} \adj\gamma(z),
\]
where $\adj\gamma(z)$ is the adjugate matrix of $\gamma(z)$, we see from \eq{smatrixgamma} that each matrix element of $S(k)$ and each matrix element of $\Psi(k)$ is a rational function of $z$. Since $S(k)$ is unitary, its matrix elements (as functions of $z$) have no poles on the unit circle. The following lemma shows that the matrix elements of $\Psi(k)$ also have no poles when $z$ is on the unit circle.

\begin{lemma}\label{lem:defn_scatteringstates}
Given $\widehat{G}$, there exists a constant $\lambda\in\mathbb{R}$ such that $|\langle v|sc_{j}(k)\rangle|<\lambda$ for all $k\in[-\pi,\pi)$, $j\in\{1,\ldots,N\}$, and $v\in\widehat{G}$.
\end{lemma}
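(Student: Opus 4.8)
The plan is to bound the amplitudes on the $N$ vertices attached to the semi-infinite paths separately from those on the remaining $m$ interior vertices of $\widehat{G}$. On an attachment vertex $(1,q)$ the scattering amplitude is, from \eq{single_particle_states} with $x=1$, equal to $e^{-ik}\delta_{qj}+e^{ik}S_{qj}(k)$; since $S(k)$ is unitary we have $|S_{qj}(k)|\le1$, so this amplitude is bounded by $2$ uniformly in $k$, $q$, and $j$. It then remains only to bound the $m$-component vector $\Psi(k)$ of interior amplitudes. As noted just before the statement, each entry of $\Psi(k)$ is a rational function of $z=e^{ik}$, so it suffices to prove that these rational functions have no poles on the unit circle $|z|=1$; continuity on the compact circle then gives a finite bound, and $\lambda$ can be taken strictly larger than $2$ and than that bound.

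The key step is to replace \eq{smatrixgamma} with a more transparent closed form for $\Psi(k)$. Writing $E=2\cos k$ and using the interior rows of the eigenvalue equation $H_G^{(1)}\ket{\mathrm{sc}_j(k)}=E\ket{\mathrm{sc}_j(k)}$, which couple the interior amplitudes only to the attachment amplitudes, I get $(E-D)\vec{\psi}_j(k)=B\vec{\alpha}_j(k)$, where $\vec{\alpha}_j(k)$ is the vector of attachment amplitudes, i.e.\ the $j$th column of $e^{-ik}\mathbb{I}+e^{ik}S(k)$. Substituting \eq{S_matrix} for $S(k)$ and using the identity $z^{-1}Q(z)-zQ(z^{-1})=(z^{-1}-z)\mathbb{I}$ --- which holds because $Q(z)=\mathbb{I}-zX$ with $X=A+B^{\dagger}(z+z^{-1}-D)^{-1}B$ depending on $z$ only through $z+z^{-1}$ --- collapses $e^{-ik}\mathbb{I}+e^{ik}S(k)$ to $(z^{-1}-z)Q(z)^{-1}$. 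This yields
\[
\Psi(k)=-2i\sin k\,(2\cos k-D)^{-1}B\,Q(e^{ik})^{-1},
\]
which exposes the only candidate poles on the circle: the zeros of $\det(2\cos k-D)$ (i.e.\ $2\cos k\in\mathrm{spec}(D)$) and the zeros of $\det Q(z)$.

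The zeros of $\det Q(z)$ are easy to dispatch: for $|z|=1$ the matrix $X$ is Hermitian, so $\det Q(z)=\det(\mathbb{I}-zX)$ vanishes only when $z^{-1}$ is a (necessarily real) eigenvalue of $X$, which forces $z=\pm1$; at these two band-edge points the prefactor $\sin k$ vanishes and cancels the (simple) pole of $Q(e^{ik})^{-1}$. The hard part will be the first family of candidate poles, where $E_0=2\cos k_0$ is an eigenvalue of $D$ and the resolvent $(E-D)^{-1}$ blows up at a generic momentum. Here I would show the singularity is removable by observing that the same resolvent appears inside $X$, hence inside $Q$: letting $P_0$ be the spectral projector of $D$ onto $\ker(E_0-D)$, the divergent part of $(E-D)^{-1}B$ lies in the range of $P_0 B$, while the divergent part of $X$, and hence of $Q$, is proportional to $B^{\dagger}P_0 B$. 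A Sherman--Morrison computation then shows that $Q(e^{ik})^{-1}$ annihilates the dangerous direction $B^{\dagger}P_0 B$ at exactly the order needed to cancel the pole of $(E-D)^{-1}B$, so the product $(E-D)^{-1}B\,Q(e^{ik})^{-1}$ stays bounded as $k\to k_0$. The decoupled case $B^{\dagger}P_0 B=0$ is immediate, since then $P_0 B=0$ and the prefactor resolvent is already regular; the general case follows by applying the same projection argument to all of $P_0$.

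Once every entry of $\Psi(k)$ is shown to extend to a rational function with no poles on $|z|=1$, each is continuous and hence bounded on the unit circle, and combining this with the uniform bound $2$ on the attachment vertices completes the proof. I expect the resolvent/$Q$ cancellation at embedded eigenvalues $2\cos k\in\mathrm{spec}(D)$ to be the crux, precisely because these can occur at generic momenta where the $\sin k$ prefactor offers no help, whereas the band-edge points $z=\pm1$ are handled automatically by that prefactor.
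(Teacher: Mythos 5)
Your route is genuinely different from the paper's. The paper never isolates the $D$-resolvent: it uses the identity $\gamma(1/z)=z^{-2}\gamma(z)+(z^{-2}-1)\widehat{P}$ to reduce the claim to boundedness of $(1-z^{-2})\gamma(z)^{-1}\widehat{P}$, projects out the confined eigenstates of $\widehat{H}$ (those annihilated by $\widehat{P}$), and then imports from reference \cite{Childs_Gosset} the non-vanishing of $\det\bigl(\frac{1}{1-z^2}M(z)\bigr)$ on $|z|=1$. That citation does exactly the work your Sherman--Morrison step is meant to do, so your plan is more self-contained in spirit. Your reduction is also correct as far as it goes: the interior rows of the eigenvalue equation do give $(E-D)\vec{\psi}_j=B\vec{\alpha}_j$; the identity $z^{-1}Q(z)-zQ(z^{-1})=(z^{-1}-z)\mathbb{I}$ holds because $X$ depends on $z$ only through $z+z^{-1}$, so $\Psi(k)=-2i\sin k\,(2\cos k-D)^{-1}B\,Q(e^{ik})^{-1}$ as an identity of rational functions; and Hermiticity of $X$ on the circle correctly kills zeros of $\det Q$ away from $z=\pm1$ and away from $2\cos k\in\operatorname{spec}(D)$.

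The crux, however, is asserted rather than proved, and as written it does not close. First, at $z=\pm1$ you claim the pole of $Q(z)^{-1}$ is simple; nothing justifies this ($\det(\mathbb{I}-zX(z))$ can vanish to higher order). This is repairable without the simplicity claim: your own identity gives $-2i\sin k\,Q(z)^{-1}=e^{-ik}\mathbb{I}+e^{ik}S(k)$, whose entries are bounded by $2$ on the whole circle since $S$ is unitary there with no poles, so $\sin k\,Q(z)^{-1}$ is automatically pole-free at $z=\pm1$. Second, and more seriously, the Sherman--Morrison step at $E_0=2\cos k_0\in\operatorname{spec}(D)$, $z_0\ne\pm1$, hinges on a condition you never identify. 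Writing $\epsilon=E-E_0$, $N=-z_0B^{\dagger}P_0B$, and $Q=N/\epsilon+\tilde{Q}_0+\O(\epsilon)$, the product $(E-D)^{-1}B\,Q^{-1}=\bigl(P_0B/\epsilon+R_0B+\O(\epsilon)\bigr)\,\epsilon\,(N+\epsilon\tilde{Q}_0+\O(\epsilon^2))^{-1}$ stays bounded only if $(N+\epsilon\tilde{Q}_0)^{-1}=\O(1/\epsilon)$ with its divergent part confined to the $\ker N$-block; by the Schur complement this requires the block $\Pi\tilde{Q}_0\Pi$ restricted to $\ker N$ to be invertible. Without that, the formal computation can fail (the inverse can diverge faster and no cancellation occurs). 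The condition does hold for $z_0\ne\pm1$ --- $\Pi\tilde{Q}_0\Pi=\Pi-z_0\Pi X_0\Pi$ with $X_0=A+B^{\dagger}R_0B$ Hermitian, so singularity would force $z_0^{-1}$ real --- i.e., the same Hermiticity trick rescues you, but this observation is the actual content of the step and it is absent from the proposal. Third, the coincidence $E_0=\pm2\in\operatorname{spec}(D)$ at $z_0=\pm1$ falls through both of your cases: there $E-E_0\sim(z-z_0)^2$, so the resolvent has a double pole in $z$ while $\sin k$ supplies only one order of vanishing, and the Hermiticity argument for the Schur block is unavailable. That corner needs a separate argument (or an appeal to something like the determinant result the paper cites).
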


\begin{proof}
Note that 
\[
\gamma\left(\frac{1}{z}\right)=\frac{1}{z^{2}}\gamma(z)+\left(\frac{1}{z^{2}}-1\right)\widehat{P}\]
where 
\[
\widehat{P}=\begin{pmatrix}
1 & 0\\
0 & 0\end{pmatrix}
\]
projects onto the $N$ vertices of $\widehat{G}$ attached to semi-infinite paths. Hence 
\[
-\gamma(z)^{-1}\gamma\left(\frac{1}{z}\right)=-\frac{1}{z^{2}}+\left(1-\frac{1}{z^{2}}\right)\gamma(z)^{-1}\widehat{P}.
\]

Let $\{|\psi_{c}\rangle \colon c\in\{1,\ldots,n_{c}\}\}$ be eigenstates of $\widehat{H}$ satisfying $\widehat{P}|\psi_c\rangle=0$, and let this set be an orthonormal basis for the span of all such states. Then
 \[
\left(1-\frac{1}{z^{2}}\right)\gamma(z)^{-1}\widehat{P}=\left(1-\frac{1}{z^{2}}\right)\left(1-\sum_{j=1}^{n_{c}}|\psi_{c}\rangle\langle\psi_{c}|\right)\gamma(z)^{-1}\left(1-\sum_{j=1}^{n_{c}}|\psi_{c}\rangle\langle\psi_{c}|\right)\widehat{P}\]
 since each $|\psi_c\rangle$ is an eigenvector of $\gamma(z)$ and
$\widehat{P}|\psi_{c}\rangle=0$. Reference \cite{Childs_Gosset} shows (in Part 2 of the proof of Theorem 1) that
\[
\det\left(\frac{1}{1-z^{2}}\right)M(z)\neq0\text{ for }|z|=1,
\]
where $M(z)$ is the $(N+m-n_{c})\times(N+m-n_{c})$ matrix of $\gamma(z)$ in the subspace of states orthogonal to the span of $\{|\psi_{c}\rangle \colon c\in\{1,\ldots,n_{c}\}\}$. Therefore 
\begin{align*}
\frac{1}{z}\langle v|\mathrm{sc}_{j}(k)\rangle & =  -\langle v|\gamma(z)^{-1}\gamma\left(\frac{1}{z}\right)|j\rangle\\
 & = \langle v|\left(1-\frac{1}{z^{2}}\right)\left(1-\sum_{j=1}^{n_{c}}|\psi_{c}\rangle\langle\psi_{c}|\right)\gamma(z)^{-1}\left(1-\sum_{j=1}^{n_{c}}|\psi_{c}\rangle\langle\psi_{c}|\right)|j\rangle
\end{align*}
has no poles on the unit circle, and the result follows.\end{proof}

We now establish the delta-function normalization of the scattering states. Let 
\begin{align*}
\Pi_{1} & = \sum_{x=1}^{\infty}\sum_{q=1}^{N}|x,q\rangle\langle x,q|\\
\Pi_{2} & = \mathbb{I}-\sum_{x=2}^{\infty}\sum_{q=1}^{N}|x,q\rangle\langle x,q|\\
\Pi_{3} & = \sum_{q=1}^{N}|1,q\rangle\langle1,q|.\end{align*}
We show that, for $k\in(-\pi,0)$, $p\in(-\pi,0)$, and $i,j\in\{1,\ldots,N\}$,
\begin{equation}
\langle\mathrm{sc}_{i}(p)|\mathrm{sc}_{j}(k)\rangle=\langle\mathrm{sc}_{i}(p)|\Pi_{1}+\Pi_{2}-\Pi_{3}|\mathrm{sc}_{j}(k)\rangle=2\pi\delta_{ij}\delta(k-p).\label{eq:delta}\end{equation}
First write 
\begin{align*}
\langle\mathrm{sc}_{i}(p)|\Pi_{1}|\mathrm{sc}_{j}(k)\rangle & = \sum_{x=1}^{\infty}\sum_{q=1}^{N}(\delta_{iq}e^{ipx}+S_{qi}^{\ast}(p)e^{-ipx})(\delta_{jq}e^{-ikx}+S_{qj}(k)e^{ikx})\\
 & = \frac{1}{2}\left(\delta_{ij}+\sum_{q=1}^{N}S_{qi}^{\ast}(p)S_{qj}(k)\right)\left(\sum_{x=1}^{\infty}e^{i(p-k)x}+\sum_{x=1}^{\infty}e^{-i(p-k)x}\right)\\
 & \quad +\frac{1}{2}\left(\delta_{ij}-\sum_{q=1}^{N}S_{qi}^{\ast}(p)S_{qj}(k)\right)\left(\sum_{x=1}^{\infty}e^{i(p-k)x}-\sum_{x=1}^{\infty}e^{-i(p-k)x}\right)\\
 & \quad +\frac{1}{2}(S_{ji}^{\ast}(p)+S_{ij}(k))\left(\sum_{x=1}^{\infty}e^{-i(p+k)x}+\sum_{x=1}^{\infty}e^{i(p+k)x}\right)\\
 & \quad +\frac{1}{2}(S_{ji}^{\ast}(p)-S_{ij}(k))\left(\sum_{x=1}^{\infty}e^{-i(p+k)x}-\sum_{x=1}^{\infty}e^{i(p+k)x}\right).\end{align*}
We use the following identities for $p,k\in(-\pi,0)$: 
\begin{align*}
\sum_{x=1}^{\infty}e^{i(p-k)x}+\sum_{x=1}^{\infty}e^{-i(p-k)x} & = 2\pi\delta(p-k)-1 \\
\sum_{x=1}^{\infty}e^{i(p+k)x}+\sum_{x=1}^{\infty}e^{-i(p+k)x} &=-1 \\
\sum_{x=1}^{\infty}e^{i(p-k)x}-\sum_{x=1}^{\infty}e^{-i(p-k)x} & = i\cot\left(\frac{p-k}{2}\right) \\
\sum_{x=1}^{\infty}e^{i(p+k)x}-\sum_{x=1}^{\infty}e^{-i(p+k)x} &= i\cot\left(\frac{p+k}{2}\right).
\end{align*}
These identities hold when both sides are integrated against a smooth function of $p$ and $k$. Substituting, we get
\begin{align}
\langle\mathrm{sc}_{i}(p)|\Pi_{1}|\mathrm{sc}_{j}(k)\rangle & =  2\pi\delta_{ij}\delta(p-k)+\delta_{ij}\left(\frac{i}{2}\cot\left(\frac{p-k}{2}\right)-\frac{1}{2}\right)\nonumber\\
&\quad+\sum_{q=1}^{N}S_{qi}^{\ast}(p)S_{qj}(k)\left(-\frac{i}{2}\cot\left(\frac{p-k}{2}\right)-\frac{1}{2}\right)\nonumber \\
 &\quad+S_{ji}^{\ast}(p)\left(-\frac{1}{2}-\frac{i}{2}\cot\left(\frac{p+k}{2}\right)\right)\nonumber\\
&\quad+S_{ij}(k)\left(-\frac{1}{2}+\frac{i}{2}\cot\left(\frac{p+k}{2}\right)\right)\label{eq:pi1}
\end{align}
where we used unitarity of the $S$-matrix to simplify the first term.
Now turning to $\Pi_{2}$ we have \[
\langle\mathrm{sc}_{i}(p)|H\Pi_{2}|\mathrm{sc}_{j}(k)\rangle=2\cos(p)\langle\mathrm{sc}_{i}(p)|\Pi_{2}|\mathrm{sc}_{j}(k)\rangle\]
 and 
\begin{align*}
\langle\mathrm{sc}_{i}(p)|H\Pi_{2}|\mathrm{sc}_{j}(k)\rangle & = \langle\mathrm{sc}_{i}(p)|\bigg(2\cos(k)\Pi_{2}|\mathrm{sc}_{j}(k)\rangle+\sum_{q=1}^{N}(e^{-ik}\delta_{qj}+S_{qj}(k)e^{ik})|2,q\rangle\\
&\quad -\sum_{q=1}^{N}(e^{-2ik}\delta_{qj}+S_{qj}(k)e^{2ik})|1,q\rangle\bigg).
\end{align*}
 Using these two equations we get 
\begin{align*}
(2\cos(p)-2\cos(k))\langle\mathrm{sc}_{i}(p)|\Pi_{2}|\mathrm{sc}_{j}(k)\rangle & =  \delta_{ij} (e^{2ip-ik}-e^{-2ik+ip})+S_{ji}^{\ast}(p)(e^{-2ip-ik}-e^{-2ik-ip})\\
 & \quad +S_{ij}(k)(e^{2ip+ik}-e^{2ik+ip})\\
& \quad +\sum_{q=1}^{N}S_{qi}^{\ast}(p)S_{qj}(k)(e^{-2ip+ik}-e^{2ik-ip}).
\end{align*}
 Noting that
\[
\langle\mathrm{sc}_{i}(p)|\Pi_{3}|\mathrm{sc}_{j}(k)\rangle=\sum_{q=1}^{N}(\delta_{iq}e^{ip}+S_{qi}^{\ast}(p)e^{-ip})(\delta_{jq}e^{-ik}+S_{qj}(k)e^{ik}),
\]
we have 
\begin{align}
\langle\mathrm{sc}_{i}(p)|\Pi_{2}-\Pi_{3}|\mathrm{sc}_{j}(k)\rangle & = \delta_{ij}\left(\frac{e^{2ip-ik}-e^{-2ik+ip}}{2\cos(p)-2\cos(k)}-e^{ip-ik}\right)\nonumber\\
&\quad+S_{ji}^{\ast}(p)\left(\frac{e^{-2ip-ik}-e^{-2ik-ip}}{2\cos(p)-2\cos(k)}-e^{-ip-ik}\right)\nonumber \\
 & \quad +S_{ij}(k)\left(\frac{e^{2ip+ik}-e^{2ik+ip}}{2\cos(p)-2\cos(k)}-e^{ip+ik}\right)\nonumber\\
&\quad+\sum_{q=1}^{N}S_{qi}^{\ast}(p)S_{qj}(k)\left(\frac{e^{-2ip+ik}-e^{2ik-ip}}{2\cos(p)-2\cos(k)}-e^{-ip+ik}\right)\nonumber \\
  & = \delta_{ij}\left(\frac{1}{2}-\frac{i}{2}\cot\left(\frac{p-k}{2}\right)\right)+S_{ji}^{\ast}(p)\left(\frac{1}{2}+\frac{i}{2}\cot\left(\frac{p+k}{2}\right)\right)\nonumber\\
&\quad+S_{ij}(k)\left(\frac{1}{2}-\frac{i}{2}\cot\left(\frac{p+k}{2}\right)\right)\nonumber\\
&\quad+\sum_{q=1}^{N}S_{qi}^{\ast}(p)S_{qj}(k)\left(\frac{1}{2}+\frac{i}{2}\cot\left(\frac{p-k}{2}\right)\right).
\label{eq:pi2_pi3}
\end{align}
 Adding equation \eq{pi1} to equation \eq{pi2_pi3} gives
equation \eq{delta}.

\section{Two-particle scattering states}\label{sec:twopart_scat}

Here we derive scattering states of the two-particle quantum walk on an infinite path. We write the Hamiltonian in the basis $\ket{x,y}$, where $x$ denotes the location of the first particle and $y$ denotes the location of the second particle, with the understanding that bosonic states are symmetrized and fermionic states are antisymmetrized. The Hamiltonian \eq{dist_ham} can be written as
\begin{equation}
  H^{(2)} = H^{(1)}_x \otimes \II_y + \II_x \otimes H^{(1)}_y + \sum_{x,y\in\integer} 
     \mathcal{V}(|x-y|) \, \ket{x,y}\bra{x,y}\label{eq:twoham}
\end{equation}
where $\mathcal{V}$ corresponds to the interaction term $\mathcal{U}$ and (with a slight abuse of notation) the subscript indicates which variable is acted on. Here
\[
  H^{(1)} = \sum_{x\in\integer} \ket{x+1}\bra{x} + \ket{x}\bra{x+1}
\]
is the adjacency matrix of an infinite path. Our assumption that $\mathcal{U}$ has finite range $C$ means that $\mathcal{V}(r)= 0$ for $r>C$.  

The scattering states we are interested in provide information about the dynamics of two particles initially prepared in spatially separated wave packets moving toward each other along the path with momenta $k_1\in(-\pi,0) $ and $k_2\in (0,\pi)$.

We derive scattering eigenstates of this Hamiltonian by transforming to the new variables $s = x+y$ and $r = x-y$ and exploiting translation symmetry.  Here the allowed values $(s,r)$ range over the pairs of integers where either both are even or both are odd.  Writing states in this basis as $|s;r\rangle$, the Hamiltonian takes the form
\begin{equation}
  H^{(1)}_s\otimes H^{(1)}_r+ \II_s\otimes \sum_{r\in \integer} \mathcal{V}(|r|) \, \ket{r}\bra{r}.
\label{eq:twopart_ham}
\end{equation}
For each $p_1\in (-\pi,\pi)$ and $p_2\in(0,\pi)$ there is a scattering eigenstate $|\mathrm{sc}(p_1;p_2)\rangle$ of the form
\[
\langle s;r |\mathrm{sc}(p_1;p_2)\rangle=e^{-ip_1 s/2} \langle r|\psi(p_1;p_2)\rangle,
\]
where the state $|\psi(p_1;p_2)\rangle$ can be viewed as an effective single-particle scattering state of the Hamiltonian
\begin{equation}
 2\cos\left(\frac{p_1}{2}\right) H^{(1)}_r + \sum_{r\in \integer} \mathcal{V}(|r|) \, \ket{r}\bra{r}\label{eq:vr_eqn}
\end{equation}
with eigenvalue $4 \cos( p_1/2) \cos(p_2)$.  For a given $\mathcal{V}$, the state $|\psi(p_1;p_2)\rangle$ can be obtained explicitly by solving a set of linear equations (see for example \cite{Childs_Gosset}). It has the form
\begin{equation}
\langle r|\psi (p_1;p_2)\rangle= \begin{cases}  e^{-i p_2 r} + R(p_1,p_2) e^{i p_2 r} &  \text{if } r \leq -C\\
  	f(p_1,p_2,r) &  \text{if } |r| < C\\
  	T(p_1,p_2) e^{- i p_2 r}  & \text{if } r \geq C\end{cases}
\label{eq:psip1p2}
\end{equation}
for $p_2\in (0,\pi)$. Here the reflection and transmission coefficients $R$ and $T$ and the amplitudes of the scattering state for $|r|<C$ (described by the function $f$) depend on both momenta as well as the interaction $\mathcal{V}$.  With $R$, $T$, and $f$ chosen appropriately, the state $|\mathrm{sc}(p_1;p_2)\rangle$ is an eigenstate of $H^{(2)}$ with eigenvalue $4\cos(p_1/2)\cos(p_2)$.

Since $\mathcal{V}(|r|)$ is an even function of $r$, we can also define scattering states for $p_2\in (-\pi,0)$ by
\[
\langle s;r|\mathrm{sc}(p_1;p_2)\rangle=\langle s;-r|\mathrm{sc}(p_1;-p_2)\rangle.
\]
These other states are obtained by swapping $x$ and $y$, corresponding to interchanging the two particles.

The states $\{|\mathrm{sc}(p_1;p_2)\rangle \colon p_1\in (-\pi,\pi),\,p_2\in(-\pi,0)\cup(0,\pi)\}$ are (delta-function) orthonormal:
\begin{align*}
\langle  \mathrm{sc}(p_1';p_2')|\mathrm{sc}(p_1;p_2)\rangle &= \langle \mathrm{sc}(p_1'; p_2')|\left(\sum_{\text{$r,s$ even}}|r\rangle\langle r| \otimes |s\rangle \langle s| \right)|\mathrm{sc}(p_1;p_2)\rangle\\
&\quad + \langle \mathrm{sc}(p_1'; p_2')|\left(\sum_{\text{$r,s$ odd}}|r\rangle \langle r|\otimes  |s\rangle \langle s|\right)|\mathrm{sc}(p_1;p_2)\rangle\\
&= \sum_{\text{$s$ even}} e^{-i(p_1-p_1') {s}/{2}}\sum_{\text{$r$ even}}\langle \psi(p_1';p_2')|r\rangle\langle r|\psi(p_1;p_2)\rangle \\
& \quad + \sum_{\text{$s$ odd}} e^{-i(p_1-p_1') {s}/{2}}\sum_{\text{$r$ odd}}\langle  \psi(p_1';p_2')|r\rangle\langle r|\psi(p_1;p_2)\rangle\\
&= 2\pi \delta(p_1-p_1') \sum_{r=-\infty}^{\infty}\langle \psi(p_1;p_2')|r\rangle\langle r|\psi(p_1;p_2)\rangle\\
&= 4\pi^2 \delta(p_1-p_1')\delta(p_2-p_2')
\end{align*}
where in the last step we used the fact that $\langle\psi(p_1;p_2')|\psi(p_1;p_2)\rangle=2\pi\delta(p_2-p_2')$.
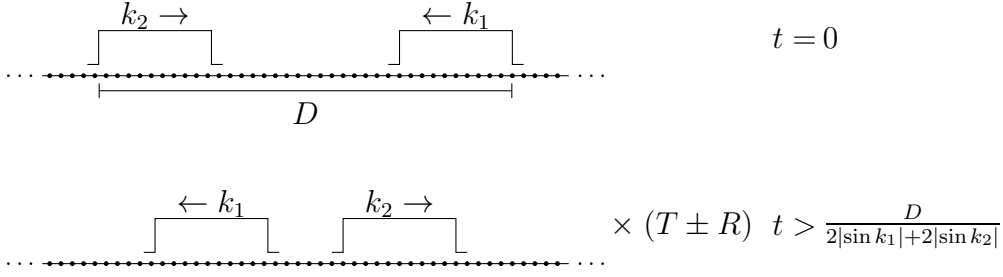
\begin{figure}
\centering
\capstart
\begin{tikzpicture}[label distance= -6pt,
    verts/.style={circle,draw=black,fill=black,inner sep=.5pt,minimum size=0pt},
    dots/.style={circle,fill=black,inner sep=.25pt,minimum width=0pt}]
  \draw (0,0) -- (7,0);

  \draw (3.15,.15) -- (3,.15) -- (3,.6) -- (1.5,.6) 
     -- (1.5,.15) -- (1.35,.15);
  
  \draw (3.85,.15) -- (4,.15) -- (4,.6) -- (5.5,.6) 
     -- (5.5,.15) -- (5.65,.15);
     
  \node at (2.25,.8) {$\leftarrow k_1$};
  \node at (4.75,.8) {$k_2\rightarrow$};

  \node at (8.5,.5) {$\times \; (T\pm R)$};

  \node at (10,.5)[label=right:$\frac{D}{2|{\sin k_1}|+ 2|{\sin k_2}|}$]{$t>$};

  \foreach \x in {.1,.25,...,6.9}
  \node at (\x ,0) [verts] {};
  
  \begin{scope}[yshift=2.5cm]
    \draw (0,0) -- (7,0);

    \draw[xshift=-.75cm] (3.15,.15) -- (3,.15) -- (3,.6) -- (1.5,.6) 
       -- (1.5,.15) -- (1.35,.15);
  
    \draw[xshift=.75cm] (3.85,.15) -- (4,.15) -- (4,.6) -- (5.5,.6) 
       -- (5.5,.15) -- (5.65,.15);
 
    \draw [|-|] (.75,-.2) to node[below] {$D$} (6.25,-.2);
    
    \node at (1.5,.8) {$k_2\rightarrow$};
    \node at (5.5,.8) {$\leftarrow k_1$};

    \node at (10,.5)[label=right:$0$] {$t=$};

    \foreach \x in {.1,.25,...,6.9}
    \node at (\x ,0) [verts] {};

  \end{scope}

  \foreach \xsh in {-0.45cm, 7.15cm}{
  \foreach \ysh in {0cm, 2.5cm}{
    \begin{scope}[xshift=\xsh,yshift=\ysh]
      \node at (0,0) [dots]{};
      \node at (0.15,0) [dots] {};
      \node at (0.3,0) [dots]{};
    \end{scope}
  }}

\end{tikzpicture}
\caption{Scattering of two particles on an infinite path.}
\label{fig:wte}
\end{figure}
	
To construct bosonic or fermionic scattering states, we symmetrize or antisymmetrize as follows. For $p_1\in (-\pi,\pi)$ and $p_2\in (0,\pi)$, we define
\[
  \ket{\mathrm{sc}(p_1;p_2)}_\pm = \frac{1}{\sqrt2}(\ket{\mathrm{sc}(p_1;p_2)} \pm \ket{\mathrm{sc}(p_1;-p_2)}).
\]
Then
\begin{align}
    \braket{s;r}{\mathrm{sc}(p_1;p_2)}_\pm
      &= \f{1}{\sqrt{2}}e^{-i p_1 s/2} \begin{cases}  e^{-i p_2 r} \pm e^{i\theta_{\pm}(p_1,p_2)} e^{i p_2 r} &  \text{if } r \leq -C\\
  	f(p_1,p_2,r) \pm f(p_1,p_2,-r) & \text{if }  |r| < C\\
  	e^{i\theta_{\pm}(p_1,p_2)}e^{- i p_2 r} \pm e^{i p_2 r}  & \text{if } r \geq C\end{cases}
\label{eq:symscatter}
\end{align}
where $\theta_{\pm}(p_1,p_2)$ is a real function defined through
\begin{equation}
e^{i\theta_{\pm}(p_1,p_2)}= T(p_1,p_2)\pm R(p_1,p_2). \label{eq:delta_pm}
\end{equation}
Note that $|T\pm R| = 1$; this follows from the potential $\mathcal{V}(|r|)$ being even in $r$ and from unitarity of the S-matrix.  These eigenstates allow us to understand what happens when two particles with momenta $k_1\in(-\pi,0)$ and $k_2\in(0,\pi)$ move toward each other. Here $p_1=-k_1-k_2$ and $p_2=(k_2-k_1)/2$.  Recall (from the main text of the paper) that we defined $e^{i\theta}$ to be the phase acquired by the two-particle wavefunction when $k_1=-{\pi}/{2}$ and $k_2={\pi}/{4}$ ($\theta$ depends implicitly on the interaction $\mathcal{V}$ and the particle type), so $\theta=\theta_\pm ({\pi}/{4},{3\pi}/{8})$ .

For $|r|\geq C$ the scattering state is a sum of two terms, one corresponding to the two particles moving toward each other and one corresponding to the two particles moving apart after scattering. The outgoing term has a  phase of $T\pm R$ relative to the incoming term (as depicted in \fig{wte}). This phase arises from the interaction between the two particles.

For example, consider the Bose-Hubbard model, where $\mathcal{V}(|r|) = U\delta_{r,0}$. Here $C=0$ and $T=1+R$.  In this case the scattering state $|\mathrm{sc}(p_1;p_2)\rangle_+$ is
\[
\langle x,y|\mathrm{sc}(p_1;p_2)\rangle_+=\frac{1}{\sqrt{2}}e^{-ip_1 \left(\frac{x+y}{2}\right)}\left(e^{ip_2 |x-y|}+e^{i\theta_+(p_1,p_2)}e^{-ip_2 |x-y|}\right).
\]
The first term describes the two particles moving toward each other and the second term describes them moving away from each other. To solve for the applied phase $e^{i\theta_+(p_1,p_2)}$ we look at the eigenvalue equation for $|\psi(p_1;p_2)\rangle$ at $r=0$. This gives
\[
  R(p_1,p_2) =- \f{U}{U - 4i\cos({p_1}/{2})\sin(p_2)}.
\]
So for the Bose-Hubbard model,
\[
  e^{i \theta_{+} (p_1,p_2)} = T(p_1,p_2) + R(p_1,p_2) = - \frac{ U + 4 i \cos({p_1}/{2}) \sin(p_2)}{U - 4 i \cos({p_1}/{2}) \sin(p_2)} =  \frac{2 \left(\sin(k_2) - \sin(k_1)\right) - i U}{2 \left(\sin(k_2) - \sin(k_1)\right) + i U}.
\]
For example, if $U = 2+\sqrt{2}$ then two particles with momenta $k_1 =-{ \pi}/{2}$ and $k_2={\pi}/{4}$ acquire a phase of $e^{-i\pi/2}= -i$ after scattering.

For a multi-particle quantum walk with nearest-neighbor interactions, $\mathcal{V}(|r|)=U\delta_{|r|,1}$ and $C=1$.  In this case the eigenvalue equations for $|\psi(p_1;p_2)\rangle$ at $r=-1$, $r=1$, and $r=0$ are
\begin{align*}
 4 \cos\left(\frac{p_1}{2}\right)  \cos(p_2) ( e^{i p_2} + R(p_1,p_2) e^{-i p_2} ) &= U ( e^{i p_2} + R(p_1,p_2) e^{-i p_2}) \\
& \quad + 2\cos\left(\frac{p_1}{2}\right) \left( e^{2i p_2} + R(p_1,p_2) e^{-2i p_2}+f(p_1,p_2,0)\right) \\
 4 \cos\left(\frac{p_1}{2}\right)  \cos(p_2) T(p_1,p_2) e^{-ip_2} & =UT(p_1,p_2)e^{-ip_2}\\
& \quad +2\cos \left(\frac{p_1}{2}\right)\left(f(p_1,p_2,0)+T(p_1,p_2)e^{-2ip_2}\right)\\
2 \cos(p_2) f(p_1,p_2,0) &=T(p_1,p_2)e^{-ip_2}+e^{ip_2}+R(p_1,p_2)e^{-ip_2},
\end{align*}
respectively.

Solving these equations for $R$, $T$, and $f(p_1,p_2,0)$, we can construct the corresponding scattering states for bosons, fermions, or distinguishable particles (for more on the last case, see \sec{distinguishable}). Unlike the case of the Bose-Hubbard model, we may not have $1+R=T$. For example, when $U=-2-\sqrt{2}$, $p_1={\pi}/{4}$, and $p_2={3\pi}/{8}$, we get $R=0$ and $T=i$ (see \sec{distinguishable}).

\section{Refinements of the universality construction}

In this section we present two refinements of our scheme. We first show how our scheme can be modified to use planar graphs of maximum degree four.  We then give a universality construction using \textit{distinguishable} particles with nearest-neighbor interactions.

\subsection{Making the graph planar}\label{sec:planar}

The example in \fig{example} shows that the graphs in the scheme described by the main text of our paper may not be planar: the mediator qubit can interact with any of the computational qubits, so the vertical paths for the $\CD$ gate cross other paths in the graph.  Furthermore, both the graph used to implement the Hadamard gate on the mediator qubit (\fig{halfpihad}) and the graph used to implement the $\CD$ gate (\fig{onepsplit}(b)) can lead to a nonplanar overall graph when input and output paths are attached in the prescribed manner. In this section we describe how to modify the scheme so that the resulting graph is planar and has maximum degree 4.

The first simple modification is to replace the graph from \fig{halfpihad} with a planar graph (with input and output vertices on the same face in the correct relative positions) that also implements a Hadamard gate on the mediator qubit.  The graph in \fig{halfpiplanhad} does the trick: its S-matrix at momentum $-{\pi}/{2}$ has the form \eq{S_matrix_circuit} with lower left submatrix
\[
U_{\Had'}= \frac{1}{\sqrt{2}}e^{-3\pi i/4}\begin{pmatrix} 1 & 1\\
     1 & -1 \end{pmatrix}.
\]
The smaller maximum degree (4 instead of 5) and planarity of this graph come at the expense of increasing the number of vertices (as compared to the graph in \fig{halfpihad}).

\begin{figure}
\centering
\capstart
\begin{tikzpicture}
  [ scale = .4,
  	thin,
    inner/.style={circle,draw=black!100,fill=black!100,inner sep = 1.25pt},
    attach/.style={circle,draw=black!100,fill=black!0,thin,inner sep=1.25pt},
    cross/.style={draw=white,double=black,thin}]

 \node at (0,0)[inner] {};
 \node at (1,0)[inner] {};
 \node at (2,0)[inner] {};
 \node at (0,1)[inner] {};
 \node at (2,1)[inner] {};
 \node at (0,2)[inner] {};
 \node at (1,2)[inner] {};
 \node at (2,2)[inner] {};
 \node at (0,3)[inner] {};
 \node at (1,3)[inner] {};
 \node at (2,3)[inner] {};
 \node at (3,1)[inner]{};
 \node at (4,1)[inner]{};
 \node at (3,2)[inner]{};
 \node at (4,2)[inner]{};
 \node at (3,0)[inner]{};
 \node (a) at (3.5,2.714)[inner]{};
 \node (b) at (4.714,1.5)[inner]{};
 \node at (0,-1) [inner]{};
 \node at (1,-1) [inner]{};
 \node at (2,-1) [inner]{};
 \node at (1,-2) [inner]{};
 \node at (2,-2) [inner]{};
 \node (c) at (2.714,-1.5) [inner]{};
 \node (d) at (1.5,-2.714) [inner]{};
 \node at (0,4) [inner]{};

 \draw (-1,0) -- (4,0);
 \draw (-1,4) -- (0,4);
 \draw (0,4) -- (0,-1);
 \draw (0,2) -- (2,2);
 \draw (0,3) -- (2,3);
 \draw (0,1) -- (4,1);
 \draw (1,3) -- (1,2);
 \draw (3,0) -- (2,1) -- (2,3);
 \draw (0,1) -- (1,0);
 \draw (2,3) -- (4,4);
 
 \draw (0,-1) -- (2,-1) -- (c) -- (2,-2) -- (2,-1);
 \draw (1,-1) -- (1,-2) -- (2,-2) -- (d) -- (1,-2);
 
 \draw (4,1) -- (b) -- (4,2) -- (4,1);
 \draw (3,1) -- (3,2) -- (a) -- (4,2) -- (3,2);

 \node at (-1,0) [attach,label=left:$1_{\text{in}}$]{};
 \node at (-1,4) [attach,label=left:$0_{\text{in}}$]{};
 \node at (4,4) [attach,label=right:$0_{\text{out}}$]{};
 \node at (4,0) [attach,label=right:$1_{\text{out}}$]{};
 
\end{tikzpicture}
\caption{A planar graph that implements a Hadamard gate at momentum $-{\pi}/{2}$.}
\label{fig:halfpiplanhad}
\end{figure}
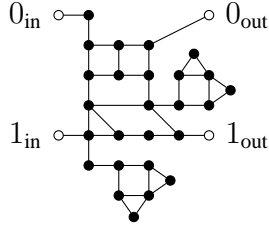

As a second modification, we introduce additional mediator qubits.  Throughout the graph, we arrange the input and output paths for the computational qubits vertically from $1,\ldots, n$ with the path corresponding to logical $0$ always above the path corresponding to logical $1$. For each $i\in\{1,\ldots,n-1\}$ we place a mediator qubit labeled $\med(i)$ between computational qubits $i$ and $i+1$.  We only perform two-qubit gates between adjacent qubits throughout the computation (i.e., mediator qubit $\med(i)$ only interacts with logical qubits $i$ and $i+1$).

To implement two-qubit gates in a planar manner, we use the graph shown in \fig{PlanCPGate}. This graph is obtained by concatenating two $\CD$ graphs and uncrossing paths to make the drawing planar. We only use this gate between adjacent encoded qubits, one of which is a mediator qubit and one of which is a computational qubit.  Note that this graph involves two adjacent paths (path 1 of the top encoded qubit and path 0 of the bottom encoded qubit) as opposed to the two 1 paths in the $\CD$ gate in \fig{onepsplit}. The resulting logical gate is $(\CD)^2$ conjugated by an X gate on the bottom qubit.  More explicitly, if we interact computational qubit $i$ and mediator $\med(i)$, we implement $\XX_{\med(i)} (\CD)^2_{i,\med(i)}\XX_{\med(i)}$, whereas if we interact mediator $\med(i)$ and computational qubit $i+1$, we implement $\XX_{i+1} (\CD)^2_{\med(i),i+1} \XX_{i+1}$.  Applying this gate $a$ times, where $e^{i\theta a} \approx \pm i$, we can approximate the gates $\XX_{\med(i)} \CZ_{i,\med(i)}\XX_{\med(i)}$ and $\XX_{i+1} \CZ_{\med(i),i+1} \XX_{i+1}$.

Using these gates, we can perform a controlled-Z gate between computational qubits $i$ and $i+1$ by the following sequence:
\begin{align*}
  \CZ_{i,i+1} \ket{a_i, b_{i+1},0_{\med(i)}}
     &= \CNOT_{i+1,\med(i)}\CZ_{i,\med(i)}\CNOT_{i+1,\med(i)}\ket{a_i, b_{i+1},0_{\med(i)}}\\
     &=\XX_{\med(i)}\XX_{i+1}\CNOT_{i+1,\med(i)}\CZ_{i,\med(i)}\CNOT_{i+1,\med(i)} 
        \XX_{\med(i)}\XX_{i+1}\ket{a_i, b_{i+1},0_{\med(i)}}\\
     &= \Had_{\med(i)}\left( \XX_{i+1} \CZ_{\med(i),i+1} \XX_{i+1}\right) \Had_{\med(i)} 
        \left( \XX_{\med(i)} \CZ_{i,\med(i)}\XX_{\med(i)}\right) \Had_{\med(i)}\\
     &\qquad \left( \XX_{i+1} \CZ_{\med(i),i+1} \XX_{i+1}\right) \Had_{\med(i)}\ket{a_i, b_{i+1},0_{\med(i)}}.
\end{align*}

To implement a CZ gate between arbitrary encoded qubits, we use these $\CZ_{i,i+1}$ and one-qubit gates to implement a $\mathrm{SWAP}_{i,i+1}$ gate, facilitating movement of encoded qubits. To implement a $\CZ_{i,j}$ gate, we use SWAP gates to move the information encoded in qubit $i$ to qubit $j-1$ or $j+1$, perform the required CZ gate, and finally use SWAP gates to return qubit $i$ to its original position.

Note that every graph implementing a gate preserves the relative position of each path for both the input and the output (e.g., qubit 1 remains above qubit $\med(1)$ and each 0 path remains above each 1 path), so concatenation of the graphs preserves planarity.  As each individual graph is planar, and as concatenating two graphs preserves planarity, the overall graph simulating the quantum circuit is planar.

\begin{figure}
\centering
\capstart
\begin{tikzpicture}[
  scale=2,
  split/.style={circle,draw=black,fill=white,
    inner sep=0pt, minimum size= 4mm},
  attach/.style={circle,draw=black,fill=white,
    inner sep=1pt, minimum size=0},
  vert/.style={circle,fill=black,inner sep=.7pt,minimum size=0}]

  \foreach \x in {0,.06,...,4}{
  \foreach \y in {.25, .5, 1.5, 1.75}{
    \node at (\x,\y) [vert] {};
  }}
  
  \foreach \x in {1.5,2.5}{
  \foreach \y in {.5,.56,...,1.5}{
    \node at (\x,\y) [vert]{};
  }}

  \draw (0,.25) node[label=left:$1_{i+1,\text{in}}$] {} 
     -- (4,.25) node[label=right:$1_{i+1,\text{out}}$] {};

  \node (s1) at (1.5,.5) [split] {};
  \node (s2) at (2.5,.5) [split] {};
  \node (s3) at (1.5,1.5) [split] {};
  \node (s4) at (2.5,1.5) [split] {};

  \draw (0,1.75) node[label=left:$0_{\med(i),\text{in}}$] {}
     -- (4,1.75) node[label=right:$0_{\med(i),\text{out}}$] {};

  \draw (0,.5) node[label=left:$0_{i+1,\text{in}}$] {}
     -- (s1.west);
  \draw (0,1.5) node[label=left:$1_{\med(i),\text{in}}$] {}
     -- (s3.west);

  \draw (s4.east) -- (4,1.5) node[label=right:$1_{\med(i),\text{out}}$]{};
  \draw (s2.east) -- (4,0.5) node[label=right:$0_{i+1,\text{out}}$]{};

  \draw (s1.east) -- (s2.west);
  \draw (s3.east) -- (s4.west);
  \draw (s1.north) -- (s3.south);
  \draw (s2.north) -- (s4.south);

  \draw (s1.west) to[out=0,in=-90] (s1.north)[line width=.5pt];
  \draw (s1.east) to[out=-180,in=-90] (s1.north) [line width =1.5pt];
  \draw (s1.east) to[out=-180,in=-90] (s1.north) [white,line width=.5pt];

  \draw (s2.east) to[out=-180,in=-90] (s2.north)[line width=.5pt];
  \draw (s2.west) to[out=0,in=-90] (s2.north)[line width=1.5pt];  
  \draw (s2.west) to[out=0,in=-90] (s2.north)[white,line width=.5pt];

  \draw (s3.east) to[out=-180,in=90] (s3.south)[line width=.5pt];
  \draw (s3.west) to[out=0,in=90] (s3.south)[line width=1.5pt];
  \draw (s3.west) to[out=0,in=90] (s3.south)[white,line width=.5pt];

  \draw (s4.west) to[out=0,in=90] (s4.south)[line width=.5pt];
  \draw (s4.east) to[out=-180,in=90] (s4.south)[line width=1.5pt];
  \draw (s4.east) to[out=-180,in=90] (s4.south)[white,line width=.5pt];

  \foreach \n in {s1, s2, s3, s4}{
    \node at (\n.west) [attach]{};
    \node at (\n.east) [attach]{};
  }
  
  \node at (s1.north) [attach]{};
  \node at (s2.north) [attach]{};
  \node at (s3.south) [attach]{};
  \node at (s4.south) [attach]{};
  
  \foreach \x in {0,4}{
  \foreach \y in {.25, .5, 1.5, 1.75}{
    \node at (\x,\y) [attach]{};
  }}
\end{tikzpicture}
\caption{The planar entangling gate between adjacent encoded qubits $\med(i)$ and $i+1$. This graph implements the unitary $\XX_{i+1} (\CD)^2_{\med(i),i+1} \XX_{i+1}$. A similar graph implements the unitary $\XX_{\med(i)} (\CD)^2_{i,\med(i)}\XX_{\med(i)}$ between adjacent encoded qubits $i$ and $\med(i)$.}
\label{fig:PlanCPGate}
\end{figure}
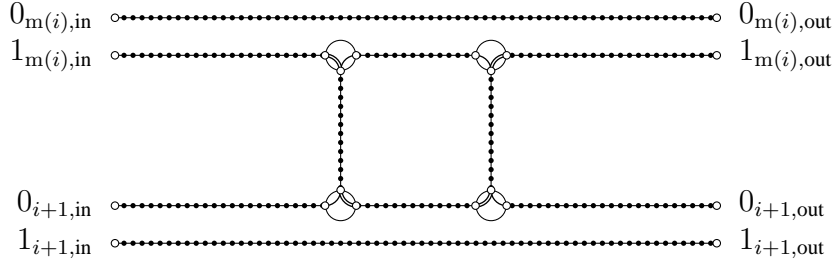
\subsection{Distinguishable particles} 
\label{sec:distinguishable}

So far we have focused on the case of indistinguishable particles.  However, we can also perform universal quantum computation with distinguishable particles, provided the interaction has an appropriate form.

For distinguishable particles we use the same encoding of qubits as before (computational qubits have momentum $-{\pi}/{4}$ and mediator(s) have momentum $-{\pi}/{2}$), except that now each qubit is associated with a specific particle (e.g., computational qubit 1 is associated with particle 1). Since the graphs implementing single-qubit gates only make use of single-particle scattering, which is unaffected by particle statistics, we can use the same graphs for distinguishable particles.  As such, we need only examine the implementation of the $\CD$ gate to see how our construction must be modified. In this section we show that with a simple nearest-neighbor interaction we can make our scheme work for distinguishable particles by carefully choosing the strength of the interaction term in the Hamiltonian, but with no other modifications.

When two indistinguishable particles of momenta $k_1$ and $k_2$ scatter on an infinite path, there is no distinction between the final state where the particles reflect off of each other (exchanging momenta) and where the particles transmit through one another. Thus, after scattering, the global phase of the wave function is multiplied by a factor $T\pm R$, the sum of the amplitude to transmit and the amplitude to reflect (or the difference if the particles are fermions). For any interaction potential, $|T\pm R|=1$, and in most cases the applied phase is nontrivial and can be used for universal computation within our scheme. 

In contrast, when two \textit{distinguishable} particles of momenta $k_1$ and $k_2$ scatter on an infinite path, there are two distinct outgoing states:  one corresponding to the case where the two particles reflect and one where the particles transmit.  We circumvent this potential problem by choosing the interaction strength so that the transmission probability for two-particle scattering at momenta ${\pi}/{4}$ and $-{\pi}/{2}$ is $1$ (forcing $R=0$), yet $T\neq 1$ (so that $T$ is a nontrivial phase). With such a choice, the graph implementing the $\CD$ gates preserves our encoding of qubits. In other words, if encoded qubit 1 is associated with particle 1 before applying the gate, then it is still associated with particle 1 after applying the gate (and similarly for the second qubit involved in the gate). We can then use the same graph as before to implement the controlled phase gate between encoded qubits.

Consider the nearest-neighbor Hamiltonian with $\mathcal{U}_{ij}(\hat{n}_i,\hat{n}_j)=U\delta_{i,j\in E(G)} \hat{n}_i \hat{n}_j$. For two particles on an infinite path, this is \eq{twoham} with $\mathcal{V}(|r|)=U\delta_{|r|,1}$. The reflection coefficient $R(p_1,p_2)$  for $p_1={\pi}/{4}$ and $p_2 = {3\pi}/{8}$ is
\[
  R\left(\frac{\pi}{4},\frac{3\pi}{8}\right) =\frac{-2U\left(\sqrt{2}+(\sqrt{2}-1)U\right)}{(\sqrt{2} - 2)(1+i) U^2 - 4 U + 2 i (\sqrt{2} + 2)}.
\]

Our goal is to choose $U$ so that $R=0$ and $T$ is a nontrivial phase. The values of $U$ that set $R=0$ are  $U = 0$ or $U = -2 - \sqrt{2}$.  The solution $U = 0$ corresponds to no interaction and the trivial phase $T = 1$ which is not sufficient for universal computation within our scheme. Choosing $U=-2-\sqrt{2}$ sets $T= i$ which allows us to perform a CP gate. 

We expect that other types of multi-particle quantum walk with distinguishable particles can also be used for universal computation. However, unlike in the case of indistinguishable particles, the interaction term may have to be tuned to satisfy the conditions $R=0$ and $T\neq1$  as was the case here.  For some interactions, it may not be possible to satisfy these two requirements.  For example, for a model where interactions only occur when particles occupy the same site (such as in the Bose-Hubbard model) we have $1+R = T$ for all momenta, so the transmission amplitude is trivial whenever $R=0$.

Note that it may be possible to implement an entangling two-qubit gate in other ways.  For example, some interactions may allow two-particle scattering with $T=0$ and $R=i$, in which case the graph shown in \fig{PlanCPGate} preserves the encoding of qubits and implements such a gate.

\section{Detailed description of the scheme}\label{sec:description}

In this section we fill in all of the details of the scheme outlined in the main text of this paper. We specify the initial state, the graph used to perform an $n$-qubit quantum computation, and the evolution time. These are all specified as a function of a single parameter $L\in\natural$. We show that by taking $L=\poly(n,g)$ we can achieve an arbitrarily small error in our simulation of a given $g$-gate quantum circuit. The resulting graph and evolution time are both polynomially large in $n$ and $g$. For simplicity, we discuss the case of indistinguishable particles and we concentrate on the nonplanar scheme using one mediator qubit, but similar results clearly apply to the planar scheme and for the case of distinguishable particles as discussed in \sec{planar} and \sec{distinguishable}.

\subsection{Building the graph}
\label{sec:build}
The graph corresponding to a given circuit is built by piecing together subgraphs (which we call \emph{blocks}) that implement gates. These blocks are of the forms shown in \fig{squint} (which we call a \emph{block of type I}) and \fig{CPintall} (which we call a \emph{block of type II}).  

A block of type I applies single-qubit gates $U_1, \ldots,U_n$ to the computational qubits and a single-qubit gate $V_\med$ to the mediator, all acting in parallel. The unitaries $U_1, \ldots,U_n$ may be phase gates, basis-changing gates, or identity gates, up to a global phase. The unitary $V_\med$ is either the identity gate or the Hadamard gate, up to a global phase. The circles in \fig{squint} are replaced by corresponding subgraphs implementing the single-qubit gates.  To simplify our analysis in \sec{block_by_block}, we replace the circles with subgraphs with the property that the shortest path between the two output vertices (or input vertices) of the subgraph is greater than $C$, the interaction range of the Hamiltonian. To implement the identity gate in this manner we use a graph that connects each input vertex to the corresponding output vertex by an edge. Similarly, this condition is automatically satisfied by the subgraph given in \fig{onepphase}(a) for the phase gate (for any $C$), and is satisfied by the single-qubit subgraphs for the basis-changing gate (\fig{onepphase}(b)) and the Hadamard gate on the mediator (\fig{halfpihad}) when $C\in\{0,1,2,3\}$. If $C\geq4$ this condition can be achieved by simply concatenating the graph implementing the desired unitary with two paths of length $C+1$ on the input as well as the output. This has the same effect as simply adding extra vertices on the input and output paths of the block, but in \fig{squint} we include these extra vertices inside the circled subgraph.

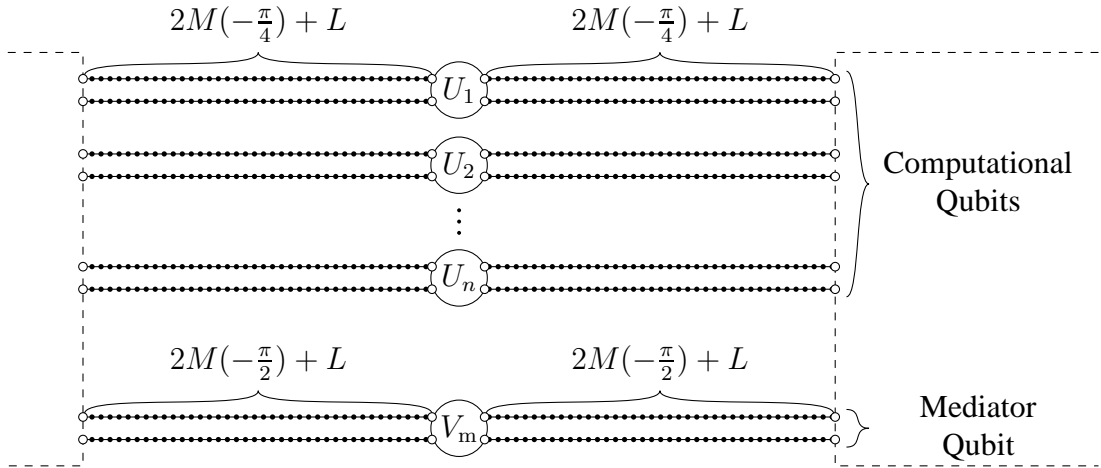
\begin{figure}
\centering
\capstart
\begin{tikzpicture}[
  unitary/.style={circle,draw=black,fill=white,
    inner sep=0pt,minimum size=7.5mm},
  dots/.style={circle,fill=black,
    inner sep=0pt,minimum size=1.5pt},
  vert/.style={circle,fill=black,
    inner sep=.7pt,minimum size=0pt},
  attach/.style={circle,draw=black,fill=white,inner sep=1.15pt,minimum size=0}]

  \foreach \n /\y in {U_1/2.5,U_2/1.5,U_n/0,{V_{\med}}/-2}{
    \begin{scope}[yshift = \y cm]
      \foreach \x in {0,.12,...,10} {
        \node at (\x, -.15) [vert] {};
        \node at (\x, .15) [vert] {};
      }
      \draw (0, -.15) to (10, -.15);
      \draw (0,  .15) to (10,  .15);
      \node at (5,0) [unitary]{$ \n $};

    \end{scope}
  }
  
  \foreach \x in {0,5.34}{
  \foreach \y /\p in {2.5/4,-2/2}{
  \begin{scope}[xshift=\x cm,yshift=\y cm]
    \node (\x\p) at (2.33,.5)[above] {$2M(-\frac{\pi}{\p}) + L$};
  
    \draw (0.02,0.2) to[out=80,in=-90,looseness=0.3] (2.33,.5)
                     to[out=-90,in=100,looseness=0.3] (4.64,.2);
  \end{scope}}} 
                  
  \draw (10.15,2.75) to[out=0,in=-180,looseness=0.3] (10.45,1.25)
                    to[out=-180,in=0,looseness=0.3] (10.15,-.25);
  \node at (11.9,1.25)
      {\begin{tabular}{c}
          Computational\\ 
          Qubits\end{tabular}};

  \draw (-1,3) to (0, 3) to (0,-2.5) to (-1,-2.5) [dashed];
  \draw (13.5,3) to (10,3) to (10,-2.5) to (13.5,-2.5) [dashed];

  \node at (5,.9) [dots] {};
  \node at (5,.75)    [dots] {};
  \node at (5,.6) [dots] {};

  \draw (10.15,-1.75) to[out=0,in=-180,looseness=1.5] (10.45,-2)
                    to[out=-180,in=0,looseness=1.5] (10.15,-2.25);
  \node at (11.9,-2) 
      {\begin{tabular}{c}
          Mediator\\ 
          Qubit\end{tabular}};
          
   \foreach \x in {0, 10, 4.64, 5.34}{
   \foreach \y in {2.35,2.65,1.65,1.35,.15,-.15,-1.85,-2.15}{
     \node at (\x, \y) [attach] {};
   }}
\end{tikzpicture}
\caption{A block of type I. Here the single-qubit gates $U_1,\ldots,U_n$ on the encoded computational qubits are either the phase gate, the basis-changing gate, or the identity (up to a global phase). The single-qubit gate $V_\med$ on the mediator is either the identity or the Hadamard gate (again, up to a global phase).}
\label{fig:squint}
\end{figure}

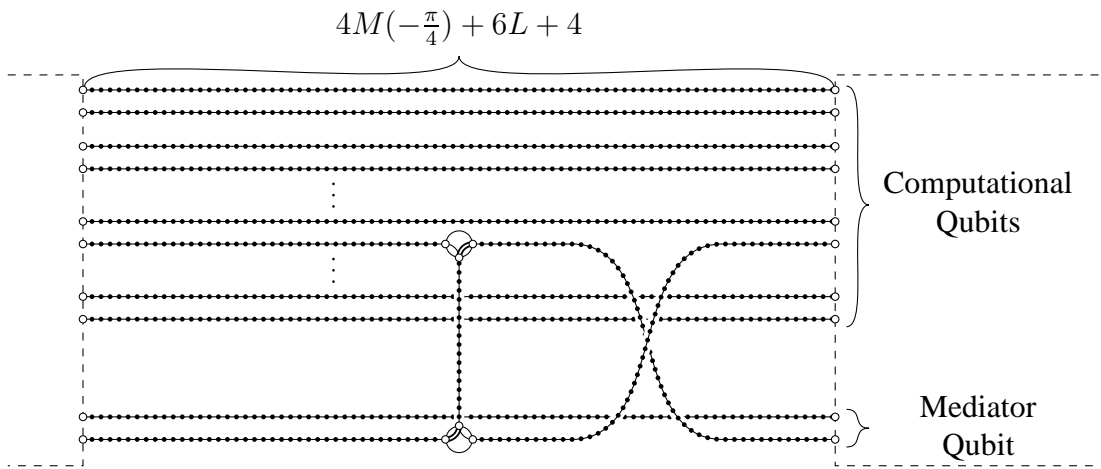
\begin{figure}
\centering
\capstart
\begin{tikzpicture}[scale=1,
  dots/.style={circle,fill=black,inner sep=0pt,
    minimum size=1pt},
  splitter/.style={circle,draw=black,fill=white,
    inner sep=0pt,minimum size=3.5mm},
  vert/.style={circle,fill=black,
    inner sep=.65pt,minimum size=0pt},
  attach/.style={circle,draw=black,fill=white,inner sep=1pt,minimum size=0},
  decoration={markings,
    mark= between positions 0 and 1 step .12 cm with{
      \node [vert] at (0,0){};
    }}]

  \foreach \y in {2.75,2,0} {
  \begin{scope}[yshift=\y cm]
    \draw[postaction={decorate}] (0,.15) -- (10, .15);
    \draw[postaction={decorate}] (0,-.15) --  (10,-.15);
  \end{scope}}

  \draw[postaction={decorate}] (0,1.15) -- (10,1.15);
  \draw[postaction={decorate}] (0,-1.45) -- (10,-1.45);

  \node (splitter1) at (5,.85) [splitter] {};
  \node (splitter2) at (5,-1.75)  [splitter] {};

  \draw (splitter1.west) to[out=0,in=90] (splitter1.south);
  \draw[line width=2.1pt] (splitter1.south) to[out=90,in=180] 
         (splitter1.east);
  \draw[line width=.7pt,white] (splitter1.south) to[out=90,in=180] 
         (splitter1.east);

  \draw[line width=2.1pt] (splitter2.west) to[out=0,in=-90] 
         (splitter2.north);
  \draw[line width=.7pt,white] (splitter2.west) to[out=0,in=-90] 
         (splitter2.north);
  \draw (splitter2.north) to[out=-90,in=180] (splitter2.east);

  \draw[line width=4pt, white] (splitter2.north) to (splitter1.south);
  \draw[postaction={decorate}] (splitter2.north) to (splitter1.south);

  \draw[postaction={decorate}] (0,.85) -- (splitter1.west);
  \draw[postaction={decorate}] (0,-1.75) -- (splitter2.west);
  
  \draw[line width = 4pt, white] (splitter1.east) -- (6.5,.85) 
    to[out=0,in=180,looseness=1] (8.5,-1.75) -- (10,-1.75);
  \draw[postaction={decorate}] (splitter1.east) -- (6.5,.85) 
    to[out=0,in=180,looseness=1] (8.5,-1.75) -- (10,-1.75);
    
  \draw[line width=4pt,white] (splitter2.east) -- (6.5,-1.75) 
    to[out=0,in=180,looseness=1] (8.5,.85) -- (10,.85);
  \draw[postaction={decorate}] (splitter2.east) -- (6.5,-1.75) 
    to[out=0,in=180,looseness=1] (8.5,.85) -- (10,.85);

  \foreach \x in {1.65,1.5,1.35,.65,.5,.35}
    \node at (3.33,\x) [dots] {};

  \draw (-1,3.1) to (0, 3.1) to (0,-2.1) to (-1,-2.1) [dashed];
  \draw (13.5,3.1) to (10,3.1) to (10,-2.1) to (13.5,-2.1) [dashed];

  \draw (10.15,2.95) to[out=0,in=-180,looseness=0.5] (10.45,1.37)
                    to[out=-180,in=0,looseness=0.5] (10.15,-.25);
  \node at (11.9,1.37) 
      {\begin{tabular}{c}
          Computational\\ 
          Qubits\end{tabular}};

  \draw (10.15,-1.35) to[out=0,in=-180,looseness=1.5] (10.45,-1.6)
                    to[out=-180,in=0,looseness=1.5] (10.15,-1.85);
  \node at (11.9,-1.6)
      {\begin{tabular}{c}
          Mediator\\ 
          Qubit\end{tabular}};

  \node at (5,3.35)[above] {$4M(-\frac{\pi}{4})+6L+4$};
  
  \draw (0.02,2.95) to[out=80,in=-90,looseness=0.3] (5,3.35)
                   to[out=-90,in=100,looseness=0.3] (9.98,2.95);

  \foreach \x in {0,10}{
  \foreach \y in {-.15,.15, .85,1.15,1.85,2.15,2.6,2.9,-1.45,-1.75}{
    \node at (\x,\y) [attach]{};
  }}

  \foreach \n in {splitter1, splitter2}{
    \node at (\n.east) [attach]{};
    \node at (\n.west) [attach]{};
  }  
  
  \node at (splitter2.north) [attach]{};
  \node at (splitter1.south) [attach]{};
  
\end{tikzpicture}
\caption{A block of type II.  Here a $\CD$ gate is implemented between a computational qubit and the mediator qubit.}
\label{fig:CPintall}
\end{figure}

For a block of type I, the number of vertices on each input path for the mediator qubit is equal to the number of vertices on each of its output paths and is chosen to be
\[
2M(-{\pi}/{2})+L
\] 
where $M(-{\pi}/{2})=L$. Similarly, the number of vertices on the input and output paths is the same for each computational qubit and is chosen to be
\[
2M(-{\pi}/{4})+L
\]
where 
\[
M(-{\pi}/{4})=\left\lceil\left(\frac{3\sqrt{2} - 2}{4}\right)L\right\rceil.
\]
These lengths are chosen to be different to compensate for the fact that the computational and mediator particles move at different speeds. They are designed so that a wave packet of length $L$ and momentum $k$ on an input path initially a distance $M(k)$ from the subgraph implementing the unitary will be found a distance $M(k)$ from the graph on the output paths after time $t_{\mathrm{I}} = {3L}/{2}$. This is illustrated in \fig{1cartoon}.

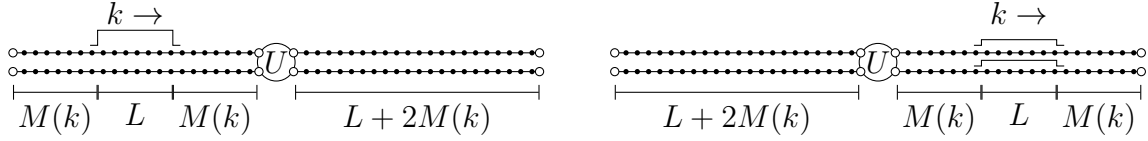
\begin{figure}
\centering
\capstart
\begin{tikzpicture}[
  scale = 0.5,
  unitary/.style={circle,draw=black,fill=white,
    inner sep=0pt,minimum size=5mm},
  verts/.style={circle,fill=black,inner sep=.7pt,minimum size=0},
  attach/.style={circle,fill=white,draw=black,inner sep=1.1pt,minimum size=0pt}]
  \draw (-2,0.25) to (12,0.25);
  \draw (-2,-0.25) to (12,-0.25);

  \foreach \x in {-2,-1.7,...,12}{
  \foreach \y in {-.25,.25}{
    \node at (\x,\y) [verts] {};
  }}

  \node (1) at (5,0) [unitary] {$U$};

  \draw[xshift=-1cm] (1.05,.45) to (1.25,.45) to (1.25,.85) to (3.25,.85) 
           to (3.25,.45) to (3.45,.45);

  \node (momentum) at (1.25,1.3) [] {$k\rightarrow$};

  \draw[|-|] (-2,-.8)   to node[below] {$M(k)$} (0.25,-.8);
  \draw[|-|] (0.25,-.8) to node[below] {$L$}   (2.25,-.8);
  \draw[|-|] (2.25,-.8) to node[below] {$M(k)$} (4.5,-.8); 

  \draw[|-|] (5.5,-.8) to node[below] {$L + 2M(k)$} (12,-.8);

  \foreach \x in {-2, 12, 4.54, 5.46}{
  \foreach \y in {-.25, .25}{  
    \node at (\x,\y) [attach]{};
  }}

\begin{scope}[xshift=4cm]
  \draw (10,0.25) to (24,0.25);
  \draw (10,-0.25) to (24,-0.25);
  
  \foreach \x in {10,10.3,...,24}{
  \foreach \y in {-.25,.25}{
    \node at (\x,\y) [verts] {};
  }}
  
  \node (2) at (17,0) [unitary] {$U$};
 
  \draw[xshift=1cm] (18.55,.45) to (18.75,.45) to (18.75,.6)
     to (20.75,.6) to (20.75,.45) to (20.95,.45);

  \draw[xshift=1cm] (18.55,-.1) to (18.75,-.1) to (18.75,.05)
     to (20.75,.05) to (20.75,-.1) to (20.95,-.1);

  \node (momentum2) at (20.75,1.3) [] {$k\rightarrow$};

  \draw[|-|] (10,-.8) to node[below] {$L+2M(k)$} (16.5,-.8);
  \draw[|-|] (17.5,-.8) to node[below] {$M(k)$} (19.75,-.8);
  \draw[|-|] (19.75,-.8) to node[below] {$L$} (21.75,-.8);
  \draw[|-|] (21.75,-.8) to node[below] {$M(k)$} (24,-.8);
  
  \foreach \x in {10, 24, 16.54, 17.46}{
  \foreach \y in {-.25, .25}{  
    \node at (\x,\y) [attach]{};
  }}
\end{scope}
\end{tikzpicture}
\caption{A single-qubit gate $U$ acts on an encoded qubit. The wave packet starts on the paths on the left-hand side of the figure, a distance $M(k)$ from the ends of the paths. After time $t_{\mathrm{I}}={3L}/{2}$ the logical gate has been applied and the wave packet has traveled a distance $2M(k)+L$ (up to error terms that are bounded as $\O(L^{-{1}/{4}}))$.}
\label{fig:1cartoon}
\end{figure}

Blocks of type II are used to perform $\CD$ gates between a computational qubit and the mediator.  The two-qubit gate involved in such a block, including the lengths of all the relevant paths, is shown in detail in \fig{Graph-used-to-1}.  To implement an identity operation on the other computational qubits, we simply attach the input to the output by a path with $4M(-{\pi}/{4})+6L+4$ vertices.  A wave packet of momentum $-{\pi}/{4}$ and length $L$ starting on an input path a distance $M(-{\pi}/{4})$ from the leftmost vertex will be found a distance approximately $M(-{\pi}/{4})$ from the rightmost vertex on an output path after a time $t_{\mathrm{II}}=(5L+2M(-{\pi}/{4}))/\sqrt{2}$.

To compose blocks $B$ and $B'$, where $B'$ is a block of type I, replace each of the vertices of the input paths for block $B'$ with the rightmost $2M(k)+L$ vertices of the corresponding output paths of block $B$. If $B'$ is a block of type II, replace only the leftmost $2M(k)+L$ vertices of the input paths of $B'$ with the rightmost $2M(k)+L$ vertices of the output paths of $B$. A simple example is shown in \fig{CombiningBlocks}.

This method of composing blocks simplifies our analysis, which proceeds by computing the evolution of the particles inside each block and then using a ``truncation lemma'' to bound the errors arising from the connection to other blocks on either side. In the example shown in \fig{CombiningBlocks}, the initial state of each particle is a wave packet of length $L$ prepared on the input paths of block $B$ a distance $M(k)$ from the single-qubit gate subgraphs. At time $t_{\mathrm{I}}$ the single-qubit gates have been applied and the wave packets encoding the logical state have propagated a distance approximately $M(k)$ on the output paths of the first block. This output state for the first block coincides (approximately, and up to an irrelevant global phase) with an input logical state for the second block, as it is a distance approximately $M(k)$ from the second set of single-qubit gates. Finally, after time $2t_{\mathrm{I}}$ the logical state is on the output paths of the second block.

More generally, for a circuit with $g_{\mathrm{I}}$  blocks of type I and $g_{\mathrm{II}}$ blocks of type II, the total evolution time for the computation is 
\begin{equation}
T=g_{\mathrm{I}}t_{\mathrm{I}}+g_{\mathrm{II}} t_{\mathrm{II}} = \Theta(gL),
\label{eq:total_time}
\end{equation}
where $g=g_{\mathrm{I}}+g_{\mathrm{II}}$.

\begin{figure}
\centering
\capstart
\begin{tikzpicture}[scale=.5,
  attach/.style={circle,fill=white,draw=black,
    inner sep=1pt,minimum size=0pt},
  vert/.style={circle,draw=black,fill=black,
    inner sep=1pt,minimum size=0pt},
  dots/.style={circle,fill=black,inner sep=.4pt,
    minimum size=0pt}
  ]
  \foreach \xsh in {0,9,18}{
  \foreach \ysh in {0,2,5.5,7.5}{
  \begin{scope}[xshift=\xsh cm,yshift=\ysh cm]
    \draw (0,0) -- (2,0);
    \draw (4,0) -- (6,0);
    \draw[densely dotted] (2,0) -- (2.33,0);
    \draw[densely dotted] (3.66,0) -- (4,0);

    \node at (0,0) [attach] {};
    \node at (1,0) [vert]{};
    \node at (2,0) [vert]{};
    \node at (2.66,0) [dots]{};
    \node at (3,0) [dots]{};
    \node at (3.33,0) [dots]{};
    \node at (4,0) [vert]{};
    \node at (5,0) [vert]{};
    \node at (6,0) [attach]{};
  \end{scope}}}

  \begin{scope}[xshift=6cm]
    \node (0inhad) at (0,2) [attach]{};
    \node (1inhad) at (0,0) [attach]{};
    \node (0outhad) at (3,2) [attach]{};
    \node (1outhad) at (3,0) [attach]{};
    
    \node (1had) at (1,0) [vert]{};
    \node (2had) at (2,0) [vert]{};
    \node (3had) at (1.5,.5) [vert]{};
    \node (4had) at (1,1) [vert]{};
    \node (5had) at (1.5,1) [vert] {};
    \node (6had) at (2,1) [vert] {};
    \node (7had) at (1.5,1.5) [vert]{};
    \node (8had) at (1,2) [vert]{};
    \node (9had) at (2,2) [vert]{};

    \draw (8had) -- (1had);
    \draw (9had) -- (2had);
    \draw (4had) -- (9had);
    \draw (1had) -- (6had);
    \draw (4had) -- (6had);
    \draw (3had) -- (4had);
    \draw (7had) -- (6had);
    \draw (1inhad) -- (2had);
    \draw (8had) -- (9had);
    \draw (8had) -- (5had)[ultra thick,white];
    \draw (5had) -- (2had)[ultra thick,white];
    \draw (8had) -- (2had);
    
    \draw (9had) -- (1outhad);
    \draw (0inhad) -- (7had)[ultra thick,white];
    \draw (0inhad) -- (7had);

    \draw (3had) to[out=0,in=240] (0outhad)[ultra thick,white];
    \draw (3had) to[out=0,in=240] (0outhad);

    \node at (0inhad) [attach]{};
    \node at (1outhad) [attach]{};
    \node at (0outhad) [attach]{};

  \end{scope}

  \draw (15,0) -- (18,0);
  \draw (15,2) -- (18,2);
  
  \node at (15,0) [attach] {};
  \node at (18,0) [attach] {};
  \node at (15,2) [attach] {};
  \node at (18,2) [attach] {};

  \begin{scope}[xshift=6cm,yshift = 5.5cm]
    \draw (0,0) -- (3,0);
    \draw (0,2) -- (3,2);
    \draw[xshift=.5cm] (1,0) -- (1,.59) -- (.58,1) --  (1,1.42) 
       -- (1.42,1) -- (1,.59);

    \node at (0,0) [attach]{};
    \node at (1.5,0) [vert] {};
    \node at (3,0) [attach]{};
    \node at (1.5,.59) [vert]{};
    \node at (1.08,1) [vert]{};
    \node at (1.92,1) [vert]{};
    \node at (1.5,1.42) [vert]{};
    \node at (1,2) [vert]{};
    \node at (2,2) [vert]{};
    \node at (0,2) [attach]{};
    \node at (3,2) [attach]{};
  \end{scope}

  \begin{scope}[xshift=15cm,yshift=5.5cm]
    \draw (0,0) -- (3,0);
    \draw (0,2) -- (3,2);
    \draw (1,0) -- (1,2);
    \draw (2,0) -- (2,2);

    \foreach \x in {1,2}{
    \foreach \y in {0,1,2}{
      \node at (\x,\y) [vert]{};
    }}

    \foreach \x in {0,3}{
    \foreach \y in {0,2}{
      \node at (\x,\y) [attach]{};
    }}

  \end{scope}

  \foreach \x in {0,9,18}{
  \draw[xshift=\x cm,yshift=2.4cm,|-|] (-.25,0) 
      to node[above]{$L+2M(-\pi/2)$} (6.25,0);
  
  \draw[xshift=\x cm,yshift=5.1cm,|-|] (-.25,0)
      to node[below]{$L+2M(-\pi/4)$} (6.25,0);
  }    

  \draw [dashed] (-.5,-.5) rectangle (15.5,8.5);

  \draw [dashed] (8.5,-1) rectangle (24.5,8);

  \node at (4,9.1) {Block $B$};

  \node at (20.5,8.6) {Block $B'$};

\end{tikzpicture}
\caption{An example of combining blocks to build a circuit. Here the two blocks are both of type I.}
\label{fig:CombiningBlocks}
\end{figure}
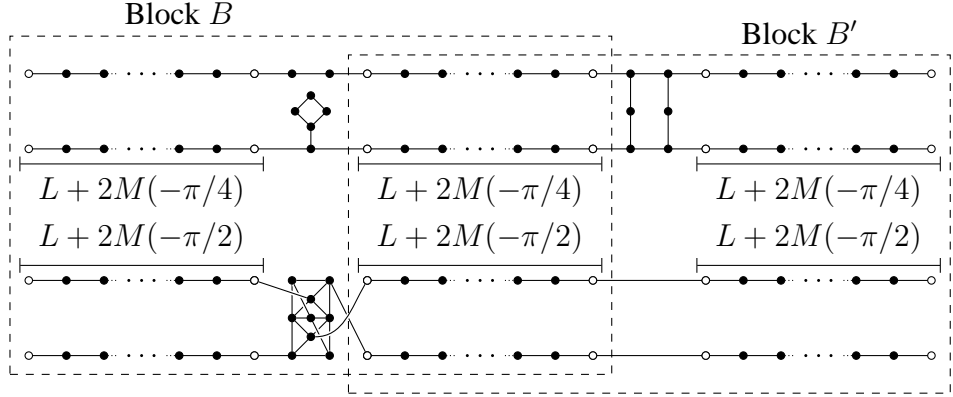

\subsection{Initial state, final measurement, and error bound}

We prepare an initial state of $n+1$ spatially separated wave packets, where each wave packet is localized on an input path of the first block of the graph.  To simplify notation, we assume that the first block is of type I, taking the unitaries $U_i$ and $V_\med$ to be the identity if required.  Label the vertices of the $2n$ input paths in the first block corresponding to computational qubits as $\ket{x,q}^{j}_{\mathrm{in}}$, where $j\in\{1,\ldots,n\}$ indexes the qubit, $q\in\{0,1\}$ labels the computational basis state, and $x\in\{1,\ldots,2M(-{\pi}/{4})+L\}$ denotes the position along the path.  Similarly, label the vertices of the input paths for the mediator qubit as $\ket{x,q}^{\med}_{\mathrm{in}}$ where $x \in \{1,\ldots,2M(-\pi/2)+L\}$.  Here the vertices with $x=1$ are the rightmost vertices on the input paths. We define the logical input states for each of the qubits as
\begin{align*}
|0_{\text{in}}\rangle^j & =  \frac{1}{\sqrt{L}}\sum_{x=M\left(-\frac{\pi}{4}\right)+1}^{M\left(-\frac{\pi}{4}\right)+L}e^{i\frac{\pi}{4}x}|x,0\rangle^{j}_{\mathrm{in}} &
|1_{\text{in}}\rangle^j & =  \frac{1}{\sqrt{L}}\sum_{x=M\left(-\frac{\pi}{4}\right)+1}^{M\left(-\frac{\pi}{4}\right)+L}e^{i\frac{\pi}{4}x}|x,1\rangle^{j}_{\mathrm{in}}
\end{align*}
and the logical input states for the mediator as
\begin{align*}
|0_{\text{in}}\rangle^{\med} & =  \frac{1}{\sqrt{L}}\sum_{x=M\left(-\frac{\pi}{2}\right)+1}^{M\left(-\frac{\pi}{2}\right)+L}e^{i\frac{\pi}{2}x}|x,0\rangle^{\med}_{\mathrm{in}} &
|1_{\text{in}}\rangle^{\med} & =  \frac{1}{\sqrt{L}}\sum_{x=M\left(-\frac{\pi}{2}\right)+1}^{M\left(-\frac{\pi}{2}\right)+L}e^{i\frac{\pi}{2}x}|x,1\rangle^{\med}_{\mathrm{in}}.
\end{align*}

The system is initialized at time $t=0$ in the computational basis state $\ket{00\ldots 0}$, encoded as
\[
|\psi \left(0\right)\rangle =\Sym(|0_{\text{in}}\rangle^1\ldots|0_{\text{in}}\rangle^n|0_{\text{in}}\rangle^{\med}).
\]
Here $\Sym$ is a linear operator that symmetrizes if the particles are bosons and antisymmetrizes if the particles are fermions.  It is defined by
\[
  \Sym\left( |a_1\rangle|a_2\rangle\ldots|a_{n+1}\rangle \right) =\frac{1}{\sqrt{(n+1)!}}\sum_{\pi\in S_{n+1}} (\pm1)^{\text{sgn}(\pi) }|a_{\pi(1)}\rangle |a_{\pi(2)}\rangle\ldots|a_{\pi(n+1)}\rangle
\]
where the $\pm$ is $+$ for bosons and $-$ for fermions.

We evolve the initial state for a time $T$ according to the Schr\"{o}dinger equation with the Hamiltonian $H_G^{(n+1)}$ on a graph $G$ built by composing $g_{\mathrm{I}}$ blocks of type I and $g_{\mathrm{II}}$ blocks of type II as described in the previous section.  The total evolution time $T$ is given by equation \eq{total_time}. Labeling the vertices on the output paths of the final block (again assumed to be type I) as $|x,q\rangle^{j}_{\text{out}}$,  where now vertices with $x=1$ are the leftmost vertices on the output paths, the logical output states for the computational qubits are defined as
\begin{align*}
|0_{\text{out}}\rangle^j & =  \frac{1}{\sqrt{L}}\sum_{x=M\left(-\frac{\pi}{4}\right)+1}^{M\left(-\frac{\pi}{4}\right)+L}e^{-i\frac{\pi}{4}x}|x,0\rangle^{j}_{\text{out}} &
|1_{\text{out}}\rangle^j & =  \frac{1}{\sqrt{L}}\sum_{x=M\left(-\frac{\pi}{4}\right)+1}^{M\left(-\frac{\pi}{4}\right)+L}e^{-i\frac{\pi}{4}x}|x,1\rangle^{j}_{\text{out}}
\end{align*}
with similar definitions for the mediator qubit. Letting $U_C$ be the logical $n$-qubit unitary that the graph is intended to implement, the desired output state is 
\[
|\phi\rangle =\Sym\left(\sum_{\vec{z}\in \{0,1\}^n} \langle z^{1}z^{2}\ldots z^{n}| U_C|00\ldots 0\rangle |z^{1}_{\text{out}}\rangle^1\ldots|z^{n}_{\text{out}}\rangle^n|0_{\text{out}}\rangle^{\med}\right).
\]
where $z^i$ is the $i$th bit of $\vec{z}$.  A final measurement of the encoded quantum state in the computational basis is performed by measuring the locations of the particles at the end of the time evolution.

We prove in the next section that 
\begin{equation}
\left\Vert e^{-iH_G^{(n+1)} T} |\psi\left(0\right)\rangle - e^{i\gamma} |\phi\rangle\right\Vert = \O\left(gn\Norm{H_G^{(n+1)}} L^{-{1}/{4}}\right) 
\label{eq:errorbound}
\end{equation}
where $\gamma$ is an irrelevant overall phase.   For the Bose-Hubbard model and for the models with nearest-neighbor interactions that we consider, $\norm{H_G^{(n+1)}} = \O(n^2)$, so by taking $L=\O(n^{12}g^{4})$ we can make the error arbitrarily small. Using this bound on $L$, the total number of vertices required in our construction is $\O(n^{13}g^5)$ and the total evolution time is $\O(n^{12}g^5)$. Although these bounds are sufficient to establish universality with only polynomial overhead, we expect that they can be improved significantly.

\section{Analysis of wave packet scattering}\label{sec:analysis}

In this section we prove the error bound \eq{errorbound} stated above. The proof relies on the following technical results.

In \sec{Single-Particle-Wavetrain} we prove the following theorem about single-particle wave packet scattering on an \emph{infinite} graph of the form shown in \fig{graph}. The proof is based on a calculation from reference \cite{FGG08}.

\begin{theorem}
\label{thm:singlepart}Let $\widehat{G}$ be an $(N+m)$-vertex graph. Let $G$ be a graph obtained from $\widehat{G}$ by attaching semi-infinite paths to $N$ of its vertices, as shown in \fig{graph}, and let $S$ be the corresponding S-matrix. Let $H_{G}^{(1)}$ be the quantum walk Hamiltonian of equation \eq{single_particle_ham}. Let $k\in(-\pi,0)$, $M,L\in\natural$, $j\in\{1,\ldots,N\}$, and
\[
|\psi^{j}(0)\rangle=\frac{1}{\sqrt{L}}\sum_{x=M+1}^{M+L}e^{-ikx}|x,j\rangle.
\]
Let $c_{0}$ be a constant independent of $L$. Then, for all $0\leq t\leq c_{0}L$,
\[
\left\Vert e^{-iH_{G}^{(1)}t}|\psi^{j}(0)\rangle-|\alpha^{j}(t)\rangle\right\Vert =\O(L^{-{1}/{4}})\]
where 
\begin{align*}
|\alpha^{j}(t)\rangle & =  \frac{1}{\sqrt{L}}e^{-2it\cos k}\sum_{x=1}^{\infty}\sum_{q=1}^{N}\left(\delta_{qj} e^{-ikx}R(x-\left\lfloor 2t\sin k\right\rfloor)+S_{qj}(k)e^{ikx}R(-x-\left\lfloor 2t\sin k\right\rfloor)\right)|x,q\rangle
\end{align*}
with
\begin{align*}
R(l) & =  \begin{cases}
1 & \text{if }l\in\{M+1,M+2,\ldots,M+L\}\\
0 & \text{otherwise.}\end{cases}
\end{align*}
\end{theorem}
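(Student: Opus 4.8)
The plan is to diagonalize $H_G^{(1)}$ using the scattering states $\ket{\mathrm{sc}_j(k')}$ of \eq{single_particle_states}, which are exact eigenstates with energy $2\cos k'$, and then to follow a superposition of them sharply peaked at momentum $k$. First I would write the initial wave packet as a superposition of scattering states. Define the momentum weight
\[
\phi(k') = \frac{1}{\sqrt{L}} \sum_{x=M+1}^{M+L} e^{i(k'-k)x},
\]
a Dirichlet-type kernel peaked at $k'=k$ with width $\O(1/L)$ and normalized so that $\int_{-\pi}^{\pi} \frac{dk'}{2\pi}|\phi(k')|^2 = 1$, and set $\ket{\beta(0)} = \int_{-\pi}^{0} \frac{dk'}{2\pi}\, \phi(k')\, \ket{\mathrm{sc}_j(k')}$. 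By \eq{single_particle_states} the incoming part $e^{-ik'x}\delta_{qj}$ reproduces $\ket{\psi^j(0)}$ on path $j$, the restriction of the momentum integral to $(-\pi,0)$ costing only $\O(L^{-1/2})$ since $\phi$ concentrates at the interior point $k$. A direct Fourier computation shows the outgoing part $e^{ik'x}S_{qj}(k')$ integrates to a packet supported at negative positions $x\le -(M+1)$, hence vanishes on the physical paths $x\ge 1$; the amplitudes inside $\widehat{G}$ are bounded by $\lambda$ via \lem{defn_scatteringstates}, contributing $\O(\lambda\sqrt{m}\,(\log L)/\sqrt{L})$ because $\int|\phi|\frac{dk'}{2\pi}=\O((\log L)/\sqrt{L})$; and since $H_G^{(1)}$ has at most $N+m$ bound states, each decaying exponentially away from $\widehat{G}$ while the packet sits at distance $\ge M+1$, their total overlap is $\O(1/\sqrt{L})$. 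Altogether $\norm{\ket{\psi^j(0)}-\ket{\beta(0)}}=\O(L^{-1/4})$ (the implied constants depending on $k$ and $\widehat G$).

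Second, because each $\ket{\mathrm{sc}_j(k')}$ is an eigenstate of energy $2\cos k'$, evolution acts diagonally with no error:
\[
e^{-iH_G^{(1)}t}\ket{\beta(0)} = \int_{-\pi}^{0} \frac{dk'}{2\pi}\, \phi(k')\, e^{-2it\cos k'}\, \ket{\mathrm{sc}_j(k')} =: \ket{\beta(t)}.
\]
By unitarity $\norm{e^{-iH_G^{(1)}t}\ket{\psi^j(0)}-\ket{\beta(t)}}=\norm{\ket{\psi^j(0)}-\ket{\beta(0)}}=\O(L^{-1/4})$, so it remains only to compare $\ket{\beta(t)}$ with the target $\ket{\alpha^j(t)}$.

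Third, I would carry out the dispersion (group-velocity) analysis. Linearizing the dispersion relation about $k$ as $2\cos k' = 2\cos k - 2\sin k\,(k'-k) + \O((k'-k)^2)$ and replacing the smooth, pole-free entries $S_{qj}(k')$ by $S_{qj}(k)$, the inverse transform defining $\phi$ converts the linear phase $e^{2it\sin k\,(k'-k)}$ into a rigid translation of the incoming packet by $\lfloor 2t\sin k\rfloor$ and of the outgoing packet by $-\lfloor 2t\sin k\rfloor$. This is precisely the content of the window functions $R(x-\lfloor 2t\sin k\rfloor)$ and $R(-x-\lfloor 2t\sin k\rfloor)$ and the common energy phase $e^{-2it\cos k}$ in $\ket{\alpha^j(t)}$. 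The two error sources are the curvature term, with accumulated phase $\O((k'-k)^2 t)$, and the variation of $S$ across the packet. I control both by splitting the integral at $|k'-k|=\epsilon$: the far region carries $\ell^2$-weight $\O(1/(L\epsilon))$ from the kernel tail, contributing norm $\O((L\epsilon)^{-1/2})$, while the near-region phase error, weighted by $|\phi(k')|^2$ and using $t\le c_0 L$, grows with $t$; the optimal cutoff $\epsilon=\O(L^{-1/2})$ balances the two contributions and yields the claimed $\O(L^{-1/4})$. A triangle inequality over the three estimates then gives the theorem.

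The hard part is this third step: producing a bound that is uniform over the full range $0\le t\le c_0 L$, where the packet genuinely disperses and one must trade the curvature-induced phase error (which worsens with $t$) against the slowly decaying tails of the momentum-space weight. The bookkeeping of the first step — simultaneously handling the truncated momentum integral, the internal amplitudes on $\widehat{G}$, and the bound-state overlap — is a secondary difficulty, whereas the diagonal evolution of the second step is the single exact, error-free ingredient that makes the whole scheme tractable.
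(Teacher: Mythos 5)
Your proposal is correct, and its quantitative core coincides with the paper's: expansion in the delta-normalized scattering states, exact diagonal time evolution, linearization of the dispersion with a curvature error of order $t\phi^2$, Lipschitz control of $S(k')$ from its pole-free rational form, \lem{defn_scatteringstates} for the amplitudes inside $\widehat{G}$, and the balance $\epsilon\sim L^{-1/2}$ yielding $\O(L^{-1/4})$ uniformly for $t\le c_0 L$. Where you genuinely diverge is the first step. The paper never represents $\ket{\psi^j(0)}$ as an integral over scattering states: it applies the narrow sub-projector $\Pi_\epsilon=\int_{-\epsilon}^{\epsilon}\frac{d\phi}{2\pi}\sum_q\ket{\mathrm{sc}_q(k+\phi)}\bra{\mathrm{sc}_q(k+\phi)}$ directly to $\ket{\psi^j(t)}$, lower-bounds the norm of the projected state by computing overlaps, and concludes the orthogonal complement is small---so it needs only delta-function orthonormality, never completeness of the spectral decomposition. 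Your full-band synthesis $\ket{\beta(0)}=\int_{-\pi}^{0}\frac{dk'}{2\pi}\,\phi(k')\ket{\mathrm{sc}_j(k')}$ must instead be matched to $\ket{\psi^j(0)}$ in position space, and two of your justifications there are lighter than they should be: both the claim that the incoming part ``reproduces'' $\ket{\psi^j(0)}$ and the claim that the outgoing part ``vanishes on the physical paths'' acquire $\O(1/\mathrm{dist})$ tails from the jump discontinuities of the half-line cutoff (the indicator of $(-\pi,0)$, and that indicator times $S_{qj}$); these do sum to $\O(L^{-1/2})$ in $\ell^2$, but only after an Abel-summation or Fourier-coefficient-decay argument, and your appeal to finitely many exponentially decaying bound states invokes spectral facts the paper neither proves nor needs. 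What you buy in return is a cleaner middle step, since the deferred windowing makes the $t$-dependence of the error transparent; your final comparison of the near-momentum piece of $\ket{\beta(t)}$ with $\ket{\alpha^j(t)}$ is then exactly the paper's bound on $\Vert\ket{w_A^j(t)}-\ket{\alpha^j(t)}\Vert$, whose seven error terms also include the floor-function rounding that your sketch elides.
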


The approximation $|\alpha^j(t)\rangle$ to the time-evolved state has two terms, one corresponding to the incoming wave packet and one corresponding to a superposition of outgoing wave packets. Before scattering, the wave packet is supported entirely on path $j$ and the second term is zero. Likewise, after scattering the first term is zero and the particle is outgoing along the semi-infinite paths. As we can see from the above expression, the square wave packets involved move with speed $2|{\sin k}|$ as expected (and to a good approximation, movement occurs at discrete times).

In \sec{Two-Particle-Wavetrain} we prove the following result about two wave packets moving past each other on an infinite path.  While the same proof applies to wave packets with other momenta, for concreteness we assume that the momenta are $k_1={-\pi}/{2}$ and $k_2={\pi}/{4}$ (so $p_1=-k_1-k_2={\pi}/{4}$ and $p_2=(k_2-k_1)/2={3\pi}/{8}$).

\begin{theorem}
\label{thm:twopart}Let $H^{(2)}$ be a two-particle Hamiltonian of the form \eq{twoham} with interaction range at most $C$, i.e., $\mathcal{V}(|r|)=0$ for all $|r|>C$. Let $\theta_{\pm}(p_1,p_2)$ be given by equation \eq{delta_pm}. Define $\theta=\theta_{\pm}({\pi}/{4},{3\pi}/{8})$. Let $L\in\natural$, let $M\in\{C+1,C+2,\ldots\}$, and define
\begin{align*}
|\chi_{z,k}\rangle & =  \frac{1}{\sqrt{L}}\sum_{x=z-L}^{z-1}e^{ikx}|x\rangle\\
|\psi(0)\rangle & =  \frac{1}{\sqrt{2}}\left(|\chi_{-M,-\frac{\pi}{2}}\rangle|\chi_{M+L+1,\frac{\pi}{4}}\rangle 
	\pm |\chi_{M+L+1,\frac{\pi}{4}}\rangle|\chi_{-M,-\frac{\pi}{2}}\rangle\right).
\end{align*}
Let $c_{0}$ be a constant independent of $L$. Then, for all $0\leq t\leq c_{0}L$, we have
\[
\left\Vert e^{-iH^{(2)}t}|\psi(0)\rangle-|\alpha(t)\rangle\right\Vert =\O(L^{-{1}/{4}}),
\]
where 
\begin{equation}
|\alpha(t)\rangle=\sum_{x,y}a_{xy}(t)|x,y\rangle,
\label{eq:alpha}
\end{equation}
$a_{xy}(t)=\pm a_{yx}(t)$ for $x \ne y$, and, for $x\leq y$, 
\begin{align}
a_{xy}(t) & =  \frac{1}{\sqrt{2}L}e^{-\sqrt{2}it}\left[e^{-i \pi x/2} e^{i \pi y/4} F(x,y,t) 
    \pm e^{i\theta} e^{i \pi x/4} e^{-i \pi y/2} F(y,x,t) \right]
\label{eq:a_xy}
\end{align}
where
\begin{align*}
F(u,v,t) & =  \begin{cases}
	1 & \text{if }u-2 \lfloor t \rfloor\in\{-M-L,\ldots,-M-1\}\text{ and }v+2\left\lfloor \frac{t}{\sqrt{2}}
	\right\rfloor \in\{M+1,\ldots,M+L\}\\
	0 & \text{otherwise.}\end{cases}
\end{align*}
\end{theorem}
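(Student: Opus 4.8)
The plan is to follow the method used for the single-particle result, \thm{singlepart}, exploiting the reduction of the two-particle dynamics to an effective single-particle scattering problem developed in \sec{twopart_scat}. In the center-of-mass and relative coordinates $s=x+y$ and $r=x-y$, the center-of-mass coordinate carries a free plane wave $e^{-ip_1 s/2}$ while the relative coordinate is governed by the single-particle-like Hamiltonian \eq{vr_eqn}, whose scattering states $|\psi(p_1;p_2)\rangle$ are given explicitly in \eq{psip1p2}. The incoming momenta $k_1=-{\pi}/{2}$ and $k_2={\pi}/{4}$ correspond to the central values $p_1={\pi}/{4}$ and $p_2={3\pi}/{8}$, at which the total energy $E(p_1,p_2)=4\cos(p_1/2)\cos(p_2)$ equals $\sqrt{2}$ (matching the prefactor $e^{-\sqrt{2}it}$ in \eq{a_xy}) and the scattering phase equals $e^{i\theta}$ with $\theta=\theta_\pm({\pi}/{4},{3\pi}/{8})$.

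First I would expand the initial state in the (anti)symmetrized scattering eigenbasis $\{|\mathrm{sc}(p_1;p_2)\rangle_\pm\}$, which is delta-function orthonormal as shown in \sec{twopart_scat}:
\[
|\psi(0)\rangle = \frac{1}{4\pi^2}\int_{-\pi}^{\pi} dp_1 \int_{0}^{\pi} dp_2 \; {}_{\pm}\langle\mathrm{sc}(p_1;p_2)|\psi(0)\rangle \, |\mathrm{sc}(p_1;p_2)\rangle_\pm .
\]
Since $|\psi(0)\rangle$ is supported deep in the incoming region $r\leq -C$ (where $x\approx -M$ and $y\approx M+L$ give $r\approx -2M-L$), only the incoming component $e^{-ip_1 s/2}e^{-ip_2 r}$ of each eigenstate contributes appreciably to the overlap: the transmitted component is supported in $r\geq C$, disjoint from the wave packet, while the reflected component $e^{i\theta_\pm}e^{ip_2 r}$ carries the opposite sign of relative momentum and hence produces a non-stationary sum of size $\O(1/L)$. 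Evaluating the resulting two-dimensional Fourier sum over the square wave packets shows that the coefficient function is sharply concentrated, with width $\O(1/L)$, about $(p_1,p_2)=({\pi}/{4},{3\pi}/{8})$.

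Next I would apply the time evolution, which multiplies each eigenstate by $e^{-iE(p_1,p_2)t}$, and evaluate the integral region by region in $(s,r)$. Expanding $E(p_1,p_2)$ and the scattering phase $e^{i\theta_\pm(p_1,p_2)}=T\pm R$ to first order about the central momenta, the linear (group-velocity) terms reproduce the rigid translation of the two packets at speeds $2|\sin k_1|=2$ and $2|\sin k_2|=\sqrt{2}$, accounting for the two floor functions $2\lfloor t\rfloor$ and $2\lfloor t/\sqrt{2}\rfloor$ in the definition of $F$, while the zeroth-order value of the scattering phase supplies the factor $e^{i\theta}$ in the outgoing term. Concretely, the incoming component yields the $F(x,y,t)$ term of \eq{a_xy} (supported while $x\leq y$, before the packets cross) and the transmitted-plus-reflected component yields the $e^{i\theta}F(y,x,t)$ term (supported after they cross), with the relation $a_{xy}(t)=\pm a_{yx}(t)$ inherited from the (anti)symmetry of the eigenstates.

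The main obstacle, exactly as in \thm{singlepart}, is to bound the errors from these approximations uniformly for $0\leq t\leq c_0 L$ and to show that they sum to $\O(L^{-1/4})$. Three effects must be controlled: the curvature of $E$ beyond its linear term, which disperses the packets; the replacement of continuous propagation by the discretized floor functions; and the slowly decaying (Dirichlet-kernel) tails of the momentum profile of a square wave packet, whose interplay with the linearization error near the peak is what fixes the exponent $-{1}/{4}$. The genuinely two-particle complications are verifying that the finitely supported interaction piece $f(p_1,p_2,r)$ for $|r|<C$ is negligible (it occupies only $\O(C)$ sites, each of amplitude $\O(1/L)$) and handling the timing: because the two packets move at different speeds, the estimate must confirm that they overlap the interaction region simultaneously, which is precisely the condition fixing the central momenta and the path lengths chosen in \sec{description}.
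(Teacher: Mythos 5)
Your strategy coincides with the paper's: restrict attention to a narrow window of (anti)symmetrized scattering states around $(p_1,p_2)=(\pi/4,3\pi/8)$, discard the wrong-sign relative-momentum component as a non-stationary sum, linearize the dispersion and the scattering phase about the central momenta, and separately control the Dirichlet-kernel tails, the floor-function discretization, and the interaction region $|x-y|<C$; these are precisely the error terms $e_i$, $f_i$, $g_i$, $h$ the paper isolates.

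One step as written would fail, although the repair is exactly what the paper does. Your opening display expands $|\psi(0)\rangle$ over the scattering states alone, i.e., it assumes the states $|\mathrm{sc}(p_1;p_2)\rangle_\pm$ are complete. They are not in general: for any nonzero short-range $\mathcal{V}$ (the Bose--Hubbard interaction in particular) the effective relative-coordinate Hamiltonian \eq{vr_eqn} has a bound state for each total momentum $p_1$, so $H^{(2)}$ carries a two-particle bound band lying outside the span of the scattering states. The paper never invokes completeness; it instead introduces the projector $\Pi_\epsilon$ onto the momentum window $D_\epsilon$ with $\epsilon=\Theta(L^{-1/2})$, uses the concentration of the Fourier profile $A(\phi_1,\phi_2)$ (your ``two-dimensional Fourier sum'') to show $\Norm{\Pi_\epsilon|\psi(0)\rangle}^2\geq 1-\O(L^{-1/2})$, and concludes that everything orthogonal to the window---bound states included---has small norm. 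Your concentration estimate already contains the needed input; you only need to reorganize the first step as a projection rather than an eigenfunction expansion. Two smaller points: bounding the interaction-region piece of the time-evolved state (as opposed to $|\alpha(t)\rangle$) is not simply ``$\O(C)$ sites of amplitude $\O(1/L)$''---it requires a uniform bound on the scattering-state amplitudes for $|r|<C$, which is \lem{defn_scatteringstates} applied to the effective Hamiltonian (this is the paper's \lem{psiA_P2}); and the timing of the two packets' simultaneous arrival is an issue for the application in \sec{description}, not for this theorem, which holds for all $0\leq t\leq c_0L$ whether or not the packets meet.
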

\vspace{1cm}

Note that for $t>t_{0}$, where $t_{0}$ is the smallest time that satisfies 
\[
2\left(\left\lfloor \frac{t_0}{\sqrt{2}}\right\rfloor + \lfloor t_0 \rfloor \right)\geq 2M+2L+1,
\]
we have $F(x,y,t) =0$ for $x\leq y$.
Thus, for $t>t_{0}$, 
\begin{equation}
|\alpha(t)\rangle=\frac{e^{i\theta}}{\sqrt{2}}e^{-\sqrt{2}it}\left(|\chi_{-M+2\left\lfloor t
  	\right\rfloor ,-\frac{\pi}{2}}\rangle|\chi_{M+L+1-2\left\lfloor {t}/{\sqrt{2}}\right\rfloor ,
  	\frac{\pi}{4}}\rangle \pm|\chi_{M+L+1-2\left\lfloor {t}/{\sqrt{2}}\right\rfloor ,\frac{\pi}{4}}\rangle|
  	\chi_{-M+2\left\lfloor t\right\rfloor ,-\frac{\pi}{2}}\rangle\right)
\label{eq:t_biggerthan_t0-1}
\end{equation}
which describes two wave packets of length $L$ moving apart. In particular, note that the prefactor includes an overall phase of $e^{i\theta}$ arising from the interaction between the particles.

Note that in Theorems \ref{thm:singlepart} and \ref{thm:twopart}, the big-$\O$ notation includes the dependence on $M$. In particular, the error bounds hold even if $M$ is chosen to depend on $L$. When applying these Theorems, we choose $M$ to be proportional to $L$.

These results about wave packet scattering on infinite graphs are not sufficient to prove error bounds for our scheme, as the graphs we construct are finite. However, a wave packet propagates with a finite speed, so that if we evolve for a finite amount of time then we expect a wave packet to only ``see'' a small part of the graph. The next ingredient in our analysis is a ``truncation lemma'' that we use to make this precise. The lemma, which we prove in \sec{Truncation-Lemma}, is as follows:

\begin{lemma}[Truncation Lemma]
\label{lem:trunc}
Let $H$ be a Hamiltonian acting on a Hilbert
space $\H$ and let $\ket{\Phi}\in\H$ be a normalized state. Let
$\K$ be a subspace of $\H$, let $P$ be the projector onto $\K$,
and let $\tilde{H}=PHP$ be the Hamiltonian within this subspace. Suppose
that, for some $T>0$, $W\in\{H,\tilde{H}\}$, $N_0\in\natural$,
and $\delta>0$, we have, for all $0\leq t\leq T$, 
\begin{align*}
e^{-iWt}|\Phi\rangle & = |\gamma(t)\rangle+|\epsilon(t)\rangle \text{ with }
\left\Vert |\epsilon(t)\rangle\right\Vert \leq \delta
\end{align*}
and
\begin{align*}
  (1-P) H^{r}|\gamma(t)\rangle & = 0 \text{ for all } r\in\{0,1,\ldots, N_0-1\}.
\end{align*}
Then, for all $0\leq t \leq T$, 
\[
  \Norm{\left(e^{-iHt}-e^{-i\tilde{H}t}\right)|\Phi\rangle}
  \leq \left(\frac{4e\norm{H}t}{N_0} + 2 \right) 
        \left(\delta + 2^{-N_0}(1+\delta)\right).
\]
\end{lemma}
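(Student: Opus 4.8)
The plan is to reduce the whole estimate to one clean algebraic observation about how $H$ and $\tilde H$ act on the ``good'' part $\ket{\gamma(s)}$ of the state, and then to convert a short-time version of that observation into a global bound by slicing $[0,t]$ into many subintervals.

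First I would establish that $H$ and $\tilde H$ agree to high order on $\ket{\gamma(s)}$. The hypotheses $(1-P)H^{r}\ket{\gamma(s)}=0$ for $r\in\{0,\ldots,N_0-1\}$ include ($r=0$) the statement $\ket{\gamma(s)}\in\K$, i.e.\ $P\ket{\gamma(s)}=\ket{\gamma(s)}$. Writing $\tilde H=PHP$ and using $P\,H^{j}\ket{\gamma(s)}=H^{j}\ket{\gamma(s)}$ for $j\le N_0-1$, a short induction on $r$ gives $\tilde H^{\,r}\ket{\gamma(s)}=P H^{r}\ket{\gamma(s)}=H^{r}\ket{\gamma(s)}$ for all $r\le N_0-1$. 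Hence the degree-$(N_0-1)$ Taylor polynomials of $e^{-iHu}$ and $e^{-i\tilde Hu}$ act identically on $\ket{\gamma(s)}$, so the two evolutions differ only through their tails:
\[
  \norm{\bigl(e^{-iHu}-e^{-i\tilde Hu}\bigr)\ket{\gamma(s)}}
  \;\le\; 2\,\norm{\ket{\gamma(s)}}\sum_{r\ge N_0}\frac{(\norm{H}u)^{r}}{r!},
\]
using $\norm{\tilde H}=\norm{PHP}\le\norm{H}$. Moreover $\norm{\ket{\gamma(s)}}\le 1+\delta$, since $e^{-iWs}$ is unitary and $\norm{\ket{\epsilon(s)}}\le\delta$.

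The catch is that this tail grows like $e^{\norm{H}u}$, which is useless over the full time $t$. The remedy, which is the conceptual heart of the argument, is to apply the estimate only over short intervals. Assuming $W=\tilde H$ (the case $W=H$ is identical with $H$ and $\tilde H$ interchanged), I would set $\tau=t/K$ and telescope
\[
  e^{-iHt}-e^{-i\tilde Ht}
   = \sum_{\ell=0}^{K-1}\bigl(e^{-iH\tau}\bigr)^{K-1-\ell}
       \bigl(e^{-iH\tau}-e^{-i\tilde H\tau}\bigr)\bigl(e^{-i\tilde H\tau}\bigr)^{\ell}.
\]
Because the left factors are unitary, taking norms yields $\norm{(e^{-iHt}-e^{-i\tilde Ht})\ket\Phi}\le\sum_{\ell}\norm{(e^{-iH\tau}-e^{-i\tilde H\tau})\,e^{-i\tilde H\ell\tau}\ket\Phi}$. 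Into each term I insert $e^{-i\tilde H\ell\tau}\ket\Phi=\ket{\gamma(\ell\tau)}+\ket{\epsilon(\ell\tau)}$: the $\epsilon$ piece contributes at most $2\delta$ (the difference of two unitaries has norm $\le2$), while the $\gamma$ piece is controlled by the short-time tail bound above with $u=\tau$.

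Finally I would choose $K=\lceil 2e\norm{H}t/N_0\rceil$, so that $\norm{H}\tau\le N_0/(2e)$. A Stirling estimate ($N_0!\ge\sqrt{2\pi N_0}\,(N_0/e)^{N_0}$) then gives $\sum_{r\ge N_0}(\norm{H}\tau)^{r}/r!\le 2^{-N_0}$, whence each telescoping term is at most $2\bigl(\delta+2^{-N_0}(1+\delta)\bigr)$; summing over the $K$ terms and using $2K\le 4e\norm{H}t/N_0+2$ gives exactly the claimed bound. I expect the main obstacle to be recognizing that a single Duhamel/Taylor step is inadequate—it produces the exponential tail $e^{\norm{H}t}$—and that chopping time into $\Theta(\norm{H}t/N_0)$ short steps is what trades this exponential for the linear prefactor $\propto\norm{H}t/N_0$; the remaining effort is the bookkeeping that makes the constants land on $4e$ and $2^{-N_0}$ precisely.
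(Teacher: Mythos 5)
Your proof is correct and follows essentially the same route as the paper's: a short-time Taylor-tail estimate exploiting $\tilde H^{\,r}\ket{\gamma}=H^{r}\ket{\gamma}$ for $r<N_0$, a telescoping over $\Theta(\norm{H}t/N_0)$ time slices with the $\epsilon$-part absorbed by unitarity of the difference of evolutions, and a choice of slice count that turns the tail into $2^{-N_0}$. The only cosmetic difference is that you bound the factorial tail via Stirling at $\norm{H}\tau\leq N_0/(2e)$, whereas the paper optimizes $c^{-N_0}e^{c\norm{H}\tau}$ over $c$ and then sets $\eta=2$; both yield the identical final constants.
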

\subsection{Truncating the semi-infinite paths}\label{sec:truncating}

\lem{trunc} lets us prove analogs of \thm{singlepart} and \thm{twopart} when the graphs involved have been truncated to have finitely many vertices.

For example, consider the case where $H=H_G^{(1)}$ is the Hamiltonian \eq{single_particle_ham} for a single particle on a graph of the form shown in \fig{graph}. Let $G(K)$ be the finite graph obtained from $G$ by truncating each of the paths to have a total length $K=\Omega(L)$ (so that the endpoints of the paths are labeled $(K,j)$ for $j\in\{1,\ldots,N\}$), and choose $\tilde{H}=H_{G(K)}^{(1)}$. Let the subspace $\K$ be spanned by basis states corresponding to vertices in $G(K)$.  Let $|\Phi\rangle=|\psi^j(0)\rangle$ be the same initial state as in \thm{singlepart}. We choose the evolution time $T$ so that for $0\leq t\leq T$, the time-evolved state remains far from the vertices labeled $(K,j)$ (for each $j\in\{1,\ldots,N\}$), and thus far from the effect of truncating the paths.  More precisely, we choose $T=\O(L)$ so that, for times $0\leq t\leq T$, the state $|\alpha^j (t)\rangle$ from \thm{singlepart} has no amplitude on vertices within a distance $N_0=\Omega(L)$ from the endpoints of the paths. For such times $t$ we have
\[
\left(1-P\right)H^r |\alpha^j \left(t\right)\rangle=0 \text{ for all } 0\leq  r < N_0.
\]
This allows us to use the lemma with $W=H=H_G^{(1)}$, $|\gamma(t)\rangle=|\alpha^j(t)\rangle$, and the bound $\delta=\O(L^{-{1}/{4}})$ from \thm{singlepart}. The truncation lemma then says that, for $0\leq t\leq T$,
\begin{equation}
\left\Vert \left(e^{-iH_{G}^{(1)}t}-e^{-iH_{G(K)}^{(1)}t}\right)|\psi^{j}(0)\rangle \right\Vert=\O(L^{-{1}/{4}})
\label{eq:trunc_paths}
\end{equation}
so, for $0\leq t\leq T$,
\[
\left\Vert e^{-iH_{G(K)}^{(1)}t}|\psi^{j}(0)\rangle-|\alpha^{j}(t)\rangle \right\Vert =\O(L^{-{1}/{4}}).
\]
In other words, for small enough evolution times, the conclusion of \thm{singlepart} still holds if we replace the full Hamiltonian $H^{(1)}_G$ with the truncated Hamiltonian $H_{G(K)}^{(1)}$.

We can also use the truncation lemma to extend \thm{twopart} to the case where the infinite path has been truncated to a finite path.  Let $W=H=H^{(2)}$  be the two-particle Hamiltonian \eq{twoham} and define the truncated Hamiltonian $\tilde{H}^{(2)}= P H^{(2)} P$, where
\begin{align*}
 P = \sum_{x,y = -K_1}^{K_2} \ket{x,y} \bra{x,y}
\end{align*}
with $K_1, K_2=\Omega(L)$. We take $|\Phi\rangle =|\psi(0)\rangle$ as in \thm{twopart} and choose the evolution time $T=\O(L)$ so that, for all times $0\leq t\leq T$, the state $|\alpha (t)\rangle$ has no amplitude on states where either particle is located within a distance $N_0=\Omega(L)$ from the endpoints of the truncated path. With these choices, and letting $\ket{\gamma(t)} = \ket{\alpha(t)}$ and $\delta = \O(L^{-1/4})$, we get that
\[
\left\Vert e^{-i\tilde{H}^{(2)}t}|\psi(0)\rangle-|\alpha(t)\rangle\right\Vert =\O(L^{-{1}/{4}})
\]
for $0\leq t\leq T$, which is the conclusion of \thm{twopart} but now applied to the truncated Hamiltonian $\tilde{H}^{(2)}$.

\subsection{Wave packet propagation on more complicated graphs}\label{sec:more_complicated_graphs}

The results of the previous section apply to wave packet scattering on finite graphs with long paths attached. The intuition behind these results is that, as long as the wave packets never get close to the ends of the paths, it does not matter whether we evolve using the Hamiltonian for the finite or the infinite graph. We now take this argument a step further. We expect that attaching another finite graph to the ends of the truncated paths will not substantially alter the time evolution as long as the wave packets never get close to the ends of the paths.

We can use the truncation lemma a second time to make this intuition precise. For example, consider the case of a single-particle wave packet scattering on a graph $G(K)$ of the form described in the previous section. Let $|\Phi\rangle=|\psi^j (0)\rangle$ be the initial state of the particle as defined in \thm{singlepart}, and choose the time $T=\O(L)$ so that for all $0\leq t\leq T$, $|\alpha^j(t)\rangle$ has no amplitude on vertices within a distance $N_0=\Omega(L)$ from the endpoints of the paths. We have already shown in the previous section that under these conditions the truncated Hamiltonian $H_{G(K)}^{(1)}$ generates approximately the same time evolution as the infinite Hamiltonian $H_G^{(1)}$ (up to an error term that is  $\O(L^{-{1}/{4}})$).  Let $G^{\ast}$ be a graph obtained from $G(K)$ by attaching a finite graph to the endpoint vertices of $G(K)$ (the vertices labeled $(K,1),\ldots,(K,N)$).

Now apply the truncation lemma using $W=\tilde{H}=H_{G(K)}^{(1)}$ and $H=H_{G^\ast}^{(1)}$, again letting $P$ project onto $G(K)$.  With $\ket{\gamma(t)}=\ket{\alpha^j(t)}$ and the bound $\delta =\O(L^{-{1}/{4}})$ from the previous section, the truncation lemma gives (for $0\leq t \leq T$)
\[
\left\Vert e^{-iH_{G(K)}^{(1)}t}|\psi^{j}(0)\rangle-e^{-iH_{G^\ast}^{(1)}t}|\psi^{j}(0)\rangle \right\Vert =\O\left( \left\Vert H_{G^\ast}^{(1)}\right\Vert L^{-{1}/{4}}\right)
\]
and hence
\[
\left\Vert e^{-iH_{G^\ast}^{(1)}t}|\psi^{j}(0)\rangle-|\alpha^{j}(t)\rangle \right\Vert =\O\left( \left\Vert H_{G^\ast}^{(1)}\right\Vert L^{-{1}/{4}}\right).
\]
We see that the evolution of a wave packet for small times depends only on a portion of the graph (up to error terms bounded as above). This allows us to analyze a single-particle wave packet transmitting through a complicated graph consisting of subgraphs $G_1(K)$ and $G_2(K)$ that overlap on long paths (e.g., the graph associated with the upper encoded qubit in \fig{CombiningBlocks}).  For the purpose of analysis, we divide the total evolution time into intervals and evolve the state according to the Hamiltonian of a specified subgraph during each interval. Later we will see that this approach can even be used to approximate the time evolution of a system of more than one particle propagating through a graph.

In \sec{1qubit_calc} we describe in full detail the finite graph implementing a single-qubit gate on one logical or mediator qubit. Then in \sec{2qubit_calc} we perform a similar calculation for the two-qubit $\CD$ gate. Finally, in \sec{block_by_block} we show how to combine these bounds to analyze a graph corresponding to an entire quantum circuit.

\subsection{Single-qubit gates} \label{sec:1qubit_calc}

Let us now apply the results of \sec{truncating} to scattering on a graph implementing a single-qubit gate in our encoding. To approximate a single-qubit gate $U$ on either a computational or the mediator qubit, we use a graph $G(K)$ of the form shown in \fig{single_qubit_graph}, with 4 paths of length $K$ extending outward from a subgraph $\widehat{G}$. We consider scattering at momentum $k=-{\pi}/{2}$ (in which case $U$ is either the identity or the Hadamard gate, up to a global phase) or $k=-{\pi}/{4}$ (in which case $U$ is either the phase gate, the identity gate, or the basis-changing gate, up to a global phase). We treat both choices for the momentum $k$ in this section. By construction, the subgraph $\widehat{G}$ has a $4\times4$ S-matrix of the form \eq{S_matrix_circuit} at momentum $k$, with lower left submatrix equal to $U$. In this section we prove results about the single-particle evolution generated by the Hamiltonian $H_{G(K)}^{(1)}$. We apply these results in \sec{block_by_block} where we analyze the full $(n+1)$-particle system.

By choosing $K$ to depend on the momentum $k$, we can compensate for differing particle speeds and have the scattering process occur over a fixed amount of time, $t_{\mathrm{I}}={3L}/{2}$. As discussed in \sec{description}, we choose $K(k)=2M(k)+L$ where 
\begin{align*}
M(-{\pi}/{2}) & =  L &
M(-{\pi}/{4}) & =  \left\lceil \left(\frac{3\sqrt{2} - 2}{4}\right)L\right\rceil .
\end{align*}
 
A logical input state $a|0\rangle+b|1\rangle$ is encoded using single-particle
states $|0_{\text{\text{in}}}\rangle$ and $|1_{\text{in}}\rangle$
that only have support on the top left path and bottom left
path, respectively. The encoded states differ depending on whether
$k=-{\pi}/{2}$ or $-{\pi}/{4}$, and are defined as
\begin{align*}
|0_{\text{in}}\rangle & =  \frac{1}{\sqrt{L}}\sum_{x=M(k)+1}^{M(k)+L}e^{-ikx}|x,1\rangle &
|1_{\text{in}}\rangle & =  \frac{1}{\sqrt{L}}\sum_{x=M(k)+1}^{M(k)+L}e^{-ikx}|x,2\rangle.
\end{align*}
Starting at $t=0$ from the superposition 
\[
|\psi(0)\rangle=a|0_{\text{\text{in}}}\rangle+b|1_{\text{\text{in}}}\rangle,
\]
the computation proceeds by evolving $\ket{\psi(0)}$ with the time-independent Hamiltonian $H_{G(K)}^{(1)}$, corresponding to a quantum walk on $G(K)$. Here $K=K(k)$ but we leave the argument implicit. After time $t_{\mathrm{I}}$, the state is
\[
|\psi_{K}(t_{\mathrm{I}})\rangle=e^{-iH_{G(K)}^{(1)} t_{\mathrm{I}}}|\psi(0)\rangle.
\]
The state $|\psi_{K}(t_{\mathrm{I}})\rangle$ is approximated
by 
\begin{equation}
|\psi_{\text{out}}\rangle=\left(U_{00}a+U_{01}b\right)|0_{\text{out}}\rangle+\left(U_{10}a+U_{11}b\right)|1_{\text{out}}\rangle
\label{eq:time_evolved}
\end{equation}
where $|0_{\text{out}}\rangle$ and $|1_{\text{out}}\rangle$ are defined as
\begin{align*}
|0_{\text{out}}\rangle & =  e^{-2it_{\mathrm{I}}\cos k}\frac{1}{\sqrt{L}}\sum_{x=M(k)+1}^{M(k)+L}e^{ikx}|x,3\rangle &
|1_{\text{out}}\rangle & =  e^{-2it_{\mathrm{I}}\cos k}\frac{1}{\sqrt{L}}\sum_{x=M(k)+1}^{M(k)+L}e^{ikx}|x,4\rangle.
\end{align*}
Using the results of \sec{truncating}, the error in approximating $|\psi_{K}(t_{\mathrm{I}})\rangle$
by $|\psi_{\text{out}}\rangle$ goes to zero polynomially quickly
as $L$ grows: 
\begin{equation}
\left\Vert |\psi_{K}(t_{\mathrm{I}})\rangle-|\psi_{\text{out}}\rangle\right\Vert =\O(L^{-{1}/{4}}).
\label{eq:error_single_qubit}
\end{equation}
The effect of evolving the input state $|\psi(0)\rangle$ for time $t_{\mathrm{I}}$ is depicted in \fig{1cartoon}.

\begin{figure}
\centering
\capstart
\begin{tikzpicture}[scale = 1.2,
    thin,
    attach/.style={circle,fill=white,draw=black,
      inner sep=1.25pt,minimum size=0pt},
    vert/.style={circle,draw=black,fill=black,
      inner sep=1.25pt,minimum size=0pt},
    attach/.style={circle,draw=black,fill=white,
      inner sep=1.25pt,minimum size=0pt},
    dots/.style={circle,fill=black,
      inner sep=.5pt,minimum size=0pt}]

   \draw (4.75,.35) circle (.83) node {$\widehat{G}$};

   \foreach \y in {0,.7}{
   \begin{scope}[yshift = \y cm]
   \draw (0,0) -- (1.15,0);
   \draw[densely dotted] (1,0) -- (1.32,0);
   \draw[densely dotted] (1.68,0) -- (2,0);
   \draw (1.85,0) -- (4,0);
   \begin{scope}[xshift=-.5cm]
   \draw (6,0) -- (8.15,0);
   \draw[densely dotted] (8,0) -- (8.32,0);
   \draw[densely dotted] (8.68,0) -- (9,0);
   \draw (8.85,0) -- (10,0);
   \end{scope}
   \foreach \x in {1,2,3,6.5,7.5,8.5}{
   \node at (\x, 0) [vert] {};}

   \node at (4,0) [attach] {};
   \node at (5.5,0) [attach]{};
   \node at (0,0) [attach] {};
   \node at (9.5,0) [attach] {};

   \end{scope}}

   \node[anchor=north] at (3.75,0) {(1,2)};
   \node[anchor=north] at (2.75,0) {(2,2)};
   \node[anchor=north] at (1.75,0) {(3,2)};

   \node[anchor=north] at (-.25,0) {($K$,2)};
   
   \node[anchor=south] at (3.75,.7) {(1,1)};
   \node[anchor=south] at (2.75,.7) {(2,1)};
   \node[anchor=south] at (1.75,.7) {(3,1)};

   \node[anchor=south] at (-.25,.7) {($K$,1)};
   
   \node[anchor=south] at (5.75,.7) {(1,3)};
   \node[anchor=south] at (6.75,.7) {(2,3)};
   \node[anchor=south] at (7.75,.7) {(3,3)};

   \node[anchor=south] at (9.75,.7) {($K$,3)};
      
   \node[anchor=north] at (5.75,0) {(1,4)};
   \node[anchor=north] at (6.75,0) {(2,4)};
   \node[anchor=north] at (7.75,0) {(3,4)};
;
   \node[anchor=north] at (9.75,0) {($K$,4)};
\end{tikzpicture}

\caption{A graph $G(K)$ used to perform a single-qubit gate on an encoded qubit. }
\label{fig:single_qubit_graph}
\end{figure}
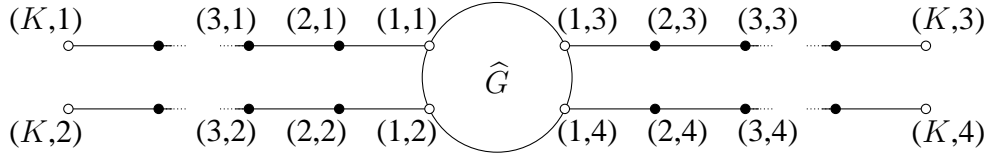

To prove equation \eq{error_single_qubit}, we write
\[
\left\Vert |\psi_{K}(t_{\mathrm{I}})\rangle-|\psi_{\text{out}}\rangle\right\Vert \leq\left\Vert |\psi_{K}(t_{\mathrm{I}})\rangle-|\psi_{\infty}(t_{\mathrm{I}})\rangle\right\Vert +\left\Vert |\psi_{\infty}(t_{\mathrm{I}})\rangle-|\psi_{\text{out}}\rangle\right\Vert .\]
We then apply the bounds
\begin{equation}
\left\Vert |\psi_{\infty}(t_{\mathrm{I}})\rangle-|\psi_{\text{out}}\rangle\right\Vert =\O(L^{-{1}/{4}})\label{eq:bound1}
\end{equation}
and
\begin{equation}
\left\Vert |\psi_{K}(t_{\mathrm{I}})\rangle-|\psi_{\infty}(t_{\mathrm{I}})\rangle\right\Vert =\O(L^{-{1}/{4}}).\label{eq:bound2}
\end{equation}
Equation \eq{bound1} follows from \thm{singlepart}.
Applying the theorem with $T=t_{\mathrm{I}}$ and $M=M(k)$, we find
\begin{equation}
\left\Vert |\psi_{\infty}(t_{\mathrm{I}})\rangle-a|\alpha^{0}(t_{\mathrm{I}})\rangle-b|\alpha^{1}(t_{\mathrm{I}})\rangle\right\Vert =\O(L^{-{1}/{4}}),\label{eq:bound_from_thm}
\end{equation}
where  
\begin{align*}
|\alpha^{0}(t_{\mathrm{I}})\rangle & =  \frac{1}{\sqrt{L}}e^{-2it_{\mathrm{I}}\cos k}\sum_{x = -\lfloor 2t_{\mathrm{I}} \sin k \rfloor-M(k)-L}^{-\lfloor 2t_{\mathrm{I}} \sin k \rfloor-M(k)-1}e^{ikx}\left(U_{00}|x,3\rangle+U_{10}|x,4\rangle\right) \\
& = \frac{1}{\sqrt{L}}e^{-2it_{\mathrm{I}}\cos k}\sum_{x=M(k)+1}^{M(k)+L}e^{ikx}\left(U_{00}|x,3\rangle+U_{10}|x,4\rangle\right) + \O(L^{-1/2})
\end{align*}
and similarly
\begin{align*}
|\alpha^{1}(t_{\mathrm{I}})\rangle & = \frac{1}{\sqrt{L}}e^{-2it_{\mathrm{I}}\cos k}\sum_{x=M(k)+1}^{M(k)+L}e^{ikx}\left(U_{01}|x,3\rangle+U_{11}|x,4\rangle\right) + \O(L^{-1/2})
\end{align*}
where the $\O(L^{-{1}/{2}})$ error terms arise from approximating the upper and lower limits of the summation. Comparing with equation \eq{time_evolved}, we see that \eq{bound_from_thm} implies \eq{bound1}.

The bound \eq{bound2} follows from equation \eq{trunc_paths} (the truncation lemma with $|\Phi\rangle  = |\psi(0)\rangle$, $H = H_{G(\infty)}^{(1)}$, $\tilde{H}  =  H_{G(K)}^{(1)}$, $W =  H$, $\delta = \O(L^{-{1}/{4}})$, $N_0 =  M(k)$, and $T = t_{\mathrm{I}}$).

\subsection{A two-qubit gate}\label{sec:2qubit_calc}

The $\CD$ gate is implemented using the graph shown in \fig{Graph-used-to-1}. In this section we specify the logical input states, the logical output states, the distances $X$, $Z$, and $W$ appearing in the figure, and the total evolution time. With these choices, we show that a $\CD$ gate is applied to the logical states at the end of the time evolution under the quantum walk Hamiltonian (up to error terms that are $\O(L^{-{1}/{4}})$). The results of this section pertain to the two-particle Hamiltonian $H^{(2)}_{G'}$ for the graph $G'$ shown in \fig{Graph-used-to-1}.

The logical input states are
\begin{equation*}
|0_{\text{in}}\rangle^c=\frac{1}{\sqrt{L}}\sum_{x=M(-\frac{\pi}{4})+1}^{M(-\frac{\pi}{4})+L}e^{-i\frac{\pi}{4}x}|x,1\rangle \qquad |1_\text{in}\rangle^c=\frac{1}{\sqrt{L}}\sum_{x=M(-\frac{\pi}{4})+1}^{M(-\frac{\pi}{4})+L}e^{-i\frac{\pi}{4}x}|x,2\rangle
\end{equation*}
for the computational qubit and
\begin{equation*}
|0_{\text{in}}\rangle^\med=\frac{1}{\sqrt{L}}\sum_{y=M(-\frac{\pi}{2})+1}^{M(-\frac{\pi}{2})+L}e^{-i\frac{\pi}{2}y}|y,4\rangle \qquad |1_\text{in}\rangle^\med=\frac{1}{\sqrt{L}}\sum_{y=M(-\frac{\pi}{2})+1}^{M(-\frac{\pi}{2})+L}e^{-i\frac{\pi}{2}y}|y,3\rangle
\end{equation*}
 for the mediator qubit. We define symmetrized (or antisymmetrized) logical input states for $a,b\in\{0,1\}$ as
\begin{align*}
|a b_{\text{in}}\rangle^{c,\med} &=\text{Sym}(|a_{\text{in}}\rangle^c |b_{\text{in}}\rangle^\med )\\
& =\frac{1}{\sqrt{2}} \left(|a_{\text{in}}\rangle^c |b_{\text{in}}\rangle^\med \pm |b_{\text{in}}\rangle^\med|a_{\text{in}}\rangle^c\right).
\end{align*}

\begin{figure}
\centering
\capstart
\begin{tikzpicture} [scale=1.5,
	vert/.style={circle, draw=black, fill=black,inner sep=1pt, minimum width=0pt},
	dots/.style={circle, fill=black,inner sep=1pt, minimum width=0pt},
	switch/.style={circle,draw=black,inner sep=6pt,minimum width=0pt},
	attach/.style={circle,fill=white, draw=black,inner sep=1pt, minimum width=0pt}]

  \foreach \y in {0,0.5,3,3.5}{
  \begin{scope}[yshift=\y cm]
  \foreach \x in {0,1,2,6,7,8}{
	\begin{scope}[xshift = \x cm]
	  \node at (0,0) [vert]{};
	\end{scope}}
	\draw (0,0) -- (2,0);
	\draw (6,0) -- (8,0);
	\draw[densely dotted] (2,0)--(2.2,0);
	\draw[densely dotted] (5.8,0)--(6,0);
	\node at (0,0) [attach] {};
	\node at (8,0) [attach] {};
 \end{scope}}

  \node (switch2) at (4,0.5)[switch]{};
  \node (switch1) at (4,3)[switch]{};
  
  \node at (3,0.5) [vert]{};
  \node at (3,3) [vert]{};
  \node at (5,0.5) [vert]{};
  \node at (5,3) [vert]{};
  
  \draw (switch2.west) -- (3,0.5);
  \draw[densely dotted] (3,0.5) -- (2.8,0.5);
  \draw (switch2.east) -- (5,0.5);
  \draw[densely dotted] (5,0.5) -- (5.2,0.5);
  
  \draw (switch1.west) -- (3,3);
  \draw[densely dotted] (3,3) -- (2.8,3);
  \draw (switch1.east) -- (5,3);
  \draw[densely dotted] (5,3) -- (5.2,3);
  
  \node at (3.5,0) [vert]{};
  \node at (4.5,0) [vert]{};
  \node at (3.5,3.5) [vert]{};
  \node at (4.5,3.5) [vert]{};
  
  \draw (3.5,0) -- (4.5,0);
  \draw (3.5,3.5) -- (4.5,3.5);
  \draw[densely dotted] (3.3,0) -- (4.7,0);
  \draw[densely dotted] (3.3,3.5) -- (4.7,3.5);

  \node at (4,1.25) [vert] {};
  \node at (4,2.25) [vert] {};
  
  \draw (switch2.north) -- (4,1.25);
  \draw[densely dotted] (4,1.25) -- (4,1.45);
  
  \draw (switch1.south) -- (4,2.25);
  \draw[densely dotted] (4,2.25)-- (4,2.05);
  
  \draw[line width=.7pt] (switch1.west) to[out=0,in=90] (switch1.south);
  \draw[line width=2.1pt] (switch1.east) to[out=180,in=90] (switch1.south);
  \draw[line width=.7pt,white] (switch1.east) to[out=180,in=90] (switch1.south);

  \draw[line width=.7pt] (switch2.east) to[out=180,in=-90] (switch2.north);
  \draw[line width=2.1pt] (switch2.west) to[out=0,in=-90] (switch2.north);
  \draw[line width=.7pt,white] (switch2.west) to[out=0,in=-90] (switch2.north);
  
  \node at (switch2.north) [attach] {};
  \node at (switch1.south) [attach] {};
  \node at (switch2.east) [attach] {};
  \node at (switch1.east) [attach] {};
  \node at (switch2.west) [attach] {};
  \node at (switch1.west) [attach] {};
  
  \node at (0,0) [below] {(1,4)};
  \node at (1,0) [below] {(2,4)};
  \node at (2,0) [below] {(3,4)};
  \node at (3.5,0) [below] {};
  \node at (4.5,0) [below] {};
  \node at (6,0) [below] {($2X\!+\! Z\! + \! 4$,4)};
  \node at (8,0) [below] {($2X\! +\! Z\! +\! 6$,4)};
  
  \node at (0,3.5) [above] {(1,1)};
  \node at (1,3.5) [above] {(2,1)};
  \node at (2,3.5) [above] {(3,1)};
  \node at (3.5,3.5) [above] {};
  \node at (4.5,3.5) [above] {};
  \node at (6,3.5) [above] {($2W\! + \! Z\! + \! 2$,1)};
  \node at (8,3.5) [above] {($2W\! + \! Z\! + \! 4$,1)};
  
  \node at (0,3) [below] {(1,2)};
  \node at (1,3) [below] {(2,2)};
  \node at (2,3) [below] {(3,2)};
  \node at (3,3) [below] {($W\!-\! 1$,2)};
  \node[anchor=south east] at (3.87,3) {($W$,2)};
  
  \node at (0,.5) [above] {(1,3)};
  \node at (1,.5) [above] {(2,3)};
  \node at (2,.5) [above] {(3,3)};
  \node at (3,.5) [above] {($X\! -\! 1$,3)};
  \node at (3.87,.5) [below left] {($X$,3)};
  
  \node at (8,.5) [above] {($2W\! + \! Z\! + \! 4$,2)};
  \node at (6,.5) [above] {($2W\! + \! Z\! + \! 2$,2)};
  \node at (4.13,.5) [below right] {($W\! + \! Z\! + \! 5$,2)};  
  
  \node at (8,3) [below] {($2X\! + \! Z\! + \! 6$,3)};
  \node at (6,3) [below] {($2X\! + \! Z\! + \! 4$,3)};
  \node at (4.13,3) [above right] {($X\! + \! Z\! + \! 7$,3)};
  
  \node at (4,2.87) [below right] {(1,5)};
  \node at (4,2.25) [below right] {(2,5)};
  
  \node at (4,.63) [above right] {($Z$,5)};
  \node at (4,1.25) [above right] {($Z\! -\! 1$,5)};
  
  \node at (-.4,0) [left] {$0_{\med,\text{in}}$};
  \node at (-.4,.5) [left] {$1_{\med,\text{in}}$};
  \node at (-.4,3) [left] {$1_{c,\text{in}}$};
  \node at (-.4,3.5) [left] {$0_{c,\text{in}}$};
  
  \node at (8.7,0) [right] {$0_{\med,\text{out}}$};
  \node at (8.7,.5) [right] {$1_{c,\text{out}}$};
  \node at (8.7,3) [right] {$1_{\med,\text{out}}$};
  \node at (8.7,3.5) [right] {$0_{c,\text{out}}$};
\end{tikzpicture}
\caption{Graph $G'$ used to implement the $\CD$ gate. The integers $Z$, $X$, and $W$ are specified in equations \eq{Z_eq}, \eq{X_eq}, and \eq{W_eq}, respectively.}
\label{fig:Graph-used-to-1}
\end{figure}
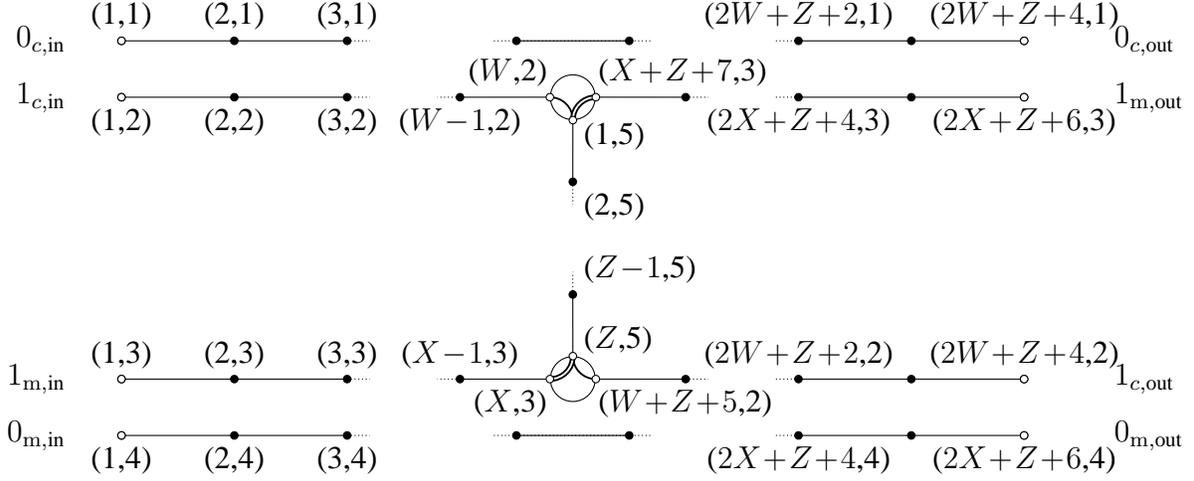

We choose the distances $Z$, $X$, and $W$ from \fig{Graph-used-to-1}
to be \begin{align}
Z & = 4L \label{eq:Z_eq} \\
X & = d_{2}+L+M\left(-\frac{\pi}{2}\right) \label{eq:X_eq}\\
W & = d_{1}+L+M\left(-\frac{\pi}{4}\right) \label{eq:W_eq}
\end{align}
where
\begin{align*}
d_{1} & = M\left(-\frac{\pi}{4}\right) \\
d_{2} & = \left\lceil \frac{5L+2d_{1}}{\sqrt{2}}-\frac{5}{2}L\right\rceil. \end{align*}
With these choices, a wave packet moving with speed $\sqrt{2}$ travels
a distance $Z+2d_{1}+L=5L+2d_{1}$ in approximately the same time that
a wave packet moving with speed $2$ takes to travel a distance $Z+2d_{2}+L=5L+2d_{2}$,
since
\[
t_{\mathrm{II}}=\frac{5L+2d_{1}}{\sqrt{2}}\approx\frac{5L+2d_{2}}{2}.
\]

We claim that the logical input states evolve into logical output states (defined below) with a phase of $e^{i\theta}$ applied in the case where both particles are in the logical state $1$.  Specifically,
\begin{align}
\left\Vert e^{-iH_{G'}^{(2)}t_{\mathrm{II}}}|00_{\text{in}}\rangle^{c,\med}-|00_{\text{out}}\rangle^{c,\med}\right\Vert  & = \O(L^{-{1}/{4}})\label{eq:bound00}\\
\left\Vert e^{-iH_{G'}^{(2)}t_{\mathrm{II}}}|01_{\text{in}}\rangle^{c,\med}-|01_{\text{out}}\rangle^{c,\med}\right\Vert  & = \O(L^{-{1}/{4}})\label{eq:bound01}\\
\left\Vert e^{-iH_{G'}^{(2)}t_{\mathrm{II}}}|10_{\text{in}}\rangle^{c,\med}-|10_{\text{out}}\rangle^{c,\med}\right\Vert  & = \O(L^{-{1}/{4}})\label{eq:bound10}\\
\left\Vert e^{-iH_{G'}^{(2)}t_{\mathrm{II}}}|11_{\text{in}}\rangle^{c,\med} - e^{i\theta}|11_{\text{out}}\rangle^{c,\med}\right\Vert  & = \O(L^{-{1}/{4}})\label{eq:bound11}
\end{align}
 where, letting $Q_{1}=2W+Z+4-M\left(-{\pi}/{4}\right)-L$ and $Q_{2}=2X+Z+6-M\left(-{\pi}/{2}\right)-L$,
\begin{align*}
|0_\text{out}\rangle^c &=\frac{e^{-it_{\mathrm{II}}\sqrt{2}}}{\sqrt{L}}\sum_{x=Q_{1}+1}^{Q_{1}+L}e^{-i\frac{\pi}{4}x}|x,1\rangle &
|1_\text{out}\rangle^c &=\frac{e^{-it_{\mathrm{II}}\sqrt{2}}}{\sqrt{L}}\sum_{x=Q_{1}+1}^{Q_{1}+L}e^{-i\frac{\pi}{4}x}|x,2\rangle \\
|0_\text{out}\rangle^\med &=\frac{1}{\sqrt{L}}\sum_{y=Q_{2}+1}^{Q_{2}+L}e^{-i\frac{\pi}{2}y}|y,4\rangle &|1_\text{out}\rangle^\med &=\frac{1}{\sqrt{L}}\sum_{y=Q_{2}+1}^{Q_{2}+L}e^{-i\frac{\pi}{2}y}|y,3\rangle
\end{align*}
and $|a b_\text{out}\rangle^{c,\med}=\text{Sym}\left(|a_{\text{out}}\rangle^c|b_{\text{out}}\rangle^\med\right)$. 

Note that the input states are wave packets located a distance $M(k)$ from the ends of the input paths on the left-hand side of the graph in \fig{Graph-used-to-1}. Similarly, the output logical states are wave packets located a distance $M(k)$ from the ends of the output paths on the right-hand side.

The first three bounds \eq{bound00}, \eq{bound01}, and \eq{bound10} are relatively easy to show, since in each case the two particles are supported on disconnected subgraphs and therefore do not interact. In each of these three cases we can simply analyze the propagation of the one-particle starting states through the graph. The symmetrized (or antisymmetrized) starting state then evolves into the symmetrized (or antisymmetrized) tensor product of the two output states.

For example, with input state $|00_{\text{in}}\rangle^{c,\med}$, the evolution of the particle with momentum $-{\pi}/{4}$ occurs only on the top path and the evolution of the particle with momentum $-{\pi}/{2}$ occurs only on the bottom path. Starting from the initial state $|0_\text{in}\rangle^c$ and evolving for time $t_{\mathrm{II}}$ with the single-particle Hamiltonian for the top path, we obtain the final state
\[
|0_\text{out}\rangle^c+\O(L^{-{1}/{4}})
\]
using the method of \sec{truncating}. Similarly, starting from the initial state $|0_\text{in}\rangle^\med$ and evolving for time $t_{\mathrm{II}}$ with the single-particle Hamiltonian
for the bottom path of the graph we obtain the final state
\[
|0_\text{out}\rangle^{\med}+\O(L^{-{1}/{4}}).
\]
Putting these bounds together we get the bound \eq{bound00}.

In the case where the input state is $|10_{\text{in}}\rangle^{c,\med}$ (or $|01_{\text{in}}\rangle^{c,\med}$) the single-particle evolution for the particle with momentum $-{\pi}/{4}$ (or $-{\pi}/{2}$) is slightly more complicated, as in this case the particle moves through the momentum switches and the vertical path. The S-matrix of the momentum switch at the relevant momenta is given by equation \eq{switch_S}. At momentum $-{\pi}/{4}$, the momentum switch has the same S-matrix as a path with $4$ vertices (including the input and output vertices). At momentum $-{\pi}/{2}$, it has the same S-matrix as a path with $5$ vertices (including input and output vertices). Note that our labeling of vertices on the output paths (in \fig{Graph-used-to-1}) takes this into account. The first vertices on the output paths connected to the momentum switches are labeled $(X+Z+7,3)$ and $(W+Z+5,2)$, respectively, reflecting the fact that a particle with momentum $-{\pi}/{4}$ has traveled $W$ vertices on the input path, $Z$ vertices through the middle segment, and has effectively traveled an additional $4$ vertices inside the two switches. Similarly, a particle with momentum $-{\pi}/{2}$ effectively sees an additional $6$ vertices from the two momentum switches.

To get the bound \eq{bound10} we have to analyze the single-particle evolution
for the computational particle initialized in the state $|1_\text{in}\rangle^c$. 
We claim that, after time $t_{\mathrm{II}}$, the time-evolved state is
\[
|1_\text{out}\rangle^c+\O(L^{-{1}/{4}}).
\]
It is easy to see why this should be the case in light of our discussion above: when scattering at momentum $-{\pi}/{4}$, the graph in \fig{Graph-used-to-1} is equivalent to one where each momentum switch is replaced by a path with $2$ internal vertices connecting the relevant input/output vertices.

To make this precise, we use the method described in \sec{more_complicated_graphs} for analyzing scattering through sequences of overlapping graphs using the truncation lemma. Here we should choose subgraphs $G_{1}$ and $G_{2}$ of the graph $G'$ in \fig{Graph-used-to-1} that overlap on the vertical path but where each subgraph contains only one of the momentum switches. A convenient choice is to take $G_{1}$ to be the subgraph containing the top switch and the paths connected to it (the vertices $(1,2),\ldots,(W,2)$, $(1,5),\ldots,(Z,5)$ and $(X+Z+7,3),\ldots,(2X+Z+6,3)$). Similarly, choose $G_{2}$ to be the bottom switch along with the three paths connected to it. The graphs $G_{1}$ and $G_{2}$ both contain the vertices $(1,5),\ldots,(Z,5)$ along the vertical path. Break up the total evolution time into two intervals $[0,t_{\alpha}]$ and $[t_{\alpha},t_{\mathrm{II}}]$. Choose $t_{\alpha}$ so that the wave packet, evolved for this time with $H_{G_1}^{(1)}$, travels through the top switch and ends up a distance $\Theta(L)$ from each switch, partway along the vertical path (up to terms bounded as $\O(L^{-{1}/{4}})$, as in \sec{truncating}). With this choice, the single-particle evolution with the Hamiltonian for the full graph is approximated by the evolution with $H_{G_1}^{(1)}$ on this time interval (see \sec{more_complicated_graphs}). At time $t_\alpha$, the particle is outgoing with respect to scattering from the graph $G_1$, but incoming with respect to $G_2$. On the interval $[t_{\alpha},t_{\mathrm{II}}]$ the time evolution is approximated by evolving the state with $H_{G_2}^{(1)}$. During this time interval the particle travels through the bottom switch onto the final path, and at $t_{\mathrm{II}}$ is a distance $M(-{\pi}/{4})$ from the endpoint of the output path. Both switches have the same S-matrix (at momentum $-{\pi}/{4}$) as a path of length $4$, so this analysis gives the output state $|10_\text{out}\rangle^{c,\med}$ up to terms bounded as $\O(L^{-{1}/{4}})$, establishing \eq{bound10}. For the bound \eq{bound01}, we apply a similar analysis to the trajectory of the mediator particle.

The case where the input state is $|11_{\text{in}}\rangle^{c,\med}$ is more involved but proceeds similarly. In this case, to analyze the time evolution we divide the time interval $[0,t_{\mathrm{II}}]$ into three segments $[0,t_{A}]$, $[t_{A},t_{B}]$, and $[t_{B},t_{\mathrm{II}}]$. For each of these three time intervals we choose a subgraph $G_{A}$, $G_{B}$, $G_C$ of the graph $G'$ in \fig{Graph-used-to-1} and we approximate the time evolution by evolving with the Hamiltonian on the associated subgraph. We then use the truncation lemma to show that, on each time interval, the evolution generated by the Hamiltonian for the appropriate subgraph approximates the evolution generated by the full Hamiltonian, with error $\O(L^{-{1}/{4}})$. Up to these error terms, at times $t=0$, $t=t_A$, $t=t_B$, and $t=t_{\mathrm{II}}$ the time-evolved state 
\[
e^{-iH_{G'}^{(2)}t}|11_{\text{in}}\rangle^{c,\med}
\]
has both particles in square wave packet states, each with support only on $L$ vertices of the graph, as depicted in \fig{11_scattering_cartoon}.

We take $G_A$ to be the subgraph obtained from $G'$ by removing the vertices labeled $(\lceil 1.85L\rceil,5)\allowbreak, \ldots,\allowbreak (\lceil 1.90 L\rceil,5)$ in the vertical path. By removing this interval of consecutive vertices, we disconnect the graph into two components where the initial state $|11_{\text{in}}\rangle^{c,\med}$ has one particle in each component. This could be achieved by removing a single vertex, but instead we remove an interval of approximately $0.05L$ vertices to separate the components of $G_A$ by more than the interaction range $C$ (for sufficiently large $L$), simplifying our use of the truncation lemma.

 We choose $t_{A}={3L}/{2}$. Consider the time evolution of the initial state $|11_\text{in}\rangle^{c,\med}$ with the two-particle Hamiltonian $H_{G_A}^{(2)}$ for time $t_A$. The states $|1_ {\text{in}}\rangle^c$ and $|1_\text{in}\rangle^{\med}$ are supported on disconnected components of the graph $G_A$, so we can analyze the time evolution of the state $|11_\text{in}\rangle^{c,\med}$ under $H_{G_A}^{(2)}$ by analyzing two single-particle problems, using the results of \sec{truncating} for each particle. During the interval $[0,t_A]$,  each particle passes through one switch, ending up a distance $\Theta(L)$ from the switch that it passed through and $\Theta(L)$ from the vertices that have been removed, as shown in \fig{11_scattering_cartoon}(b) (with error at most $\O(L^{-{1}/{4}})$). Up to these error terms, the support of each particle remains at least $N_0=\Theta(L)$ vertices from the endpoints of the graph, so we can apply the truncation lemma using $H=H_{G'}^{(2)}$, $W=\tilde{H}=H_{G_A}^{(2)}$, $T=t_{\mathrm{A}}$, and $\delta=\O(L^{-{1}/{4}})$. Here $P$ is the projector onto states where both particles are located at vertices of $G_A$. We have $P H_{G'}^{(2)}P=H_{G_A}^{(2)}$ since the number of vertices in the removed segment is greater than the interaction range $C$. Applying the truncation lemma gives
\[
\left\Vert e^{-iH_{G_A}^{(2)}t_A}|11_\text{in}\rangle^{c,\med}-e^{-iH_{G'}^{(2)}t_A}|11_\text{in}\rangle^{c,\med}\right\Vert=\O(L^{-{1}/{4}}).
\]

We approximate the evolution on the interval $[t_A,t_B]$ using the two-particle  Hamiltonian $H_{G_B}^{(2)}$, where $G_B$  is the vertical path $(1,5),\ldots,(Z,5)$. Using the result of \sec{truncating}, we know that (up to terms bounded as $\O(L^{-{1}/{4}})$) the wave packets move with their respective speeds and acquire a phase of $e^{i\theta}$ as they pass each other. We choose $t_B={5L}/{2}$ so that during the evolution the wave packets have no support on vertices within a distance $\Theta(L)$ from the endpoints of the vertical segment where the graph has been truncated (again up to terms bounded as $\O(L^{-{1}/{4}})$). Using $H_{G_B}^{(2)}$ (rather than $H_{G'}^{(2)}$) to evolve the state on this interval, we incur errors bounded as $\O(L^{-{1}/{4}})$ (using the truncation lemma with $N_0=\Theta(L)$, $W=\tilde{H}=H_{G_B}^{(2)}$, $H=H_{G'}^{(2)}$, and $\delta=\O(L^{-{1}/{4}})$).

We choose $G_C=G_A$; in the final interval $[t_{B},t_{\mathrm{II}}]$ we evolve using the Hamiltonian  $H_{G_A}^{(2)}$ again, and we use the truncation lemma as we did for the first interval. The initial state is approximated by two wave packets supported on disconnected sections of $G_A$ and the evolution of this initial state reduces to two single-particle scattering problems. During the interval $[t_B,t_{\mathrm{II}}]$, each particle passes through a second switch, and at time $t_{\mathrm{II}}$ is a distance $M(k)$ from the end of the appropriate output path. 

Our analysis shows that for the input state $|11_\text{in}\rangle^{c,\med}$ the only effect of the interaction is to alter the global phase of the final state by a factor of $e^{i\theta}$ relative to the case where no interaction is present, up to error terms bounded as $\O(L^{-{1}/{4}})$. This establishes equation \eq{bound11}. In \fig{11_scattering_cartoon} we illustrate the movement of the two wave packets through the graph when the initial state is $|11_\text{in}\rangle^{c,\med}$.

\begin{figure}
\centering
\capstart
\begin{tikzpicture}[   scale=0.7,   
	dots/.style={circle,draw=black,fill=black,inner sep=0pt,minimum size=.5 mm}, 
	splitter/.style={circle,draw=black,fill=white,inner sep=0pt,minimum size=4mm},   
	attach/.style={circle,draw=black,fill=white,inner sep=0pt,minimum size=1mm}]
	
\begin{scope}[yshift= 6 cm]
  \node at (-.75,4) {(a)};

  \draw (0,0) to (10,0);   
  \draw (0,3) to (10,3);
  \foreach \x in {0,.2,.4,...,10}{
  \node at (\x, 0) [dots] {};
  \node at (\x, 3) [dots] {};
  }
  \foreach \y in {0,.2,.4,...,3}
  \node at (5,\y) [dots] {};
  
  \node (splitter42) at (5,0) [splitter] {};   
  \node (splitter41) at (5,3) [splitter] {};   
  \draw (splitter41.south) to (splitter42.north);      
  \draw (splitter41.west) to[out=0,in=90] (splitter41.south);   
  \draw[line width=1.2pt] (splitter41.south) to[out=90,in=180]           
  (splitter41.east);   
  \draw[line width=.4pt,white] (splitter41.south) to[out=90,in=180] (splitter41.east);
  \draw[line width=1.2pt] (splitter42.west) to[out=0,in=-90]           (splitter42.north);   
  \draw[line width=.4pt,white] (splitter42.west) to[out=0,in=-90]           (splitter42.north);   
  \draw (splitter42.north) to[out=-90,in=180] (splitter42.east);
  \node at (splitter41.east) [attach]{};
  \node at (splitter41.west) [attach]{};
  \node at (splitter41.south) [attach]{};
  \node at (splitter42.east) [attach]{};
  \node at (splitter42.west) [attach]{};
  \node at (splitter42.north) [attach]{};
  
  \node at (0,0) [attach]{};
  \node at (0,3) [attach]{};
  \node at (10,0) [attach]{};
  \node at (10,3) [attach]{};  

  \draw (1.05,3.15) to (1.25,3.15) to (1.25,3.6) to node[above]          
    {${\pi}/{4}\rightarrow$} (3.25,3.6) to (3.25,3.15) to (3.45,3.15);
  \draw (1.05,0.15) to (1.25,0.15) to (1.25,0.6) to node[above]          
    {${\pi}/{2}\rightarrow$} (3.25,0.6) to (3.25,0.15) to (3.45,0.15);
  \draw[|-|] (0,2.5) to node[below] {$M(-\frac{\pi}{4})$} (1.25,2.5);   
  \draw[|-|] (1.25,2.5) to node[below]{$L$}   (3.25,2.5);   
  \draw[|-|] (3.25,2.5) to node[below]{$d_1$} (4.75,2.5);
  \draw[|-|] (0,-.5) to node[below] {$M(-\frac{\pi}{2})$} (1.25,-.5);   
  \draw[|-|] (1.25,-.5) to node[below]{$L$}   (3.25,-.5);   
  \draw[|-|] (3.25,-.5) to node[below]{$d_2$} (4.75,-.5);
\end{scope}
  

\begin{scope}[xshift= 11.5cm, yshift=6cm]
  \node at (-.75,4) {(b)};

  \draw (0,0) to (10,0);   
  \draw (0,3) to (10,3);
  \foreach \x in {0,.2,.4,...,10}{
  \node at (\x, 0) [dots] {};
  \node at (\x, 3) [dots] {};
  }
  \foreach \y in {0,.2,.4,...,3}
  \node at (5,\y) [dots] {};
  
  \node (splitter42) at (5,0) [splitter] {};   
  \node (splitter41) at (5,3) [splitter] {};   
  \draw (splitter41.south) to (splitter42.north);      
  \draw (splitter41.west) to[out=0,in=90] (splitter41.south);   
  \draw[line width=1.2pt] (splitter41.south) to[out=90,in=180]           
  (splitter41.east);   
  \draw[line width=.4pt,white] (splitter41.south) to[out=90,in=180] (splitter41.east);
  \draw[line width=1.2pt] (splitter42.west) to[out=0,in=-90]           (splitter42.north);   
  \draw[line width=.4pt,white] (splitter42.west) to[out=0,in=-90]           (splitter42.north);   
  \draw (splitter42.north) to[out=-90,in=180] (splitter42.east);
  \node at (splitter41.east) [attach]{};
  \node at (splitter41.west) [attach]{};
  \node at (splitter41.south) [attach]{};
  \node at (splitter42.east) [attach]{};
  \node at (splitter42.west) [attach]{};
  \node at (splitter42.north) [attach]{};
  
  \node at (0,0) [attach]{};
  \node at (0,3) [attach]{};
  \node at (10,0) [attach]{};
  \node at (10,3) [attach]{};
  \begin{scope}[yshift = -10 cm]
    \draw (5.1,11.7) to (5.1,11.8) to (5.5,11.8) to node[right]          
      {$\frac{\pi}{4}\downarrow$} (5.5,12.4) to (5.1,12.4) to (5.1,12.5);
    \draw (4.9,11.3) to (4.9,11.2) to (4.5,11.2) to node[left]            
      {$\frac{\pi}{2}\uparrow$} (4.5,10.6) to (4.9,10.6) to (4.9,10.5);
    \draw[|-|] (7.95,12.85) to node[right] {$\approx 0.56L$} (7.95,12.4);   
    \draw[|-|] (7.35,12.4)  to node[right] {$L$}   (7.35,11.8);   
    \draw[|-|] (7.95,11.8)  to node[right] {$\approx 1.27L$} (7.95,11.2);   
    \draw[|-|] (7.35,11.2)  to node[right] {$L$}   (7.35,10.6);   
    \draw[|-|] (7.95,10.6)  to node[right] {$\approx 0.17L$} (7.95,10.15);
   \end{scope}  
\end{scope} 
  

  \node at (-.75,4) {(c)};

  \draw (0,0) to (10,0);   
  \draw (0,3) to (10,3);
  \foreach \x in {0,.2,.4,...,10}{
  \node at (\x, 0) [dots] {};
  \node at (\x, 3) [dots] {};
  }
  \foreach \y in {0,.2,.4,...,3}
  \node at (5,\y) [dots] {};
  
  \node (splitter42) at (5,0) [splitter] {};   
  \node (splitter41) at (5,3) [splitter] {};   
  \draw (splitter41.south) to (splitter42.north);      
  \draw (splitter41.west) to[out=0,in=90] (splitter41.south);   
  \draw[line width=1.2pt] (splitter41.south) to[out=90,in=180]           
  (splitter41.east);   
  \draw[line width=.4pt,white] (splitter41.south) to[out=90,in=180] (splitter41.east);
  \draw[line width=1.2pt] (splitter42.west) to[out=0,in=-90]           (splitter42.north);   
  \draw[line width=.4pt,white] (splitter42.west) to[out=0,in=-90]           (splitter42.north);   
  \draw (splitter42.north) to[out=-90,in=180] (splitter42.east);
  \node at (splitter41.east) [attach]{};
  \node at (splitter41.west) [attach]{};
  \node at (splitter41.south) [attach]{};
  \node at (splitter42.east) [attach]{};
  \node at (splitter42.west) [attach]{};
  \node at (splitter42.north) [attach]{};
 
  \node at (0,0) [attach]{};
  \node at (0,3) [attach]{};
  \node at (10,0) [attach]{};
  \node at (10,3) [attach]{};
  \begin{scope}[yshift = - 5cm]
  \draw (5.1,6.3) to (5.1,6.20) to (5.5,6.20) to node[right]            {$\frac{\pi}{4}\downarrow$} (5.5,5.6) to (5.1,5.6) to (5.1,5.5);
  \draw (4.9,6.7) to (4.9,6.8) to (4.5,6.8) to node[left]            {$\frac{\pi}{2}\uparrow$} (4.5,7.4) to (4.9,7.4) to (4.9,7.5);
  \draw[|-|] (7.95,7.85) to node[right] {$\approx 0.83L$} (7.95,7.4);   
  \draw[|-|] (7.35,7.4)  to node[right] {$L$}   (7.35,6.8);   
  \draw[|-|] (7.95,6.8)  to node[right] {$\approx 0.14L$} (7.95,6.2);   
  \draw[|-|] (7.35,6.2)  to node[right] {$L$}   (7.35,5.6);   
  \draw[|-|] (7.95,5.6)  to node[right] {$\approx 1.03L$} (7.95,5.15);
  \end{scope}
  
\begin{scope}[xshift=11.5cm]
  \node at (-.75,4) {(d)};

  \draw (0,0) to (10,0);   
  \draw (0,3) to (10,3);
  \foreach \x in {0,.2,.4,...,10}{
  \node at (\x, 0) [dots] {};
  \node at (\x, 3) [dots] {};
  }
  \foreach \y in {0,.2,.4,...,3}
  \node at (5,\y) [dots] {};
  
  \node (splitter42) at (5,0) [splitter] {};   
  \node (splitter41) at (5,3) [splitter] {};   
  \draw (splitter41.south) to (splitter42.north);      
  \draw (splitter41.west) to[out=0,in=90] (splitter41.south);   
  \draw[line width=1.2pt] (splitter41.south) to[out=90,in=180]           
  (splitter41.east);   
  \draw[line width=.4pt,white] (splitter41.south) to[out=90,in=180] (splitter41.east);
  \draw[line width=1.2pt] (splitter42.west) to[out=0,in=-90]           (splitter42.north);   
  \draw[line width=.4pt,white] (splitter42.west) to[out=0,in=-90]           (splitter42.north);   
  \draw (splitter42.north) to[out=-90,in=180] (splitter42.east);
  \node at (splitter41.east) [attach]{};
  \node at (splitter41.west) [attach]{};
  \node at (splitter41.south) [attach]{};
  \node at (splitter42.east) [attach]{};
  \node at (splitter42.west) [attach]{};
  \node at (splitter42.north) [attach]{};
 
  \node at (0,0) [attach]{};
  \node at (0,3) [attach]{};
  \node at (10,0) [attach]{};
  \node at (10,3) [attach]{};
  \draw (6.55,3.15) to (6.75,3.15) to (6.75,3.6) to node[above]            {${\pi}/{2}\rightarrow$} (8.75,3.6) to (8.75,3.15) to (8.95,3.15);
  \draw (6.55,.15) to (6.75,.15) to (6.75,.6) to node[above]            {${\pi}/{4}\rightarrow$} (8.75,.6)to (8.75,.15) to (8.95,.15);
  \begin{scope}[yshift=-.25cm]
  \draw[|-|] (5.25,2.75) to node[below] {$d_2$} (6.75,2.75);   
  \draw[|-|] (6.75,2.75) to node[below]{$L$}   (8.75,2.75);   
  \draw[|-|] (8.75,2.75) to node[below]{$M(-\frac{\pi}{2})$} (10,2.75);
  \draw[|-|] (5.25,-.25) to node[below] {$d_1$} (6.75,-.25);   
  \draw[|-|] (6.75,-.25) to node[below]{$L$}   (8.75,-.25);   
  \draw[|-|] (8.75,-.25) to node[below]{$M(-\frac{\pi}{4})$} (10,-.25);
  \end{scope}
\end{scope}
\end{tikzpicture}

\caption{This picture illustrates the scattering process for two wave packets that are incident on the input paths as shown in figure (a) at time $t=0$. Figure (b) shows the location of the two wave packets after a time $t_{A}={3L}/{2}$ and figure (c) shows the wave packets after a time $t_{B}=t_{A}+L$. After the particles pass one another they acquire an overall phase of $e^{i\theta}$. Figure (d) shows the final configuration of the wave packets after a total evolution time $t_{\mathrm{II}}={(Z+2d_{1}+L)}/{\sqrt{2}}$.}
\label{fig:11_scattering_cartoon}
\end{figure}
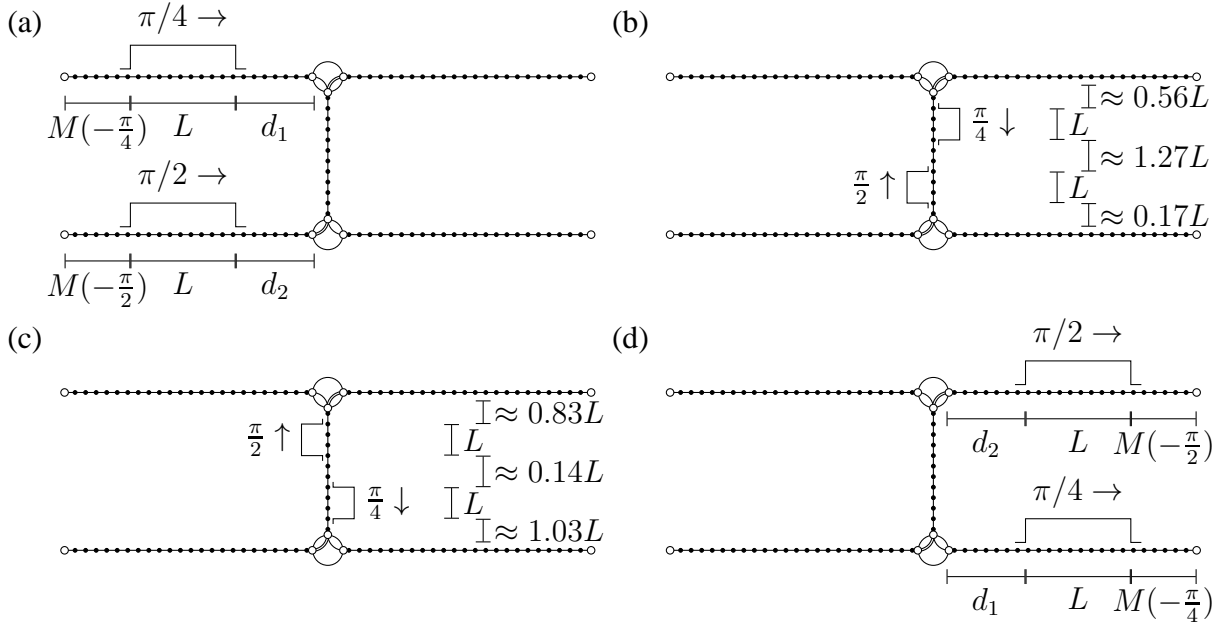

\subsection{Block-by-block analysis of the full graph for a circuit}\label{sec:block_by_block}

In this section we discuss how blocks such as those shown in \fig{squint} and \fig{CPintall} act on encoded data. For each type of block, we first consider the time evolution generated by the Hamiltonian $H_{\text{block}}$ for the block in isolation, i.e., with nothing connected on either side. The truncation lemma lets us use our results about $H_\text{block}$ to prove results about $H_G^{(n+1)}$, the Hamiltonian for the full graph $G$ where the input and output paths of the block are connected to other graphs.

\subsubsection{Blocks applying single-qubit gates}

First consider a block of type I (as shown in \fig{squint}). The results of \sec{1qubit_calc} show that if the input and output paths of this block are not connected to anything, the block applies the correct logical single-qubit gates to each encoded qubit. Define single-particle logical input states $|z_{\text{in}}\rangle^j$ and output states $|z_{\text{out}}\rangle^j$ for each computational qubit ($z\in\{0,1\}$ and $j\in \{1,\ldots,n\}$) and the mediator qubit ($j=\med$) as in \sec{1qubit_calc}.

Suppose that the state of the system at time $t=0$ is
\[
|\psi(0)\rangle=\Sym\left(\sum_{\vec{z}\in\{0,1\}^{n+1}}\alpha_{\vec{z}}|z_{\text{in}}^{1}\rangle^1\ldots|z_{\text{in}}^{n}\rangle^n|z_{\text{in}}^{n+1}\rangle^{\med}\right).
\]
Evolving the state $|\psi(0)\rangle$ for time $t_{\mathrm{I}}={3L}/{2}$
using the Hamiltonian $H_{\text{block}}$ for the block
we get $|\psi(t_{\mathrm{I}})\rangle=e^{-iH_{\text{block}}t_{\mathrm{I}}}|\psi(0)\rangle$.
Letting $h_{1},h_{2},\ldots,h_{n+1}$ be the single-particle Hamiltonians associated with the $n+1$ encoded qubits (all supported on disconnected components of the graph),
\begin{align*}
|\psi(t_{\mathrm{I}})\rangle & = \Sym\left(\sum_{\vec{z}\in\{0,1\}^{n+1}} \alpha_{\vec{z}} \, e^{-ih_{1}t_{\mathrm{I}}} |z_{\text{in}}^{1}\rangle^1 \ldots e^{-ih_{n}t_{\mathrm{I}}} |z_{\text{in}}^{n}\rangle^n e^{-ih_{n+1}t_{\mathrm{I}}}|z_{\text{in}}^{n+1}\rangle^{\med}\right)\\
 & = \Sym\left(\sum_{\vec{z},\vec{x}\in\{0,1\}^{n+1}} \alpha_{\vec{z}} \langle x^{1}|U_{1}|z^{1}\rangle|x_{\text{out}}^{1}\rangle^1\ldots\langle x^{n}|U_{n}|z^{n}\rangle|x_{\text{out}}^{n}\rangle^n\langle x^{n+1}|V_{\med}|z^{n+1}\rangle|x_{\text{out}}^{n+1}\rangle^{\med}\right)\\
& \qquad +\O(nL^{-{1}/{4}})\\
 & = \ket{\mu(t_{\mathrm{I}})} + \O(nL^{-{1}/{4}})
\end{align*}
where in the last line we have defined $\ket{\mu(t_{\mathrm{I}})}$ to be the logical output state with the appropriate unitaries applied. In the above we used the fact that 
\[
e^{-ih_w t_{\mathrm{I}}}|z_{\text{in}}\rangle^w=\sum_{x=0}^{1}\langle x|U_w|z\rangle |x_{\text{out}}\rangle^w+\O(L^{-{1}/{4}})
\]
as proved in \sec{1qubit_calc}.  Note that the error terms for each of the $n+1$ encoded qubits add linearly to give a total error $\O(nL^{-{1}/{4}})$.

Of course we are interested in the scenario where the input and output paths of this graph are connected to other graphs as specified in \sec{description}.
 Let $H_{G}^{(n+1)}$ be the full Hamiltonian including these connections. We use the truncation lemma to show that $e^{-iH_G^{(n+1)} t_{\mathrm{I}}}|\psi(0)\rangle$ is approximated by $|\mu(t_{\mathrm{I}})\rangle$. We let $P$ project onto the subspace of states where all $n+1$ particles are located on vertices of the graph within the block,  $H=H_G^{(n+1)}$, and $W=\tilde{H}=H_{\text{block}}$.\footnote{To ensure that $P H_G^{(n+1)} P = H_{\text{block}}$ we take $L$ large enough that, for any two vertices $i,j$ in the block that are associated with two different encoded qubits, there are no paths of length $\leq C$  in the full graph $G$ that connect $i$ and $j$. This is true provided $L\geq C$  (since any such path has to contain at least $L$ vertices of another block that are part of a $\CD$ gate subgraph). Since $C$ is a constant and $L$ grows with the size of the computation, this requirement is automatically satisfied. Note that there are no paths of length $\leq C$ that connect two vertices on different input (or output) paths for the \textit{same} encoded qubit since the subgraphs we use to implement single-qubit gates have no paths of length $\leq C$ between two input or two output vertices---see the beginning of \sec{build}.} For times $0\leq t\leq t_{\mathrm{I}}$, we have
\begin{align}
|\psi(t)\rangle &= \Sym\left(\sum_{\vec{z}\in\{0,1\}^{n+1}} \alpha_{\vec{z}} \, e^{-ih_{1}t }|z_{\text{in}}^{1}\rangle^1 \ldots e^{-ih_{n}t }|z_{\text{in}}^{n}\rangle^n e^{-ih_{n+1}t}|z_{\text{in}}^{n+1}\rangle^{\med}\right)\nonumber \\
&=|\mu(t)\rangle+|\epsilon (t)\rangle\label{eq:mut}
\end{align}
where $\left\Vert \epsilon(t)\rangle\right\Vert=\O(nL^{-{1}/{4}})$ and $|\mu(t)\rangle$ only has support on states where each particle is located at least a distance $N_0=M(-{\pi}/{4})$ from the endpoints of the input/output paths of the block (note that $M(-{\pi}/{4}) < M(-{\pi}/{2})$). In particular,
\[
(1-P)\left(H_G^{(n+1)}\right)^r|\mu(t)\rangle=0 \text{ for all } 0\leq r< M\left(-{\pi}/{4}\right).
\]
Equation \eq{mut} again follows from the results of \sec{1qubit_calc} (using the results of that section to approximate $e^{-ih_w t}|z_{\text{in}}\rangle^w$ for each $w$ and $t\leq t_{\mathrm{I}}$).  Applying the truncation lemma we then obtain 
\begin{align*}
\left\Vert e^{-iH_{G}^{(n+1)}t_{\mathrm{I}}}|\psi(0)\rangle- \ket{\mu(t_{\mathrm{I}})}\right\Vert  &=\O\left(\left\Vert H_{G}^{(n+1)}\right\Vert nL^{-{1}/{4}}\right).
\end{align*}

\subsubsection{Blocks applying two-qubit gates}

Now consider a block of type II (as shown in \fig{CPintall}). Without loss of generality we assume that the computational qubit involved in the gate is the $n$th encoded qubit. We label the vertices on the input and output paths of the
two qubits involved in the gate as in \fig{Graph-used-to-1}. For the
$n-1$ computational qubits not involved in the gate we label the states on each of the $2(n-1)$ paths as $|x,q\rangle^j$ for $j\in\{1,\ldots,n-1\}$, $q\in\{0,1\}$, and $x\in\{1,\ldots,2W+Z+4\}$. Here $x=1$ is the leftmost vertex on each path. Define the input and output states for each of the $n-1$ computational qubits not involved in the gate as
\begin{align*}
|0_{\text{in}}\rangle^j & =  \frac{1}{\sqrt{L}}\sum_{x=M(-\frac{\pi}{4})+1}^{M(-\frac{\pi}{4})+L}e^{-i\frac{\pi}{4}x}|x,0\rangle^{j}&
|1_{\text{in}}\rangle^j & =  \frac{1}{\sqrt{L}}\sum_{x=M(-\frac{\pi}{4})+1}^{M(-\frac{\pi}{4})+L}e^{-i\frac{\pi}{4}x}|x,1\rangle^{j}\\
|0_{\text{out}}\rangle^j &= \frac{e^{-it_{\mathrm{II}}\sqrt{2}}}{\sqrt{L}}\sum_{x=Q_{1}+1}^{Q_{1}+L}e^{-i\frac{\pi}{4}x}|x,0\rangle^j &
|1_{\text{out}}\rangle^j &= \frac{e^{-it_{\mathrm{II}}\sqrt{2}}}{\sqrt{L}}\sum_{x=Q_{1}+1}^{Q_{1}+L}e^{-i\frac{\pi}{4}x}|x,1\rangle^j,
\end{align*}
where $Q_1$ is defined in section \sec{2qubit_calc}. Similarly, define the logical input and output states for the $n$th computational qubit and the mediator as in \sec{2qubit_calc}.
The state at time $t=0$ has the form
 \begin{align*}
|\psi(0)\rangle&=\Sym\left(\sum_{\vec{z}\in\{0,1\}^{n+1}} \alpha_{\vec{z}} |z_{\text{in}}^{1}\rangle^1\ldots|z_{\text{in}}^{n-1}\rangle^{n-1}|z^n_{\text{in}}\rangle^n |z^{n+1}_{\text{in}}\rangle^{\med}\right)\\
& =\frac{1}{\sqrt{2}} \Sym\left(\sum_{\vec{z}\in\{0,1\}^{n+1}} \alpha_{\vec{z}} |z_{\text{in}}^{1}\rangle^1\ldots|z_{\text{in}}^{n-1}\rangle^{n-1}|z^n z^{n+1}_{\text{in}}\rangle^{n,\med}\right)
\end{align*}
where in the last line we have written the state in terms of the symmetrized (or antisymmetrized) logical states $|00_{\text{in}}\rangle^{n,\med},\allowbreak |01_{\text{in}}\rangle^{n,\med},\allowbreak |10_{\text{in}}\rangle^{n,\med},\allowbreak |11_{\text{in}}\rangle^{n,\med}$ as defined in \sec{2qubit_calc}. The prefactor of $1/\sqrt{2}$ arises from the fact that these logical states for the two encoded qubits $n$ and $\med$ are already symmetrized (or antisymmetrized).

Evolving this state for time $t_{\mathrm{II}}$ using the Hamiltonian
of the block (without anything connected on either side) gives $|\psi(t_{\mathrm{II}})\rangle=e^{-iH_{\text{block}}t_{\mathrm{II}}}|\psi(0)\rangle$.  Let $h_1,h_2,\ldots,h_{n-1}$ be the single-particle Hamiltonians for the components of the graph corresponding to the $n-1$ encoded qubits that are not involved in the gate. Let $h_{n,\med}$ be the two-particle Hamiltonian for the two encoded qubits on which the gate acts. Then 
\begin{align*}
|\psi(t_{\mathrm{II}})\rangle &= \frac{1}{\sqrt{2}}\Sym\Bigg(\sum_{\vec{z}\in\{0,1\}^{n+1}}  \alpha_{\vec{z}} \, e^{-ih_1 t_{\mathrm{II}}}|z_{\text{in}}^{1}\rangle^1\ldots e^{-ih_{n-1} t_{\mathrm{II}}}|z_{\text{in}}^{n-1}\rangle^{n-1} e^{-ih_{n,\med}t_{\mathrm{II}}}|z^n z^{n+1}_{\text{in}}\rangle^{n,\med} \Bigg).
\end{align*}
Using the results of \sec{2qubit_calc} this is 
\begin{align*}
|\psi(t_{\mathrm{II}})\rangle &=\frac{1}{\sqrt{2}}\Sym\Bigg(\sum_{\vec{z}\in\{0,1\}^{n+1}} \alpha_{\vec{z}} \, e^{i\theta z^n z^{n+1}} |z_{\text{out}}^{1}\rangle^1\ldots |z_{\text{out}}^{n-1}\rangle^{n-1}|z^n z^{n+1}_{\text{out}}\rangle^{n,\med}\Bigg)+ \O(nL^{-{1}/{4}})\\
&= \ket{\kappa(t_{\mathrm{II}})} + \O(n L^{-{1}/{4}})
\end{align*}
where $\ket{\kappa(t_{\mathrm{II}})}$ is the encoded logical state with the unitary applied.

Now connecting this block to other blocks on either side, and letting $H_{G}^{(n+1)}$ be the full Hamiltonian, we can apply the truncation lemma just as we did in the previous section for blocks of type I.  This gives 
\begin{align*}
\left\Vert e^{-iH_{G}^{(n+1)}t_{\mathrm{II}}}|\psi(0)\rangle - \ket{\kappa(t_{\mathrm{II}})} \right\Vert &= \O\left(\left\Vert H_{G}^{(n+1)}\right\Vert nL^{-{1}/{4}}\right).
\end{align*}

\subsubsection{Piecing blocks together}

We are now ready to show how to apply two unitaries in series, by concatenating blocks of type I or II as shown in \fig{squint} and \fig{CPintall}, respectively. We concatenate two blocks $B$ and $B'$ by removing some of the vertices on the input paths of $B'$ and then connecting the output paths of $B$ to the resulting graph (so that the two blocks overlap on the removed vertices). We remove the leftmost $2 M(k) + L$ vertices from each input path of $B'$ (this is a different number of vertices depending on whether the paths are associated with a mediator qubit or a computational qubit) and then connect every output path of $B$ to the corresponding shortened input path in $B'$. Note that if $B'$ is of type I, we remove all vertices from each input path, attaching the output path of $B$ to inputs of the individual gate subgraphs of $B'$.

Assuming that our wave packet initially starts as the correct input for a single block (i.e., a symmetrized logical state with one particle on each pair of qubit paths), we want to show the wave packet first moves through $B$, undergoing the appropriate unitary, and then moves through $B'$, undergoing the second unitary, with small total error. Define $t_1$ and $t_2$ to be $t_{\mathrm{I}}$ or $t_{\mathrm{II}}$ depending on the block type of $B$ and $B'$, respectively, and let the initial state of the system be $\ket{\psi(0)}$, a symmetrized logical state as described previously. From the results of the previous two sections, the state of the system after time $t_1$ is $\ket{\text{out}_1}+ \O(\Vert H_G^{(n+1)}\Vert n L^{{-1}/{4}})$. Here $\ket{\text{out}_1}$ is a symmetrized logical state in which we have applied the unitaries of $B$ to the initial logical state. Furthermore, each individual wave packet is located a distance $M(k)$ from the end of $B$, and has length $L$. Since $B$ and $B'$ overlap on $2M(k) + L$ vertices, we see that each wave packet is located a distance $M(k)$ inside of $B'$, so the output state for $B$ corresponds with an input state for $B'$ up to an irrelevant global phase. After time $t_1 + t_2$, the wave function is then $\ket{\text{out}_2} + \O(\Vert H_G^{(n+1)}\Vert n L^{{-1}/{4}})$, where $\ket{\text{out}_2}$ is a logical state in which the unitaries of $B'$ and $B$ have been applied in sequence.  Now generalizing to the case of $g$ blocks in series,  we see that the total error in the final output state is $\O(g \Vert H_G^{(n+1)}\Vert n L^{-{1}/{4}})$, as claimed in equation \eq{errorbound}.
\section{Single-particle wave packet scattering on infinite graphs}\label{sec:Single-Particle-Wavetrain}

In this section we prove \thm{singlepart}. The proof is based on (and follows closely) the calculation from the appendix of reference \cite{FGG08}.

Recall from \eq{single_particle_states} that the scattering eigenstates of $H_{G}^{(1)}$ have the
form
\[
\langle x,q|\mathrm{sc}_{j}(k)\rangle=e^{-ikx}\delta_{qj}+e^{ikx}S_{qj}(k)\]
 for each $k\in(-\pi,0)$. 

Before delving into the proof, we first establish that the state $|\alpha^{j}(t)\rangle$ is approximately normalized. This state is not normalized at all times $t$. However, $\langle\alpha^{j}(t)|\alpha^{j}(t)\rangle=1+\O(L^{-1})$, as we now show:
\begin{align*}
\langle\alpha^{j}(t)|\alpha^{j}(t)\rangle & =\frac{1}{L} \sum_{x=1}^{\infty}\left|e^{-ikx}R(x-\left\lfloor 2t\sin k\right\rfloor)+S_{jj}(k)e^{ikx}R(-x-\left\lfloor 2t\sin k\right\rfloor)\right|^{2} \\
 &\quad   +\frac{1}{L}\sum_{q\neq j}\sum_{x=1}^{\infty}|S_{qj}(k)|^{2} R(-x-\left\lfloor 2t\sin k\right\rfloor) \\
 & = \frac{1}{L}\sum_{x=1}^{\infty}\left[R(x-\left\lfloor 2t\sin k\right\rfloor )+R(-x-\left\lfloor 2t\sin k\right\rfloor)\right] \\
 &  \quad +\frac{1}{L}\sum_{x=1}^{\infty}\left(e^{-2ikx}S_{jj}^{\ast}(k)+e^{2ikx}S_{jj}(k)\right)R(x-\left\lfloor 2t\sin k\right\rfloor)R(-x-\left\lfloor 2t\sin k\right\rfloor) \\
 & = 1\! +\! \frac{1}{L}\sum_{x=1}^{\infty} \!
 	\left(e^{-2ikx}S_{jj}^{\ast}(k)\! +\! e^{2ikx}S_{jj}(k)\right) \!
   R(x\! -\! \left\lfloor 2t\sin k\right\rfloor)
   R(-x\! -\!\left\lfloor 2t\sin k\right\rfloor)
   \! +\!\O(L^{-1})
\end{align*}
where we have used unitarity of $S$ in the second step.
When it is nonzero, the second term can be written as
\[
\frac{1}{L}\sum_{x=1}^{b}\left(e^{-2ikx}S_{jj}^{\ast}(k)+e^{2ikx}S_{jj}(k)\right)
\]
where $b$ is the maximum positive integer such that $\{-b,b\}\subset\{M+1+\left\lfloor 2t\sin k\right\rfloor,\ldots,M+L+\left\lfloor 2t\sin k\right\rfloor \}$. Performing
the sums, we get
\begin{align*}
\left|\frac{1}{L}\sum_{x=1}^{b}\left(e^{-2ikx}S_{jj}^{\ast}(k)+e^{2ikx}S_{jj}(k)\right)\right| 
&= \frac{1}{L} \left|S_{jj}^{\ast}(k)e^{-2ik} \frac{e^{-2ikb}-1}{e^{-2ik}-1} 
+
S_{jj}(k)e^{2ik} \frac{e^{2ikb}-1}{e^{2ik}-1} \right|\\
 & \leq \frac{2}{L|{\sin k}|} .
\end{align*}
Thus we have $\langle\alpha^{j}(t)|\alpha^{j}(t)\rangle=1+\O(L^{-1})$.

\begin{proof}[Proof of \thm{singlepart}]
Define
\[
|\psi^{j}(t)\rangle=e^{-iH_{G}^{(1)}t}|\psi^{j}(0)\rangle
\]
and
\[
  \Pi_\epsilon = \int_{-\epsilon}^{\epsilon} \frac{d\phi}{2\pi} \sum_{q=1}^{N}|\mathrm{sc}_{q}(k+\phi)\rangle\langle\mathrm{sc}_{q}(k+\phi)|
\]
where we take $\epsilon=\frac{\left|\sin k\right|}{2\sqrt{L}}$.
Observe that $\Pi_\epsilon$ is a projection (i.e., $\Pi_\epsilon^2 = \Pi_\epsilon$), as can be shown using the delta-function normalization of the scattering states.
Thus we can write
\[
|\psi^{j}(t)\rangle=|w^{j}(t)\rangle+|v^{j}(t)\rangle
\]
where
\begin{align*}
|w^{j}(t)\rangle
&= \Pi_\epsilon |\psi^j(t)\rangle \\
&=\int_{-\epsilon}^{\epsilon}\frac{d\phi}{2\pi}e^{-2it\cos\left(k+\phi\right)}\sum_{q=1}^{N}|\mathrm{sc}_{q}(k+\phi)\rangle\langle\mathrm{sc}_{q}(k+\phi)|\psi^{j}(0)\rangle
\end{align*}
and $\langle w^{j}(t)|v^{j}(t)\rangle=0$.
Now
\[
\langle\mathrm{sc}_{q}(k+\phi)|\psi^{j}(0)\rangle=\frac{1}{\sqrt{L}}\sum_{x=M+1}^{M+L}\left(e^{i\phi x}\delta_{qj}+e^{-i\left(2k+\phi\right)x}S_{qj}^{\ast}(k+\phi)\right),\]
 so \[
|w^{j}(t)\rangle=|w_{A}^{j}(t)\rangle+\sum_{q=1}^{N}|w_{B}^{q,j}(t)\rangle\]
 where \begin{align*}
|w_{A}^{j}(t)\rangle & = \int_{-\epsilon}^{\epsilon}\frac{d\phi}{2\pi}e^{-2it\cos\left(k+\phi\right)}f(\phi)|\mathrm{sc}_{j}(k+\phi)\rangle\\
|w_{B}^{q,j}(t)\rangle & = \int_{-\epsilon}^{\epsilon}\frac{d\phi}{2\pi}e^{-2it\cos\left(k+\phi\right)}g_{qj}(\phi)|\mathrm{sc}_{q}(k+\phi)\rangle
\end{align*}
with
\begin{align*}
f(\phi) & = \frac{1}{\sqrt{L}}\sum_{x=M+1}^{M+L}e^{i\phi x}\\
g_{qj}(\phi) & = \frac{1}{\sqrt{L}}\sum_{x=M+1}^{M+L}e^{-i\left(2k+\phi\right)x}S_{qj}^{\ast}(k+\phi).\end{align*}
We will see that $\ket{\psi^j(t)} \approx \ket{w^j(t)} \approx \ket{w^j_A(t)} \approx \ket{\alpha^j(t)}$.

We have
\begin{align*}
\langle w_{A}^{j}(t)|w_{A}^{j}(t)\rangle & = \int_{-\epsilon}^{\epsilon}\frac{d\phi}{2\pi}\left|f(\phi)\right|^{2}
 = \frac{1}{L} \int_{-\epsilon}^{\epsilon}\frac{d\phi}{2\pi}\frac{\sin^{2}(\frac{1}{2}L\phi)}{\sin^{2}(\frac{1}{2}\phi)},
 \end{align*}
but
\[
\frac{1}{L} \int_{-\pi}^{\pi}\frac{d\phi}{2\pi}\frac{\sin^{2}(\frac{1}{2}L\phi)}{\sin^{2}(\frac{1}{2}\phi)}=1
\]
and
\begin{align}
\frac{1}{L}\left(\int_{\epsilon}^{\pi}+\int_{-\pi}^{-\epsilon}\right)\frac{d\phi}{2\pi}\frac{\sin^{2}(\frac{1}{2}L\phi)}{\sin^{2}(\frac{1}{2}\phi)} & = \frac{2}{L} \int_{\epsilon}^{\pi}\frac{d\phi}{2\pi}\frac{\sin^{2}(\frac{1}{2}L\phi)}{\sin^{2}(\frac{1}{2}\phi)}\nonumber \\
 & \leq \frac{2}{L}\int_{\epsilon}^{\pi}\frac{d\phi}{2\pi}\frac{\pi^{2}}{\phi^{2}}\nonumber \\
 & \leq \frac{\pi}{L\epsilon}.\label{eq:fbound}
\end{align}
Therefore
\[
1\geq\langle w_{A}^{j}(t)|w_{A}^{j}(t)\rangle\geq1-\frac{\pi}{L\epsilon}.
\]
Similarly,
\[
\langle w_{B}^{qj}(t)|w_{B}^{qj}(t)\rangle=\int_{-\epsilon}^{\epsilon}\frac{d\phi}{2\pi}\frac{\left|S_{qj}(k+\phi)\right|^{2}}{L}\frac{\sin^{2}(\frac{1}{2}L(2k+\phi))}{\sin^{2}(\frac{1}{2}(2k+\phi))},\]
and, using the unitarity of $S$,
\begin{align*}
\sum_{q=1}^{N}\langle w_{B}^{qj}(t)|w_{B}^{qj}(t)\rangle & = \frac{1}{L}\int_{-\epsilon}^{\epsilon}\frac{d\phi}{2\pi}\frac{\sin^{2}(\frac{1}{2}L(2k+\phi))}{\sin^{2}(\frac{1}{2}(2k+\phi))}\\
 & \leq \frac{1}{L} \int_{-\epsilon}^{\epsilon}\frac{d\phi}{2\pi}\frac{1}{\sin^{2}(\frac{1}{2}(2k+\phi))}.
\end{align*}
Now $|{\sin(k+{\phi}/{2}) - \sin k}| \leq {|\phi|}/{2}$ (by the mean value theorem), so
\begin{align*}
\sin^{2}\left(k+\frac{\phi}{2}\right) & \geq \left(\left|\sin k\right|-\left|\frac{\phi}{2}\right|\right)^{2}.
\end{align*}
Since $\epsilon=\frac{\left|\sin k\right|}{2\sqrt{L}}<\left|\sin k\right|$
we then have \begin{align*}
\sum_{q=1}^{N}\langle w_{B}^{qj}(t)|w_{B}^{qj}(t)\rangle & \leq \frac{1}{L} \int_{-\epsilon}^{\epsilon}\frac{d\phi}{2\pi}\frac{4}{\sin^{2}k}\\
 & = \frac{4\epsilon}{\pi L\sin^{2}k}.\end{align*}
 Hence \begin{align*}
\langle w^{j}(t)|w^{j}(t)\rangle & \geq \langle w_{A}^{j}(t)|w_{A}^{j}(t)\rangle-2\left|\sum_{q=1}^{N}\langle w_{A}^{j}(t)|w_{B}^{qj}(t)\rangle\right|\\
 & \geq 1-\frac{\pi}{L\epsilon}-2\left\Vert \sum_{q=1}^{n}|w_{B}^{qj}(t)\rangle\right\Vert \\
 & \geq 1-\frac{\pi}{L\epsilon}-2\sum_{q=1}^{n}\left\Vert |w_{B}^{qj}(t)\rangle\right\Vert \\
 & \geq 1-\frac{\pi}{L\epsilon}-4\sqrt{\frac{\epsilon N}{\pi L\sin^{2}k}},\end{align*}
so
\[
\langle v^{j}(t)|v^{j}(t)\rangle\leq\frac{\pi}{L\epsilon}+4\sqrt{\frac{\epsilon N}{\pi L\sin^{2}k}}
\]
since $\langle v^{j}(t)|v^{j}(t)\rangle+\langle w^{j}(t)|w^{j}(t)\rangle=1$.
Thus
\begin{align*}
\left\Vert |\psi^{j}(t)\rangle-|w_{A}^{j}(t)\rangle\right\Vert  & = \left\Vert |v^{j}(t)\rangle+\sum_{q=1}^{N}|w_{B}^{qj}(t)\rangle\right\Vert \\
 & \leq \left(\frac{\pi}{L\epsilon}+4\sqrt{\frac{\epsilon N}{\pi L\sin^{2}k}}\right)^{\frac{1}{2}}+2\sqrt{\frac{\epsilon N}{\pi L\sin^{2}k}}.
\end{align*}
With our choice $\epsilon=\frac{\left|\sin k\right|}{2\sqrt{L}}$,
we have $\norm{|\psi^{j}(t)\rangle-|w_{A}^{j}(t)\rangle} =\O(L^{-{1}/
{4}})$.
We now show that
\begin{equation}
\left\Vert |w_{A}^{j}(t)\rangle-|\alpha^{j}(t)\rangle\right\Vert =\O(L^{-{1}/{4}}).\label{eq:alpha_w_bound}\end{equation}
Letting \[
P=\sum_{q=1}^{N}\sum_{x=1}^{\infty}|x,q\rangle\langle x,q|\]
be the projector onto the semi-infinite paths, to show equation \eq{alpha_w_bound}
we use the bounds
\begin{equation}
\left\Vert \left(1-P\right) |w_{A}^{j}(t)\rangle\right\Vert=\O\left(\frac{\log L}{\sqrt{L}}\right)\label{eq:boundinsidegraph}
\end{equation}
and
\begin{equation}
\left\Vert P|w_{A}^{j}(t)\rangle-|\alpha^{j}(t)\rangle\right\Vert = \O(L^{-{1}/{4}}).
\label{eq:bound_inside_lines}
\end{equation}
Equation \eq{alpha_w_bound} follows from these bounds since $(1-P)|\alpha^{j}(t)\rangle=0$.  

To get equation \eq{boundinsidegraph}, write
\[
\langle w_{A}^{j}(t)|1-P|w_{A}^{j}(t)\rangle=\int_{D_\epsilon}\frac{d\phi d\tilde{\phi}}{4\pi^2}e^{-2it\cos\left(k+\phi\right)+2it\cos(k+\tilde{\phi})}f(\phi)f^{\ast}(\tilde{\phi})\langle\mathrm{sc}_{j}(k+\tilde{\phi})|1-P|\mathrm{sc}_{j}(k+\phi)\rangle.\]
 Using \lem{defn_scatteringstates}, there is a constant
$\lambda$ such that $|\langle\mathrm{sc}_{j}(k+\tilde{\phi})|1-P|\mathrm{sc}_{j}(k+\phi)\rangle|<\lambda^{2}m$,
so 
\begin{align}
\left|\langle w_{A}^{j}(t)|1-P|w_{A}^{j}(t)\rangle\right| & \leq \lambda^{2}m\int_{-\epsilon}^{\epsilon}\frac{d\phi}{2\pi}\int_{-\epsilon}^{\epsilon}\frac{d\tilde{\phi}}{2\pi} |f(\phi)f^{\ast}(\tilde{\phi})| \nonumber \\
 & = \lambda^{2}m\left(\int_{-\epsilon}^{\epsilon}\frac{d\phi}{2\pi} |f(\phi)|\right)^{2}.\label{eq:proj_eqn}
\end{align}
 Now 
\begin{align}
\int_{-\epsilon}^{\epsilon}\frac{d\phi}{2\pi}\left|f(\phi)\right| & = \int_{-\epsilon}^{\epsilon}\frac{d\phi}{2\pi\sqrt{L}}\left|\frac{\sin\frac{\phi L}{2}}{\sin\frac{\phi}{2}}\right|\nonumber \\
 & = 2\int_{0}^{d}\frac{d\phi}{2\pi\sqrt{L}}\left|\frac{\sin\frac{\phi L}{2}}{\sin\frac{\phi}{2}}\right|+2\int_{d}^{\epsilon}\frac{d\phi}{2\pi\sqrt{L}}\left|\frac{\sin\frac{\phi L}{2}}{\sin\frac{\phi}{2}}\right|\text{ for any }d\in(0,\epsilon]\nonumber \\
 & \leq \frac{d\sqrt{L}}{\pi}+2\int_{d}^{\epsilon}\frac{d\phi}{2\pi\sqrt{L}}\frac{\pi}{\phi}\nonumber \\
 & \leq \frac{d\sqrt{L}}{\pi}+\frac{\log\frac{1}{d}}{\sqrt{L}}\label{eq:eqn_d}
\end{align}
 where in the last line we used the fact that $\epsilon<1.$ Setting
$d=\Theta({1}/{L})$, using equation \eq{proj_eqn}, and taking the square root of both sides, we get equation \eq{boundinsidegraph}. 

We now prove the bound \eq{bound_inside_lines}. Noting that
\[
\frac{1}{\sqrt{L}}R(l)=\int_{-\pi}^{\pi}\frac{d\phi}{2\pi}e^{-i\phi l}f(\phi),
\]
we write
\begin{align}
\langle x,q|\alpha^{j}(t)\rangle & = e^{-2it\cos k}\left(\delta_{qj}e^{-ikx}\int_{-\pi}^{\pi}\frac{d\phi}{2\pi}e^{-i\phi\left(x-\left\lfloor 2t\sin k\right\rfloor \right)}f(\phi)\right.\nonumber\\
&\qquad\qquad\qquad  \left. +S_{qj}(k)e^{ikx}\int_{-\pi}^{\pi}\frac{d\phi}{2\pi}e^{-i\phi\left(-x-\left\lfloor 2t\sin k\right\rfloor \right)}f(\phi)\right).\label{eq:alpha_mat_elements}
\end{align}
On the other hand,
\begin{equation}
\langle x,q|w_{A}^{j}(t)\rangle=\int_{-\epsilon}^{\epsilon}\frac{d\phi}{2\pi}e^{-2it\cos\left(k+\phi\right)}f(\phi)\left(e^{-i\left(k+\phi\right)x}\delta_{qj}+e^{i\left(k+\phi\right)x}S_{qj}(k+\phi)\right).\label{eq:w_a_mat_elements}
\end{equation}
Using equations \eq{alpha_mat_elements} and \eq{w_a_mat_elements}
we can write\[
P|w_{A}^{j}(t)\rangle=|\alpha^{j}(t)\rangle+\sum_{i=1}^{7}|c_{i}^{j}(t)\rangle\]
 where $P|c_{i}^{j}(t)\rangle=|c_{i}^{j}(t)\rangle$ and \begin{align*}
\langle x,q|c_{1}^{j}(t)\rangle & = \delta_{qj}e^{-2it\cos k}e^{-ikx}\int_{-\pi}^{\pi}\frac{d\phi}{2\pi}e^{-i\phi x}f(\phi)\left(e^{2it\phi\sin k}-e^{i\phi\left\lfloor 2t\sin k\right\rfloor }\right)\\
\langle x,q|c_{2}^{j}(t)\rangle & = S_{qj}(k)e^{-2it\cos k}e^{ikx}\int_{-\pi}^{\pi}\frac{d\phi}{2\pi}e^{i\phi x}f(\phi)\left(e^{2it\phi\sin k}-e^{i\phi\left\lfloor 2t\sin k\right\rfloor }\right)\\
\langle x,q|c_{3}^{j}(t)\rangle & = -\delta_{qj}e^{-2it\cos k}e^{-ikx}\left(\int_{\epsilon}^{\pi}+\int_{-\pi}^{-\epsilon}\right)\frac{d\phi}{2\pi}e^{-i\phi x}f(\phi)e^{2it\phi\sin k}\\
\langle x,q|c_{4}^{j}(t)\rangle & = -S_{qj}(k)e^{-2it\cos k}e^{ikx}\left(\int_{\epsilon}^{\pi}+\int_{-\pi}^{-\epsilon}\right)\frac{d\phi}{2\pi}e^{i\phi x}f(\phi)e^{2it\phi\sin k}\\
\langle x,q|c_{5}^{j}(t)\rangle & = \delta_{qj}e^{-ikx}\int_{-\epsilon}^{\epsilon}\frac{d\phi}{2\pi}e^{-i\phi x}f(\phi)\left(e^{-2it\cos\left(k+\phi\right)}-e^{-2it\cos k+2it\phi\sin k}\right)\\
\langle x,q|c_{6}^{j}(t)\rangle & = S_{qj}(k)e^{ikx}\int_{-\epsilon}^{\epsilon}\frac{d\phi}{2\pi}e^{i\phi x}f(\phi)\left(e^{-2it\cos\left(k+\phi\right)}-e^{-2it\cos k+2it\phi\sin k}\right)\\
\langle x,q|c_{7}^{j}(t)\rangle & = e^{ikx}\int_{-\epsilon}^{\epsilon}\frac{d\phi}{2\pi}e^{i\phi x}e^{-2it\cos\left(k+\phi\right)}f(\phi)\left(S_{qj}(k+\phi)-S_{qj}(k)\right).
\end{align*}
We now bound the norm of each of these states:
\begin{align*}
\langle c_{1}^{j}(t)|c_{1}^{j}(t)\rangle & = \sum_{q=1}^{N}\sum_{x=1}^{\infty}\left|\delta_{qj}e^{-2it\cos k}e^{-ikx}\int_{-\pi}^{\pi}\frac{d\phi}{2\pi}e^{-i\phi x}f(\phi)\left(e^{2it\phi\sin k}-e^{i\phi\left\lfloor 2t\sin k\right\rfloor }\right)\right|^{2}\\
 & \leq \sum_{q=1}^{N}\sum_{x=-\infty}^{\infty}\left|\delta_{qj}e^{-2it\cos k}e^{-ikx}\int_{-\pi}^{\pi}\frac{d\phi}{2\pi}e^{-i\phi x}f(\phi)\left(e^{2it\phi\sin k}-e^{i\phi\left\lfloor 2t\sin k\right\rfloor }\right)\right|^{2}\\
 & = \int_{-\pi}^{\pi}\frac{d\phi}{2\pi}\left|f(\phi)\right|^{2}\left|e^{2it\phi\sin k}-e^{i\phi\left\lfloor 2t\sin k\right\rfloor }\right|^{2}\\
 & \leq \int_{-\pi}^{\pi}\frac{d\phi}{2\pi}\left|f(\phi)\right|^{2}\left(2t\phi\sin k-\left\lfloor 2t\sin k\right\rfloor \phi\right)^{2}\\
 & \leq \int_{-\pi}^{\pi}\frac{d\phi}{2\pi}\left|f(\phi)\right|^{2}\phi^{2}
\end{align*}
where we have used the facts that $|{e^{is}-1}|^{2} \leq s^{2}$
for $s\in\mathbb{R}$ and $\left|2t\sin k -\left\lfloor 2t\sin k\right\rfloor \right|<1$. In the above we made the following replacement under the integral:
\[
\sum_{x=-\infty}^{\infty} e^{i(\phi-\tilde{\phi})x}=2\pi\delta(\phi-\tilde{\phi}) \text{ for } \phi,\tilde{\phi}\in (-\pi,\pi).
\]
We use this repeatedly in the following calculations.
Continuing, we get
\begin{align*}
\langle c_{1}^{j}(t)|c_{1}^{j}(t)\rangle & \leq \frac{1}{L} \int_{-\pi}^{\pi}\frac{d\phi}{2\pi}\frac{\sin^{2}(\frac{1}{2}L\phi)}{\sin^{2}(\frac{1}{2}\phi)}\phi^{2}\\
 & \le \frac{1}{L} \int_{-\pi}^{\pi}\frac{d\phi}{2\pi}\frac{1}{\sin^{2}(\frac{1}{2}\phi)}\phi^{2}\\
 & \leq \frac{\pi^{2}}{L}
\end{align*}
using the fact that $\sin^{2}({\phi}/{2})\geq{\phi^{2}}/{\pi^{2}}$
for $\phi\in[-\pi,\pi]$. Similarly we bound $\langle c_{2}^{j}(t)|c_{2}^{j}(t)\rangle\leq{\pi^{2}}/{L}$. 

Using equation \eq{fbound} we get 
\begin{align*}
\langle c_{3}^{j}(t)|c_{3}^{j}(t)\rangle & \leq \left(\int_{\epsilon}^{\pi}+\int_{-\pi}^{-\epsilon}\right)\frac{d\phi}{2\pi}\left|f(\phi)\right|^{2}\\
 & \leq \frac{\pi}{L\epsilon}\end{align*}
and similarly for $\langle c_{4}^{j}(t)|c_{4}^{j}(t)\rangle$.
Next, we have
\begin{align*}
\langle c_{5}^{j}(t)|c_{5}^{j}(t)\rangle & \leq \int_{-\epsilon}^{\epsilon}\frac{d\phi}{2\pi}\left|f(\phi)\right|^{2}\left|e^{-2it\cos\left(k+\phi\right)}-e^{-2it\cos k+2it\phi\sin k}\right|^{2}\\
 & \leq \int_{-\epsilon}^{\epsilon}\frac{d\phi}{2\pi}\left|f(\phi)\right|^{2}\left(2t\cos\left(k+\phi\right)-2t\cos k+2t\phi\sin k\right)^{2}\\
 & = \int_{-\epsilon}^{\epsilon}\frac{d\phi}{2\pi}\left|f(\phi)\right|^{2}\left(2t\cos k\left(\cos\phi-1\right)+2t\sin k\left(\phi-\sin\phi\right)\right)^{2}\\
 & \leq \int_{-\epsilon}^{\epsilon}\frac{d\phi}{2\pi}\left|f(\phi)\right|^{2}4t^{2}\phi^{4}\\
 & = \frac{4t^{2}}{L}\int_{-\epsilon}^{\epsilon}\frac{d\phi}{2\pi}\frac{\sin^{2}(\frac{1}{2}L\phi)}{\sin^{2}(\frac{1}{2}\phi)}\phi^{4}\\
 & \leq \frac{4t^{2}}{L}\int_{-\epsilon}^{\epsilon}\frac{d\phi}{2\pi}\pi^{2}\phi^{2}\\
 & = \frac{4\pi}{3L}t^{2}\epsilon^{3}\end{align*}
 and we have the same bound for $|c_{6}^{j}(t)\rangle$. Finally, 
\begin{align*}
\langle c_{7}^{j}(t)|c_{7}^{j}(t)\rangle & \leq \int_{-\epsilon}^{\epsilon}\frac{d\phi}{2\pi}\left|f(\phi)\right|^{2}\sum_{q=1}^{N}\left|S_{qj}(k+\phi)-S_{qj}(k)\right|^{2}.\end{align*}
Now, for each $q\in\{1,\ldots,N\}$,
\[
\left|S_{qj}(k+\phi)-S_{qj}(k)\right| \leq \Gamma |\phi|
\]
where the Lipschitz constant
\[
\Gamma = \max_{q,j\in\{1,\ldots,N\}} \max_{k' \in [-\pi,\pi]}\left|\frac{d}{dk'}S_{qj}(k')\right|
\]
is well defined since each matrix element $S_{qj}(k')$ is a
bounded rational function of $e^{ik'}$ (see Section 3 of \cite{Childs_Gosset}). Hence
\begin{align*}
\langle c_{7}^{j}(t)|c_{7}^{j}(t)\rangle & \leq \int_{-\epsilon}^{\epsilon}\frac{d\phi}{2\pi}\left|f(\phi)\right|^{2}N\Gamma^{2}\phi^{2}\\
 & = \frac{N\Gamma^{2}}{L}\int_{-\epsilon}^{\epsilon}\frac{d\phi}{2\pi}\frac{\sin^{2}(\frac{1}{2}L\phi)}{\sin^{2}(\frac{1}{2}\phi)}\phi^{2}\\
 & \leq \frac{N\Gamma^{2}}{L}\int_{-\epsilon}^{\epsilon}\frac{d\phi}{2\pi}\pi^{2}\\
 & = N\Gamma^{2}\frac{\pi\epsilon}{L}.
\end{align*}
Now using the bounds on the norms of each of these states we get
\begin{align*}
\left\Vert P|w_{A}^{j}(t)\rangle-|\alpha^{j}(t)\rangle\right\Vert  & \leq 2\frac{\pi}{\sqrt{L}}+2\sqrt{\frac{\pi}{L\epsilon}}+2\sqrt{\frac{4\pi}{3L}t^{2}\epsilon^{3}}+\sqrt{N\Gamma^{2}\frac{\pi\epsilon}{L}}\\
 & = \O(L^{-{1}/{4}})
\end{align*}
using the choice $\epsilon=\frac{|{\sin p}|}{2\sqrt{L}}$ and the fact that $t = \O(L)$. 
\end{proof}


\section{Two-particle wave packet scattering on an infinite path}\label{sec:Two-Particle-Wavetrain}

In this section we prove \thm{twopart}. The main proof appears in \sec{twopartproof}, relying on several technical lemmas proved in \sec{techlem}. The proof follows the method used in the single-particle case, which is based on the calculation from the appendix of reference \cite{FGG08}.

Recall from \eq{symscatter} that for each $p_{1}\in(-\pi,\pi)$ and $p_{2}\in(0,\pi)$ there is an eigenstate $|\mathrm{sc}(p_{1};p_{2})\rangle_{\pm}$ of $H^{(2)}$ of the form 
\begin{align}
  \langle x,y|\mathrm{sc}(p_{1};p_{2})\rangle_\pm &=
    \frac{e^{-ip_{1}\left(\frac{x+y}{2}\right)}}{\sqrt{2}} 
    \begin{cases} e^{-i p_2 (x-y)} \pm  e^{i \theta_{\pm}(p_1,p_2)} e^{i p_2 (x - y)} & \text{if } x - y \leq -C\\
     e^{- i p_2 (x-y)}e^{i\theta_{\pm}(p_1,p_2)} \pm e^{i p_2 (x-y)} & \text{if }x-y \geq C\\
    f(p_1,p_2,x-y) \pm f(p_1,p_2,y-x)& \text{if }|x-y|< C \end{cases}\label{eq:sc}
\end{align}
where
\begin{align*}
e^{i \theta_\pm (p_{1},p_{2})} & =  T(p_1,p_2) \pm R(p_1,p_2),
\end{align*}
$C$ is the range of the interaction, $T$ and $R$ are the transmission
and reflection coefficients of the interaction at the chosen momentum, $f$
describes the amplitudes of the scattering state within the interaction range, and the $\pm$ depends
on the type of particle ($+$ for bosons, $-$ for fermions).  The state
$|\mathrm{sc}(p_{1};p_{2})\rangle_\pm$ satisfies 
\[
H^{(2)}|\mathrm{sc}(p_{1};p_{2})\rangle_{\pm} = 4\cos \f{p_{1}}{2}\cos p_{2} |\mathrm{sc}(p_{1};p_{2})\rangle_\pm
\]
and is delta-function normalized as
\begin{equation}
_{\pm}\langle\mathrm{sc}(p_{1}';p_{2}')|\mathrm{sc}(p_{1};p_{2})\rangle_{\pm}=
  4 \pi^{2} \delta(p_{1} - p_{1}') \delta(p_{2} - p_{2}').
\label{eq:del_func_norm}
\end{equation}

\subsection{Proof of \thm{twopart}}
\label{sec:twopartproof}

Let
\[
  \Pi_\epsilon = \iint_{D_{\epsilon}} \frac{d\phi_{1}d\phi_{2}} 
    {4\pi^{2}} |\mathrm{sc}(p_{1}+\phi_{1};p_{2}+\phi_{2})\rangle_{\pm} {}_{\pm}\langle\mathrm{sc}(p_{1}+\phi_{1};p_{2}+\phi_{2})|
\]
with $D_{\epsilon}=\left[-\epsilon,\epsilon\right]\times\left[-\epsilon,\epsilon\right]$, $p_{1}= {\pi}/{2}-{\pi}/{4}={\pi}/{4}$, and $p_{2}=({\pi}/{2} +
{\pi}/{4})/2={3\pi}/{8}$.
By the delta-function normalization of the scattering states (equation
\eq{del_func_norm}), $\Pi_\epsilon$ is a projection.
Thus we can write
\begin{align*}
|\psi(t)\rangle & = e^{-iH^{(2)}t}|\psi(0)\rangle = |\psi_{1}(t)\rangle+|\psi_{2}(t)\rangle
\end{align*}
where 
\begin{align*}
  \ket{\psi_{1}(t)}
  &= \Pi_\epsilon |\psi(t)\rangle \\
  &= \iint_{D_{\epsilon}} \frac{d\phi_{1}d\phi_{2}} 
    {4\pi^{2}} e^{-it 4\cos(\frac{p_{1}}{2}+\frac{\phi_{1}}{2})
      \cos(p_{2} + \phi_{2})}|\mathrm{sc}(p_{1}+\phi_{1};p_{2}+ 
    \phi_{2})\rangle_{\pm} \\
    &\qquad {}_{\pm}\langle\mathrm{sc}(p_{1}+\phi_{1};p_{2}+\phi_{2})|\psi(0)\rangle
\end{align*}
and $|\psi_{2}(t)\rangle$ is orthogonal to $|\psi_{1}(t)\rangle$.
We take $\epsilon=a/\sqrt{L}$ for some constant $a$. Using equation \eq{sc}
we get 
\[
  \ket{\psi_{1}(t)}=|\psi_{A}(t)\rangle\pm|\psi_{B}(t)\rangle
\]
where
\begin{align}
  |\psi_{A}(t)\rangle & = \iint_{D_{\epsilon}} \frac{d\phi_{1}d\phi_{2}}
    {4\pi^{2}} e^{-it 4\cos(\frac{\pi}{8}+\frac{\phi_{1}}{2})
    \cos(\frac{3\pi}{8}+\phi_{2})}
     A(\phi_{1},\phi_{2})|\mathrm{sc}(\tfrac{\pi}{4}+\phi_{1};
     \tfrac{3\pi}{8}+\phi_{2})\rangle_{\pm} \label{eq:psiA} \\
 \ket{\psi_{B}(t)} & = \iint_{D_{\epsilon}} \frac{d\phi_{1}d\phi_{2}}
    {4\pi^{2}} e^{-it 4\cos(\frac{\pi}{8}+\frac{\phi_{1}}{2})
    \cos(\frac{3\pi}{8}+\phi_{2})}  e^{-i\theta_{\pm}(\tfrac{\pi}{4}+\phi_{1},\tfrac{3\pi}{8}+\phi_{2} )} \nonumber\\
 	&\qquad B(\phi_{1},\phi_{2},\tfrac{3\pi}{8})
 	|\mathrm{sc}(\tfrac{\pi}{4} + \phi_{1};\tfrac{3\pi}{8} + \phi_{2})\rangle_\pm
\end{align}
with
\begin{align}
A(\phi_{1},\phi_{2}) & = \frac{1}{L}\sum_{x=-(M+L)}^{-(M+1)}\sum_{y=M+1}^{M+L} 
    e^{i\phi_{1}\frac{x+y}{2}}e^{i\phi_{2}\left(x-y\right)}\label{eq:A}\\
B(\phi_{1},\phi_{2},k) & = \frac{1}{L}\sum_{x=-(M+L)}^{-(M+1)}
    \sum_{y=M+1}^{M+L}e^{i\phi_{1}\frac{x+y}{2}} 	
    e^{i\left(\phi_{2}+2 k\right)\left(y-x\right)}.\nonumber 
\end{align}
Again using the delta-function normalization of the scattering states, we get 
\begin{align*}
\langle\psi_{B}(t)|\psi_{B}(t)\rangle & = \iint_{D_{\epsilon}}\frac{d\phi_{1}d\phi_{2}}{4\pi^{2}}\left|B(\phi_{1},\phi_{2},\tfrac{3\pi}{8})\right|^{2}\\
 & \leq \frac{16\pi^{2}}{L^{2}\epsilon^{2}}
\end{align*}
by \lem{Let--and} (provided $\epsilon<{3\pi}/{8}$, which holds for $L$ sufficiently large).
Similarly, 
\begin{align*}
1 &\geq\langle\psi_{A}(t)|\psi_{A}(t)\rangle \\
&= \iint_{D_{\epsilon}}\frac{d\phi_{1}d\phi_{2}}{4\pi^{2}}\left|A(\phi_{1},\phi_{2})\right|^{2}\\
 & \geq 1-\frac{4\pi}{L\epsilon}
\end{align*}
(from the first two facts in \lem{Let--and}) and therefore
\begin{align*}
\langle\psi_{1}(t)|\psi_{1}(t)\rangle & = \langle\psi_{A}(t)|\psi_{A}(t)\rangle+\langle\psi_{B}(t)|\psi_{B}(t)\rangle+\langle\psi_{A}(t)|\psi_{B}(t)\rangle+\langle\psi_{B}(t)|\psi_{A}(t)\rangle\\
 & \geq 1-\frac{4\pi}{L\epsilon}-2\left|\langle\psi_{A}(t)|\psi_{B}(t)\rangle\right|\\
 & \geq 1-\frac{4\pi}{L\epsilon}-2\left|\langle\psi_{A}(t)|\psi_{A}(t)\rangle\right|^{\frac{1}{2}}\left|\langle\psi_{B}(t)|\psi_{B}(t)\rangle\right|^{\frac{1}{2}}\\
 & \geq 
 1-\frac{12\pi}{L\epsilon}.
\end{align*}

Hence 
\[
  \langle\psi_{2}(t)|\psi_{2}(t)\rangle\leq\frac{12\pi}{L\epsilon}
\]
since 
\[
\langle\psi(t)|\psi(t)\rangle=\langle\psi_{1}(t)|\psi_{1}(t)\rangle+\langle\psi_{2}(t)|\psi_{2}(t)\rangle=1.
\]
Thus
\begin{align*}
\left\Vert \,|\psi(t)\rangle-|\psi_{A}(t)\rangle\right\Vert  & = \left\Vert \,|\psi_{B}(t)\rangle+|\psi_{2}(t)\rangle\right\Vert \\
 & \leq \left\Vert \,|\psi_{B}(t)\rangle\right\Vert +\left\Vert \,|\psi_{2}(t)\rangle\right\Vert \\
 & \leq \frac{4\pi}{L\epsilon}+\sqrt{\frac{12\pi}{L\epsilon}}.
\end{align*}
Now 
\begin{align*}
\left\Vert \,|\psi(t)\rangle-|\alpha(t)\rangle\right\Vert  & \leq \left\Vert \,|\psi(t)\rangle-|\psi_{A}(t)\rangle\right\Vert +\left\Vert \,|\psi_{A}(t)\rangle-|\alpha(t)\rangle\right\Vert \\
 & \leq \frac{4\pi}{L\epsilon}+\sqrt{\frac{12\pi}{L\epsilon}}+\left\Vert \,|\psi_{A}(t)\rangle-|\alpha(t)\rangle\right\Vert \\
 & = \O(L^{-{1}/{4}})+\left\Vert \,|\psi_{A}(t)\rangle-|\alpha(t)\rangle\right\Vert 
\end{align*}
using our choice $\epsilon=a/\sqrt{L}$. To complete the
proof, we now show that the second term in this expression is bounded by $\O(L^{-{1}/{4}})$.

\begin{lemma}
With $|\psi_{A}(t)\rangle$ and $|\alpha(t)\rangle$ defined through
equations \eq{psiA} and \eq{alpha}, respectively, and with $t\leq c_{0}L$ (for some constant $c_{0}$), 
$\norm{ |\psi_{A}(t)\rangle-|\alpha(t)\rangle } = \O(L^{-{1}/{4}})$.
\end{lemma}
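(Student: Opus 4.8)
The plan is to reuse the single-particle argument from the proof of \thm{singlepart} essentially line by line, now with a two-dimensional momentum integral. The first step is to expand $\langle x,y|\psi_A(t)\rangle$ in the position basis using the explicit three-region form \eq{sc} of $|\mathrm{sc}(\tfrac{\pi}{4}+\phi_1;\tfrac{3\pi}{8}+\phi_2)\rangle_\pm$, and to compare the resulting expression with $\langle x,y|\alpha(t)\rangle$ read off from \eq{alpha}. The key observation that makes this tractable is that the weight $A(\phi_1,\phi_2)$ of \eq{A} factorizes: since $e^{i\phi_1(x+y)/2+i\phi_2(x-y)}=e^{ix(\phi_1/2+\phi_2)}e^{iy(\phi_1/2-\phi_2)}$, we have $A(\phi_1,\phi_2)=g(\phi_1/2+\phi_2)\,g(\phi_1/2-\phi_2)$ (up to phases), where each $g$ is a one-dimensional Fej\'er-type kernel of the same form as the function $f$ appearing in \thm{singlepart}. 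After the linear change of variables $(u,v)=(\phi_1/2+\phi_2,\phi_1/2-\phi_2)$, which has unit Jacobian, every two-dimensional estimate separates into a product of one-dimensional integrals of exactly the type already controlled there and in \lem{Let--and}.

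Next I would write $|\psi_A(t)\rangle=|\alpha(t)\rangle+\sum_i|c_i(t)\rangle$, with each correction $|c_i(t)\rangle$ isolating one approximation, exactly as the seven terms $|c_i^j\rangle$ did in the single-particle proof. The approximations are: (i) linearizing the dispersion $E(\phi_1,\phi_2)=4\cos(\tfrac{\pi}{8}+\tfrac{\phi_1}{2})\cos(\tfrac{3\pi}{8}+\phi_2)$ about the origin, where $E(0,0)=\sqrt2$ produces the prefactor $e^{-\sqrt2 it}$ and the first-order term produces the individual group velocities $2|\sin k_1|=2$ and $2|\sin k_2|=\sqrt2$; (ii) rounding the resulting displacements to the integers $2\lfloor t\rfloor$ and $2\lfloor t/\sqrt2\rfloor$ that define $F$; (iii) extending the integration domain $D_\epsilon$ to $[-\pi,\pi]^2$; (iv) replacing the momentum-dependent data of the scattering state --- the outgoing phase $e^{i\theta_\pm(\tfrac{\pi}{4}+\phi_1,\tfrac{3\pi}{8}+\phi_2)}$, the coefficients $T$ and $R$, and the interior amplitudes $f$ --- by their central values at $(\tfrac{\pi}{4},\tfrac{3\pi}{8})$, so that the outgoing phase collapses to exactly $e^{i\theta}$; and (v) discarding the contribution of the interior strip $|x-y|<C$, where the square-packet ansatz $\alpha$ and the true state genuinely differ. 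Each of the errors (ii)--(iv) is bounded verbatim as in \thm{singlepart}: (ii) and (iii) use $|e^{is}-1|\le|s|$ together with the kernel bounds from \lem{Let--and}, while (iv) uses the Lipschitz continuity of $\theta_\pm$, $T$, and $R$, which holds because these are bounded rational functions of $e^{ip_1}$ and $e^{ip_2}$ (see \sec{twopart_scat}). The strip error (v) is supported on only $\O(1)$ values of $x-y$ over an $\O(L)$-wide range of $x+y$, each with amplitude $\O(1/L)$, so its norm is $\O(L^{-1/2})$.

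The step I expect to be the main obstacle is the dispersion-linearization term (i), for two reasons. First, the second-order Taylor remainder of $E$ now contains a cross term $\phi_1\phi_2$ in addition to the $\phi_1^2$ and $\phi_2^2$ pieces present in the single-particle case; after subtracting the linear approximation the remainder is of size $\O\!\left(t(\phi_1^2+|\phi_1\phi_2|+\phi_2^2)\right)$, and the cross term must be handled by Cauchy--Schwarz once the variables are separated via the factorization of $A$. Second, one must verify carefully that the linear term really does reproduce the two independent displacements appearing in $F$, i.e., that the bookkeeping between the $(p_1,p_2)$ variables and the physical momenta $(k_1,k_2)=(-\tfrac{\pi}{2},\tfrac{\pi}{4})$ is consistent, so that the wave packets in $x$ and $y$ translate at speeds $2$ and $\sqrt2$ respectively. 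Granting this, the norm-squared of term (i) is bounded by a quantity of order $t^2\epsilon^3/L$ (the one-dimensional $u^4$-moment of $|g|^2$ times the normalization of the other factor), which with $t=\O(L)$ and $\epsilon=a/\sqrt L$ equals $\O(L^{-1/2})$; its square root is $\O(L^{-1/4})$. Combining this with the smaller contributions (ii)--(v) then gives $\norm{|\psi_A(t)\rangle-|\alpha(t)\rangle}=\O(L^{-1/4})$, completing the proof.
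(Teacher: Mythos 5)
Your proposal follows essentially the same route as the paper's proof: the same decomposition of the error into a dispersion-linearization term, a floor-rounding term, a domain-restriction term, a Lipschitz term for the momentum dependence of the outgoing phase, and an interior-strip term; the same factorization of $A(\phi_1,\phi_2)$ into two Dirichlet kernels in the rotated variables $\phi_1/2\pm\phi_2$; and the same identification of the dominant contribution $t^2\epsilon^3/L=\O(L^{-1/2})$ from the linearization term. (One small simplification you miss: writing $4\cos(\tfrac{\pi}{8}+\tfrac{\phi_1}{2})\cos(\tfrac{3\pi}{8}+\phi_2)$ as a sum of two cosines in the rotated variables makes the quadratic remainder $\O\bigl((\tfrac{\phi_1}{2}+\phi_2)^2+(\tfrac{\phi_1}{2}-\phi_2)^2\bigr)$ with no cross term to worry about; Cauchy--Schwarz is then only needed to pass from $(\alpha_1^2+\alpha_2^2)^2$ to $\alpha_1^4+\alpha_2^4$.)

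The one genuine gap is your treatment of the interior strip $|x-y|<C$ as applied to $|\psi_A(t)\rangle$. Your counting argument --- ``$\O(1)$ values of $x-y$ over an $\O(L)$-wide range of $x+y$, each with amplitude $\O(1/L)$'' --- is valid for $|\alpha(t)\rangle$, which is an explicit square wave packet supported on $\O(L)$ sites of the strip. It fails for $|\psi_A(t)\rangle$: that state is a momentum integral of scattering states and has no compact support in $x+y$, so a pointwise amplitude bound times a site count does not control $\Norm{P_2|\psi_A(t)\rangle}$. The paper's \lem{psiA_P2} instead computes $\langle\psi_A(t)|P_2|\psi_A(t)\rangle$ directly: summing over $s=x+y$ produces $2\pi\delta(\phi_1-\tilde\phi_1)$ and collapses one momentum integral, the interior amplitudes of the scattering states are bounded uniformly via \lem{defn_scatteringstates} applied to the effective one-dimensional Hamiltonian \eq{vr_eqn}, and the remaining integral of $|A|$ is controlled by the $d$-splitting trick of equation \eq{eqn_d}, giving $\O(\log^2 L/L)$ for the norm squared. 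The ingredients are all ones you invoke elsewhere, but the step as you state it would not go through.
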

\begin{proof}
To simplify matters, note that for $x\neq y$, $\langle x,y|\alpha(t)\rangle=\pm\langle y,x|\alpha(t)\rangle$ and $\langle x,y|\psi_{A}(t)\rangle=\pm\langle y,x|\psi_{A}(t)\rangle$.  Taking $C$ to be the maximum range of the interaction in our Hamiltonian, we have
\[
\left\Vert \,|\psi_{A}(t)\rangle-|\alpha(t)\rangle\right\Vert \leq2\left\Vert P_{1}|\psi_{A}(t)\rangle-P_{1}|\alpha(t)\rangle\right\Vert  + \left\Vert P_2 \ket{\psi_A(t)}\right\Vert + \left \Vert P_2 \ket{\alpha(t)}\right\Vert,
\]
where
\begin{equation}
     P_{1}=\sum_{y-x \geq C}|x,y\rangle\langle x,y| 
     \qquad P_2 = \sum_{|x-y| < C} \ket{x,y}\bra{x,y}.
\label{eq:P1P2}
\end{equation}
Now, for $y-x\geq C$, 
\begin{align*}
\langle x,y|\psi_{A}(t)\rangle & = \iint_{D_{\epsilon}}
  \frac{d\phi_{1}d\phi_{2}}{4\pi^{2}}
  e^{-it 4\cos(\frac{\pi}{8}+\frac{\phi_{1}}{2})
        \cos(\frac{3\pi}{8}+\phi_{2})} A(\phi_{1},\phi_{2})\frac{e^{-i\left(\frac{\pi}{4}+\phi_{1}\right)
       \left(\frac{x+y}{2}\right)}}{\sqrt{2}} \\
 & \qquad \left(e^{i\left(\frac{3\pi}
       {8}+\phi_{2}\right)\left(y-x\right)}\pm e^{-i\left(\frac{3\pi}{8}
       +\phi_{2}\right)
     \left(y-x\right)+
     i\theta_{\pm}(\frac{\pi}{4}+\phi_{1},\frac{3\pi}{8}+\phi_{2})}
      \right)\\
 & = \iint_{D_{\epsilon}}\frac{d\phi_{1}d\phi_{2}}{4\pi^{2}}
      \bigg[\f{1}{\sqrt{2}} e^{-it 4\cos(\frac{\pi}{8} + \frac{\phi_{1}}{2}) 
      \cos(\frac{3\pi}{8}+\phi_{2})} A(\phi_{1},\phi_{2}) \\
&    \qquad \left( e^{-i\pi x/2} e^{i\pi y/4} e^{-i\phi_{1}\left(\frac{x+y}{2}\right)}
 	 e^{i\phi_{2}\left(y-x\right)}\right.\\
& \qquad\quad\left.\pm  e^{i \pi x/4} e^{-i\pi y/2} e^{-i\phi_{1}
      \left(\frac{x+y}{2}\right)}
 	 e^{-i\phi_{2}\left(y-x\right)} e^{i \theta_{\pm}\left(\frac{\pi}{4} 
             + \phi_1 , \frac{3\pi}{8} + 
 	 \phi_2\right)} \right)\bigg].
\end{align*}

From \lem{a_xy}, for $x\leq y$, the state $\ket{\alpha(t)}$ takes the form
\begin{align*}
\langle x,y|\alpha(t)\rangle & = \frac{1}{\sqrt{2}}e^{-it\sqrt{2}}\left[ e^{-i\pi x /2}e^{i \pi y/4}  
	\left(\iint_{D_{\pi}}\frac{d\phi_{1}d\phi_{2}}{4\pi^{2}}\right.\right.\\
 &	\qquad  \left.\left. A(\phi_{1},\phi_{2}) 		
	e^{- i\phi_{1}\left(-\left\lfloor t\right\rfloor +\left\lfloor \frac{t}{\sqrt{2}}\right\rfloor +\frac{x+y}
	{2}\right)}e^{-2i\phi_{2}\left(-\left\lfloor t\right\rfloor -\left\lfloor \frac{t}{\sqrt{2}}\right\rfloor +\frac{x-y}
	{2}\right)}\right)\right.\\
 & \quad \pm e^{i\theta} e^{i\pi x/4}e^{-i \pi y/2} 
 	\left(\iint_{D_{\pi}}\frac{d\phi_{1}d\phi_{2}}{4\pi^{2}}\right.\\
 & \qquad \left.\left. A(\phi_{1},\phi_{2})
 	e^{-i\phi_{1}\left(-\left\lfloor t\right\rfloor +\left\lfloor \frac{t}{\sqrt{2}}\right\rfloor +\frac{x+y}{2}\right)}
 	e^{-2i\phi_{2}\left(-\left\lfloor t\right\rfloor -\left\lfloor \frac{t}{\sqrt{2}}\right\rfloor +\frac{y-x}
 	{2}\right)}\right)\right],
 \end{align*}
 where $D_{\pi}=[-\pi,\pi]\times[-\pi,\pi]$. Using these expressions for $\ket{\psi_A(t)}$ and $\ket{\alpha(t)}$,
we now write 
\[
P_{1}|\psi_{A}(t)\rangle-P_{1}|\alpha(t)\rangle=\pm |e_{1}(t)\rangle+|e_{2}(t)\rangle \pm|f_{1}(t)\rangle+|f_{2}(t)\rangle\pm|g_{1}(t)\rangle+|g_{2}(t)\rangle\pm|h(t)\rangle
\]
where each term in the above equation is supported only on states
$|x,y\rangle$ such that $y-x \geq C$, and (for $y - x \geq C$)
\begin{align*}
\langle x,y|e_{1}(t)\rangle & = 
	\frac{e^{i\theta}}{\sqrt{2}} e^{-it\sqrt{2}}
	e^{i \pi x/4}e^{-i\pi y/2}\iint_{D_{\pi}}\frac{d\phi_{1}d\phi_{2}}{4\pi^{2}}
	A(\phi_{1},\phi_{2}) \bigg[e^{-i\phi_{1}\left(-t+\frac{t}{\sqrt{2}}+\frac{x+y}{2}\right)}\\
&  \qquad\qquad e^{-2i\phi_{2}\left(-t-\frac{t}{\sqrt{2}}+\frac{y-x}{2}\right)}
	-e^{-i\phi_{1}\left(-\left\lfloor t\right\rfloor +\left\lfloor \frac{t}{\sqrt{2}}\right\rfloor +\frac{x+y}{2}
 		\right)}e^{-2i\phi_{2}\left(-\left\lfloor t\right\rfloor -\left\lfloor \frac{t}{\sqrt{2}}\right\rfloor 
 		+\frac{y-x}{2}\right)}\bigg]\\
\langle x,y|e_{2}(t)\rangle & = 
	\frac{1}{\sqrt{2}}e^{-it\sqrt{2}}e^{-i \pi x/2}e^{i\pi y/4}
	\iint_{D_{\pi}}\frac{d\phi_{1}d\phi_{2}}{4\pi^{2}}A(\phi_{1},\phi_{2})\bigg[e^{-i\phi_{1}
	\left(-t+\frac{t}{\sqrt{2}}+\frac{x+y}{2}\right)}\\
& \qquad\qquad e^{-2i\phi_{2}\left(-t-\frac{t}{\sqrt{2}}+\frac{x-y}{2}\right)}
		-e^{-i\phi_{1}\left(-\left\lfloor t\right\rfloor +\left\lfloor \frac{t}{\sqrt{2}}\right\rfloor +
		\frac{x+y}{2}\right)}e^{-2i\phi_{2}\left(-\left\lfloor t\right\rfloor -\left\lfloor 
		\frac{t}{\sqrt{2}}\right\rfloor +\frac{x-y}{2}\right)}\bigg]\\
\langle x,y|f_{1}(t)\rangle & = -
	\frac{e^{i\theta}}{\sqrt{2}} e^{-it\sqrt{2}}e^{i \pi x/4} 
	e^{-i\pi y/2}\iint_{D_{\pi}\setminus D_{\epsilon}}\frac{d\phi_{1}d\phi_{2}}{4\pi^{2}}
	A(\phi_{1},\phi_{2})\\
&  \qquad\qquad\qquad\qquad \qquad\qquad \qquad 
	e^{-i\phi_{1}\left(-t+\frac{t}{\sqrt{2}}+\frac{x+y}{2}\right)}e^{-2i\phi_{2}
	\left(-t-\frac{t}{\sqrt{2}}+\frac{y-x}{2}\right)}\\
\langle x,y|f_{2}(t)\rangle & =  -\frac{1}{\sqrt{2}}e^{-it\sqrt{2}}e^{-i\pi x/2}e^{i\pi y/4 }
	\iint_{D_{\pi}\setminus D_{\epsilon}}\frac{d\phi_{1}d\phi_{2}}{4\pi^{2}}
	A(\phi_{1},\phi_{2})\\
&  \qquad\qquad\qquad\qquad \qquad\qquad \qquad e^{-i\phi_{1}\left(-t+\frac{t}{\sqrt{2}}+\frac{x+y}{2}\right)}
	e^{-2i\phi_{2}\left(-t-\frac{t}{\sqrt{2}}+\frac{x-y}{2}\right)}\\
\langle x,y|g_{1}(t)\rangle & =   \frac{e^{i\theta}}{\sqrt{2}}e^{i \pi x/4}e^{- i \pi y/2}
	\iint_{D_{\epsilon}}\frac{d\phi_{1}d\phi_{2}}{4\pi^{2}}A(\phi_{1},\phi_{2})
	e^{-i\phi_{1}\left(\frac{x+y}{2}\right)}e^{-2i\phi_{2}\left(\frac{y-x}{2}\right)}\\
& \qquad \qquad 
	\left[e^{-it 4\cos(\frac{\pi}{8}+\frac{\phi_{1}}{2})\cos(\frac{3\pi}{8}
	+\phi_{2})} -e^{-it\left(\sqrt{2}+\sqrt{2}\left(\frac{\phi_{1}}{2}-\phi_{2}\right)
 	-2\left(\frac{\phi_{1}}{2}+\phi_{2}\right)\right)}\right]\\
\langle x,y|g_{2}(t)\rangle & =  \frac{1}{\sqrt{2}}e^{-i\pi x/2}e^{i\pi y/4}
	\iint_{D_{\epsilon}}\frac{d\phi_{1}d\phi_{2}}{4\pi^{2}}A(\phi_{1},\phi_{2})
	e^{-i\phi_{1}\left(\frac{x+y}{2}\right)}e^{-2i\phi_{2}\left(\frac{x-y}{2}\right)}\\
 & \qquad\qquad\left[e^{-it 4\cos(\frac{\pi}{8}+\frac{\phi_{1}}{2})\cos(\frac{3\pi}{8}+\phi_{2})} - e^{-it\left(\sqrt{2}+\sqrt{2}\left(\frac{\phi_{1}}{2}-\phi_{2}\right)
 	-2\left(\frac{\phi_{1}}{2}+\phi_{2}\right)\right)}\right]\\
\langle x,y|h(t)\rangle & =  \frac{1}{\sqrt{2}}e^{i\pi x/4}e^{-i \pi y/2}
	\iint_{D_{\epsilon}}\frac{d\phi_{1}d\phi_{2}}{4\pi^{2}}
	A(\phi_{1},\phi_{2})e^{-i\phi_{1}\left(\frac{x+y}{2}\right)}e^{-2i\phi_{2}\left(\frac{y-x}{2}\right)}\\
& \qquad \qquad e^{-it 4\cos(\frac{\pi}{8}+\frac{\phi_{1}}{2}) \cos(\frac{3\pi}{8}+\phi_{2})} \left(e^{i\theta_{\pm}(\frac{\pi}{4}+\phi_{1},\frac{3\pi}{8}+\phi_{2})}-e^{i\theta}\right).
\end{align*}
We now proceed to bound the norm of each of these states.  We repeatedly
make the following replacement inside the integrals (here $\phi_{1},\phi_{2},\tilde{\phi}_{1},\tilde{\phi}_2 \in (-\pi,\pi)$):
\begin{align*}
\sum_{x,y=-\infty}^{\infty}e^{ix\left(\frac{1}{2}\left(\phi_{1}-\tilde{\phi}_{1}\right)-\left(\phi_{2}-\tilde{\phi}_{2}\right)\right)}e^{iy\left(\frac{1}{2}\left(\phi_{1}-\tilde{\phi}_{1}\right)+\left(\phi_{2}-\tilde{\phi}_{2}\right)\right)}
 & =  4\pi^{2}\delta(\phi_{1}-\tilde{\phi}_{1}) \delta(\phi_{2}-\tilde{\phi}_{2}).
 \end{align*}
 Using this formula we get
 \begin{align*}
\langle e_{1}(t)|e_{1}(t)\rangle & =  \sum_{y-x\geq C}\langle e_{1}(t)|x,y\rangle\langle x,y|e_{1}(t)\rangle\\
   & \leq  \sum_{x=-\infty}^{\infty}\sum_{y=-\infty}^{\infty}\bigg|\frac{1}{\sqrt{2}}
	 	\iint_{D_{\pi}}\frac{d\phi_{1}d\phi_{2}}{4\pi^{2}}A(\phi_{1},\phi_{2})
	 	\bigg[e^{-i\phi_{1}\left(-t+\frac{t}{\sqrt{2}}+\frac{x+y}{2}\right)}\\
	&\qquad e^{-2i\phi_{2}\left(-t-\frac{t}
	 	{\sqrt{2}}+\frac{y-x}{2}\right)}-e^{-i\phi_{1}\left(-\left\lfloor t\right\rfloor +\left\lfloor \frac{t}{\sqrt{2}}\right\rfloor 
 		+\frac{x+y}{2}\right)}e^{-2i\phi_{2}\left(-\left\lfloor t\right\rfloor -\left\lfloor \frac{t}{\sqrt{2}}
 		\right\rfloor +\frac{y-x}{2}\right)}\bigg]\bigg|^{2}\\
 & =  \frac{1}{2}\iint_{D_{\pi}}\frac{d\phi_{1}d\phi_{2}}{4\pi^{2}}\left|A(\phi_{1},\phi_{2})
 		\right|^{2}\bigg|e^{-i\phi_{1}\left(-t+\frac{t}{\sqrt{2}}\right)}e^{-2i\phi_{2}\left(-t-\frac{t}
 		{\sqrt{2}}\right)}\\
 & \qquad -e^{-i\phi_{1}\left(-\left\lfloor t\right\rfloor +\left\lfloor \frac{t}
 		{\sqrt{2}}\right\rfloor \right)}e^{-2i\phi_{2}\left(-\left\lfloor t\right\rfloor -\left\lfloor 
 		\frac{t}{\sqrt{2}}\right\rfloor \right)}\bigg|^{2}.
\end{align*}
 Now use the fact that $\left|e^{-ic}-1\right|^{2}\leq c^{2}$ for
$c\in\mathbb{R}$ to get 
\begin{align*}
\langle e_{1}(t)|e_{1}(t)\rangle & \leq  \frac{1}{2}\iint_{D_{\pi}}\left(\frac{d\phi_{1}d\phi_{2}}
	{4\pi^{2}}\right)\left|A(\phi_{1},\phi_{2})\right|^{2}\Bigg(-\phi_{1}\left(-t+\frac{t}{\sqrt{2}}+
	\left\lfloor t\right\rfloor -\left\lfloor \frac{t}{\sqrt{2}}\right\rfloor \right)\\
 &  \quad \quad -2\phi_{2}\left(-t-\frac{t}{\sqrt{2}}+\left\lfloor t\right\rfloor +\left\lfloor \frac{t}{\sqrt{2}}\right\rfloor \right)\Bigg)^{2}\\
 & \leq  4\iint_{D_{\pi}}\frac{d\phi_{1}d\phi_{2}}{4\pi^{2}}\left|A(\phi_{1},\phi_{2})\right|^{2}\left(\phi_{1}^{2}+4\phi_{2}^{2}\right)
\end{align*}
using the Cauchy-Schwarz inequality and the fact that $\left|t-{t}/{\sqrt{2}}-\left\lfloor t\right\rfloor -\left\lfloor {t}/{\sqrt{2}}\right\rfloor \right|\leq2$.
So 
\begin{align*}
\langle e_{1}(t)|e_{1}(t)\rangle & \leq 4\left(\iint_{D_{\pi}\setminus D_{\epsilon}}\frac{d\phi_{1}d\phi_{2}}{4\pi^{2}}+\iint_{D_{\epsilon}}\frac{d\phi_{1}d\phi_{2}}{4\pi^{2}}\right)\left|A(\phi_{1},\phi_{2})\right|^{2}\left(\phi_{1}^{2}+4\phi_{2}^{2}\right)\\
 & \leq  4\cdot 5\pi^{2} \frac{4\pi}{L\epsilon} + 20\epsilon^{2}\\
 & =  \frac{80\pi^{3}}{L\epsilon}+20\epsilon^{2}
\end{align*}
where we have used \lem{Let--and} and the fact that $\phi_{1}^{2}+4\phi_{2}^{2}\leq 5\epsilon^{2}$ on $D_\epsilon$. Similarly, 
\[
\langle e_{2}(t)|e_{2}(t)\rangle\leq\frac{80\pi^{3}}{L\epsilon}+20\epsilon^{2}.
\]
 Now
 \begin{align*}
\langle f_{1}(t)|f_{1}(t)\rangle & \leq  \frac{1}{2}\iint_{D_{\pi}\setminus D_{\epsilon}}\frac{d\phi_{1}d\phi_{2}}{4\pi^{2}}\left|A(\phi_{1},\phi_{2})\right|^{2}\\
 & \leq  \frac{2\pi}{L\epsilon}
 \end{align*}
 by \lem{Let--and}, and similarly
 \[
\langle f_{2}(t)|f_{2}(t)\rangle\leq\frac{2\pi}{L\epsilon}.
\]
Moving on to the next term, 
\begin{align}
\langle g_{1}(t)|g_{1}(t)\rangle & \leq  \frac{1}{2}\iint_{D_{\epsilon}}\frac{d\phi_{1}d\phi_{2}}{4\pi^{2}}
	\left|A(\phi_{1},\phi_{2})\right|^{2}\Bigg|e^{-it 4\cos(\frac{\pi}{8}+\frac{\phi_{1}}{2})
	\cos(\frac{3\pi}{8}+\phi_{2})}\nonumber \\
 &  \qquad\qquad\qquad -e^{-it\left(\sqrt{2}+\sqrt{2}\left(\frac{\phi_{1}}{2}-\phi_{2}\right)-2\left(\frac{\phi_{1}}{2}+\phi_{2}\right)\right)}
 	\Bigg|^{2}\nonumber \\
 & \leq  \frac{1}{2}\iint_{D_{\epsilon}}\frac{d\phi_{1}d\phi_{2}}{4\pi^{2}}
 	\Bigg[\left|A(\phi_{1},\phi_{2})\right|^{2}t^{2} \left(4\cos\left(\frac{\pi}{8}+\frac{\phi_{1}}{2}\right)\cos\left(\frac{3\pi}{8}+\phi_{2}\right)\right.\nonumber\\
&\qquad\qquad\qquad\left.
 	-\sqrt{2}-\sqrt{2}\left(\frac{\phi_{1}}{2}-\phi_{2}\right)+2\left(\frac{\phi_{1}}{2}+\phi_{2}\right)\right)^{2}\Bigg]
 	\label{eq:g_bound}
\end{align}
using $\left|e^{-ic}-1\right|^{2}\leq c^{2}$ for $c\in\mathbb{R}$.
Now 
\begin{align*}
  & 4\cos\left(\frac{\pi}{8}+\frac{\phi_{1}}{2}\right)\cos\left(\frac{3\pi}{8}+\phi_{2}\right) \\
 &\quad =  
	2\cos\left(\frac{\pi}{2}+\frac{\phi_{1}}{2}+\phi_{2}\right)+2\cos\left(-\frac{\pi}{4}+\frac{\phi_{1}}{2}-\phi_{2}\right)\\
 &\quad =  - 2 \sin\left(\frac{\phi_1}{2} + \phi_2\right) + \sqrt{2} \cos\left(\frac{\phi_1}{2}-\phi_2\right) + 
 	\sqrt{2} \sin\left(\frac{\phi_1}{2} - \phi_2\right)
 \end{align*}
so 
\begin{align*}
 &   \left|4\cos\left(\frac{\pi}{8}+\frac{\phi_{1}}{2}\right)\cos\left(\frac{3\pi}{8}+\phi_{2}\right)-\sqrt{2}-\sqrt{2}
 	\left(\frac{\phi_{1}}{2}-\phi_{2}\right)+2\left(\frac{\phi_{1}}{2}+\phi_{2}\right)\right|\\
 & \quad \leq  \left|\sqrt{2}\left(\cos\left(\frac{\phi_{1}}{2}-\phi_{2}\right)-1\right)\right| 
 	+\left|\sqrt{2}\left(\sin\left(\frac{\phi_{1}}{2}-\phi_{2}\right)-\left(\frac{\phi_{1}}{2}-\phi_{2}\right)\right)\right|\\
 & \qquad +\left|2\left(\sin\left(\frac{\phi_{1}}{2}+\phi_{2}\right)-\left(\frac{\phi_{1}}{2}+\phi_{2}\right)\right)\right|\\
 & \quad \leq  \sqrt{2}\left(\frac{\phi_1}{2}-\phi_{2}\right)^{2}+\sqrt{2}\left(\frac{\phi_{1}}{2}
 	-\phi_{2}\right)^{2}+2\left(\frac{\phi_{1}}{2}+\phi_{2}\right)^{2} \\
 & \quad \leq  4\left(\left(\frac{\phi_{1}}{2}+\phi_{2}\right)^{2}+\left(\frac{\phi_{1}}{2}-\phi_{2}\right)^{2}\right),
\end{align*}
using $|{\cos x -1}|\leq x^2$ and $|{\sin x-x}|\leq x^2$ for $x\in \mathbb{R}$. Plugging this into equation \eq{g_bound}, we get 
\begin{align*}
\langle g_{1}(t)|g_{1}(t)\rangle & \leq \frac{1}{2}\iint_{D_{\epsilon}}
	\frac{d\phi_{1}d\phi_{2}}{4\pi^{2}} 16 \left|A(\phi_{1},\phi_{2})\right|^{2}t^{2}
	\left(\left(\frac{\phi_{1}}{2} + \phi_2\right)^2+\left(\frac{\phi_1}{2} - \phi_{2}\right)^2\right)^{2} \\
 & \leq  16 t^2 \iint_{D_{\epsilon}}\frac{d\phi_{1}d\phi_{2}}{4\pi^{2}}
 	\left|A(\phi_{1},\phi_{2})\right|^{2}\left(\left(\frac{\phi_1}{2} + \phi_2\right)^4 
 	+ \left(\frac{\phi_1}{2} - \phi_2\right)^4\right)\\
 & \leq  \frac{16 t^{2}}{L^2} \iint_{D_{\epsilon}}\frac{d\phi_{1}d\phi_{2}}{4\pi^{2}}
	  \frac{\sin^2(\frac{L}{2} [\frac{\phi_1}{2} + \phi_2])}
	 {\sin^2(\frac{1}{2} [\frac{\phi_1}{2} + \phi_2])}
	 \frac{\sin^2(\frac{L}{2} [-\frac{\phi_1}{2} + \phi_2])}
	 {\sin^2(\frac{1}{2} [-\frac{\phi_1}{2} + \phi_2])}\\
 & \qquad \left(\left(\frac{\phi_1}{2}+ \phi_2\right)^4 
 	+ \left(\frac{\phi_1}{2} - \phi_2\right)^4\right)
\end{align*}
where we used the Cauchy-Schwarz inequality in the second line and equation \eq{A_summed} in the last line.  Changing coordinates to
\[
  \alpha_1 = \phi_1 + \frac{ \phi_2}{2} \qquad \alpha_2 = \frac{\phi_1}{2} -\phi_2 
 \]
and realizing that $|\alpha_1|,|\alpha_2| < 3\epsilon/2$ for $(\phi_1,\phi_2)\in D_{\epsilon}$, we see that
\begin{align*}
  \bra{g_1(t)} g_1(t)\rangle &\leq \frac{16t^2}{L^2} \int_{-{3\epsilon/2}}^{3\epsilon/2} \frac{d\alpha_1}{2\pi}
      \int_{-{3\epsilon/2}}^{3\epsilon/2} \frac{d\alpha_2}{2\pi} 
      \frac{\sin^2(\frac{1}{2} L\alpha_1)}
	 {\sin^2(\frac{1}{2} \alpha_1)}
	 \frac{\sin^2(\frac{1}{2} L\alpha_2)}
	 {\sin^2(\frac{1}{2} \alpha_2)} \left(\alpha_1^4+\alpha_2^4\right)\\
   &= \frac{32t^2}{L^2} \int_{-{3\epsilon/2}}^{3\epsilon/2} \frac{d\alpha_1}{2\pi}
      \int_{-{3\epsilon/2}}^{3\epsilon/2} \frac{d\alpha_2}{2\pi}
      \frac{\sin^2(\frac{1}{2} L\alpha_1)}
	 {\sin^2(\frac{1}{2} \alpha_1)}
	 \frac{\sin^2(\frac{1}{2} L\alpha_2)}
	 {\sin^2(\frac{1}{2} \alpha_2)} \alpha_1^4\\
   &\leq \frac{32t^2}{L} \int_{-3\epsilon/2}^{3\epsilon/2} \frac{d\alpha_1}{2\pi}  
     \frac{\sin^2(\frac{1}{2} L\alpha_1)}
	 {\sin^2(\frac{1}{2} \alpha_1)} \alpha_1^4\\
   &\leq \frac{32 t^2}{L} \int_{-3\epsilon/2}^{3\epsilon/2} \frac{d\alpha_1}{2\pi} \frac{\pi^2}{\alpha_1^2} \alpha_1^4\\
   &= \frac{36 \pi t^2\epsilon^3}{L},
\end{align*}
with the same bound on $\langle g_{2}(t)|g_{2}(t)\rangle$.
 
Finally, 
\begin{align*}
\langle h(t)|h(t)\rangle & \leq  \frac{1}{2}\iint_{D_{\epsilon}}\frac{d\phi_{1}d\phi_{2}}{4\pi^{2}}
	\left|A(\phi_{1},\phi_{2})\right|^{2}\left|e^{i\theta_\pm (\tfrac{\pi}{4}+\phi_{1},
	\tfrac{3\pi}{8}+\phi_{2})}-e^{i\theta}\right|^{2}.
 \end{align*}
Recall that $e^{i\theta_\pm (p_1,p_2)}=T(p_1,p_2) \pm R(p_1,p_2)$ is obtained by solving for the effective single-particle S-matrix for the Hamiltonian \eq{vr_eqn}. For $p_1$ near ${\pi}/{4}$ we divide this Hamiltonian by $2\cos({p_1}/{2})$ to put it in the form considered in \cite{Childs_Gosset}, where the potential term is now $\mathcal{V}(|r|)/(2\cos({p_1}/{2}))$. The entries $T(p_1,p_2)$ and $R(p_1,p_2)$ of this S-matrix are bounded rational functions of $z=e^{ip_2}$ and $(2\cos({p_1}/{2}))^{-1}$ \cite{Childs_Gosset}, so they are differentiable as a function of $p_1$ and $p_2$ on some neighborhood $U$ of $({\pi}/{4},{3\pi}/{8})$ (and have bounded partial derivatives on this neighborhood).

For $\epsilon$ small enough that $D_\epsilon = ({\pi}/{4}, {3\pi}/{8}) \subset U$ we get, using the mean value theorem and the fact that $\theta=\theta_\pm ({\pi}/{4}, {3\pi}/{8})$,
\begin{align*}
\left|e^{i\theta_\pm (\tfrac{\pi}{4}+\phi_{1}, \tfrac{3\pi}{8}+\phi_{2})}-e^{i\theta}\right| & \leq \sqrt{\phi_1^2+\phi_2^2} \max_{U} \big|\vec{\nabla} e^{i\theta_\pm}\big| \quad \text{for }(\phi_1,\phi_2)\in D_{\epsilon}\\
& \leq  \epsilon \Gamma
\end{align*}
for some constant $\Gamma$ (independent of $L$).
Therefore
\begin{align*}
\langle h(t)|h(t)\rangle & \leq  \frac{1}{2}\iint_{D_{\epsilon}}\frac{d\phi_{1}d\phi_{2}}{4\pi^{2}}
	\left|A(\phi_{1},\phi_{2})\right|^{2} \epsilon^2 \Gamma^2\\
& \leq \frac{1}{2}\Gamma^2 \epsilon^2.
\end{align*}
 
Putting these bounds together, we get 
\begin{align*}
\Norm{P_1|\psi_A(t)\rangle-P_1|\alpha(t)\rangle}  
 & \leq \norm{|e_{1}(t)\rangle} + \norm{|e_{2}(t)\rangle} + 
        \norm{|f_{1}(t)\rangle} + \norm{|f_{2}(t)\rangle} \\
 &  \qquad + \norm{|g_{1}(t)\rangle} + \norm{|g_{2}(t)\rangle} 
           + \norm{|h(t)\rangle} \\
 & \leq 2\left(\frac{80\pi^{3}}{L\epsilon}+20\epsilon^{2}\right)^{\frac{1}{2}}
 	+2\left(\frac{2\pi}{L\epsilon}\right)^{\frac{1}{2}}+2\left(\frac{36\pi t^{2}\epsilon^{3}}{L}\right)^{\frac{1}{2}}+\frac{1}{\sqrt{2}}\Gamma \epsilon.
\end{align*}
 Letting $\epsilon={a}/{\sqrt{L}}$ and $t\leq c_{0}L$ we get 
\begin{equation}
\left\Vert \,P_1|\psi_A(t)\rangle-P_1|\alpha(t)\rangle\right\Vert = \O(L^{-{1}/{4}}).\label{eq:psiA_alpha}
\end{equation}

Since $P_2\ket{\alpha(t)}$ has support on at most $4CL$ basis states $|x,y\rangle$, and since $|\langle x,y|P_2|\alpha(t)\rangle|^2=\O( L^{-2})$, we get
\begin{equation}
  \left\Vert P_2\ket{\alpha(t)}\right\Vert = \O(L^{-{1}/{2}}).\label{eq:P2alpha_bound}
\end{equation}
Furthermore, \lem{psiA_P2} says that
\begin{equation}
\label{eq:boundinside}
\left\Vert P_2|\psi_{A}(t)\rangle\right\Vert=\O\left(\frac{\log L}{\sqrt{L}}\right).
\end{equation}
Using equations \eq{psiA_alpha}, \eq{P2alpha_bound}, and \eq{boundinside}, we find
\begin{align*}
\left\Vert |\psi_{A}(t)\rangle-|\alpha(t)\rangle\right\Vert  & \leq 2\left\Vert P_{1}|\psi_{A}(t)\rangle-P_{1}|\alpha(t)\rangle\right\Vert +\left\Vert P_{2}|\alpha(t)\rangle\right\Vert +\left\Vert P_{2}|\psi_{A}(t)\rangle\right\Vert \\
 & = \O(L^{-{1}/{4}}),
\end{align*}
which completes the proof. 
\end{proof}

\subsection{Technical lemmas}
\label{sec:techlem}

In this section we prove three lemmas that are used in the proof of \thm{twopart}.

\begin{lemma} \label{lem:alpha}
Let $|\alpha(t)\rangle$ be defined as in \thm{twopart}.
Then
\[
\langle\alpha(t)|\alpha(t)\rangle=1+\O(L^{-1}).
\]
\end{lemma}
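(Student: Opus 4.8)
The plan is to compute $\langle\alpha(t)|\alpha(t)\rangle=\sum_{x,y}|a_{xy}(t)|^2$ directly from the explicit amplitudes in \thm{twopart}, splitting each $|a_{xy}(t)|^2$ into a ``diagonal'' part (the sum of the two squared moduli) that will sum to exactly $1$, and a ``cross'' part (twice the real part of the interference) that I will bound by $\O(L^{-1})$ via a geometric sum. Two structural facts drive everything. Since $a_{xy}(t)=\pm a_{yx}(t)$, we have $|a_{xy}(t)|^2=|a_{yx}(t)|^2$ for all $x,y$, so it suffices to analyze $|a_{xy}|^2$ for $x\le y$ and extend symmetrically. Moreover the indicator $F$ factorizes as $F(x,y,t)=\mathbf{1}[x\in\mathcal L]\,\mathbf{1}[y\in\mathcal R]$, where $\mathcal L$ and $\mathcal R$ are the two length-$L$ intervals of integers singled out by the conditions defining $F$; in particular $\sum_{x,y}F(x,y,t)=|\mathcal L|\,|\mathcal R|=L^2$ and likewise $\sum_{x,y}F(y,x,t)=L^2$.

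First I would expand $|a_{xy}|^2$ for $x\le y$. The squared moduli of the two terms of $a_{xy}(t)$ are $\tfrac{1}{2L^2}F(x,y,t)$ and $\tfrac{1}{2L^2}F(y,x,t)$, so the diagonal part is $\tfrac{1}{2L^2}\big(F(x,y,t)+F(y,x,t)\big)$. This expression is symmetric in $x\leftrightarrow y$, hence consistent with the extension by $|a_{xy}|^2=|a_{yx}|^2$, so it holds for all $x,y$. Summing it over all $x,y$ gives $\tfrac{1}{2L^2}(L^2+L^2)=1$ exactly. The interference (cross) part of $|a_{xy}|^2$ works out, for $x\le y$, to $\tfrac{1}{L^2}\Re\!\big(\pm e^{i\theta}e^{-3\pi i|x-y|/4}\big)F(x,y,t)F(y,x,t)$: the key point is that the collected phase $e^{3\pi i(x-y)/4}=e^{-3\pi i|x-y|/4}$ depends only on $|x-y|$, so this cross part is again symmetric in $x\leftrightarrow y$ and therefore correct for all $x,y$. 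Since $F(x,y,t)F(y,x,t)=\mathbf{1}[x\in I]\,\mathbf{1}[y\in I]$ with $I=\mathcal L\cap\mathcal R$ a set of consecutive integers, the entire cross contribution is
\[
\frac{1}{L^2}\sum_{x,y\in I}\Re\!\big(\pm e^{i\theta}e^{-3\pi i|x-y|/4}\big)=\frac{1}{L^2}\,\Re\!\big(\pm e^{i\theta}T\big),\qquad T:=\sum_{x,y\in I}e^{-3\pi i|x-y|/4}.
\]

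It then remains to show $|T|=\O(L)$. Writing $n=|I|\le L$ and $z=e^{-3\pi i/4}\neq 1$, one has $T=n+2\sum_{d=1}^{n-1}(n-d)z^{d}$. The weighted geometric sum $\sum_{d=1}^{n-1}(n-d)z^{d}$ is $\O(n)$ by Abel summation: the partial sums $\sum_{d\le k}z^{d}$ are bounded by $2/|1-z|=1/|\sin(3\pi/8)|$, while the weights $n-d$ are monotone with total variation $\O(n)$. Hence $|T|=\O(n)=\O(L)$, the cross contribution is $\tfrac{1}{L^2}|\Re(\pm e^{i\theta}T)|\le \tfrac{1}{L^2}|T|=\O(L^{-1})$, and combining with the diagonal part gives $\langle\alpha(t)|\alpha(t)\rangle=1+\O(L^{-1})$.

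The main obstacle — the only genuinely non-routine step — is the cross term. A crude count gives $|I|^2=\O(L^2)$ nonzero pairs each of magnitude $\O(L^{-2})$, which only yields an $\O(1)$ bound; the improvement to $\O(L^{-1})$ comes entirely from the cancellation in the oscillating phase $e^{-3\pi i|x-y|/4}$, i.e.\ from the relative momentum $p_2=3\pi/8$ being bounded away from $0$. This is exactly analogous to the single-particle normalization estimate preceding the proof of \thm{singlepart}, where the overlap term was likewise controlled by summing a geometric series and bounded by $\O(L^{-1})$.
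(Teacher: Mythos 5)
Your proposal is correct and follows essentially the same route as the paper's proof: split $|a_{xy}(t)|^2$ into the two squared moduli (which sum to $1$ up to the diagonal bookkeeping) plus an interference term supported on $x,y\in\mathcal{L}\cap\mathcal{R}$, an interval of at most $L$ consecutive integers, and then beat the trivial $\O(1)$ bound on the interference by exploiting cancellation in the phase $e^{\pm 3\pi i(x-y)/4}$ via a geometric sum. The only differences are cosmetic — the paper routes the computation through the projector onto $x\le y$ and sums the inner geometric series explicitly, whereas you sum over all pairs and use Abel summation — and both yield the same $\O(L^{-1})$ bound.
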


\begin{proof}
Define
\[
\Pi=\sum_{x\leq y}|x,y\rangle\langle x,y|.
\]
Note that, since $\langle x,y|\alpha(t)\rangle=\pm\langle y,x|\alpha(t)\rangle$ for $x\neq y$,
\begin{align*}
\langle\alpha(t)|\alpha(t)\rangle & = 2\langle\alpha(t)|\Pi|\alpha(t)\rangle-\sum_{x=-\infty}^{\infty}\langle\alpha(t)|x,x\rangle\langle x,x|\alpha(t)\rangle\\
 & = 2\langle\alpha(t)|\Pi|\alpha(t)\rangle+\O(L^{-1})
\end{align*}
where the last line follows since $|\langle x,x|\alpha(t)\rangle|^{2}$
is nonzero for at most $L$ values of $x$ and $|\langle x,x|\alpha(t)\rangle|^{2}=\O(L^{-2})$.
We now show that
\[
\langle\alpha(t)|\Pi|\alpha(t)\rangle=\frac{1}{2}+O(L^{-1}).
\]
Note that
\begin{align*}
\langle\alpha(t)|\Pi|\alpha(t)\rangle &= \frac{1}{2L^{2}}\sum_{x\leq y}\Bigg(F(x,y,t) +F(y,x,t)  \\
&\qquad \pm e^{i\theta}e^{\frac{3i\pi}{4}x}e^{-\frac{3i\pi}{4}y}F(x,y,t) F(y,x,t) \\
&\qquad \pm e^{-i\theta}e^{-\frac{3i\pi}{4}x}e^{\frac{3i\pi}{4}y}F(x,y,t) F(y,x,t) \Bigg).
\end{align*}
Now
\[
\sum_{x\leq y}F(y,x,t) =\sum_{y\leq x}F(x,y,t) 
\]
and
\begin{align*}
\frac{1}{2L^{2}}\sum_{x\leq y}\left[F(x,y,t) +F(y,x,t) \right] &= \frac{1}{2L^{2}}\left(\sum_{x=-\infty}^{\infty}\sum_{y=-\infty}^{\infty}F(x,y,t) -\sum_{x=-\infty}^{\infty}F(x,x,t) \right)\\
 &= \frac{1}{2}+\O(L^{-1}).
\end{align*}

We now establish the bound
\[
\left|\frac{1}{2L^{2}}\sum_{x\leq y}e^{\frac{3i\pi}{4}x}e^{-\frac{3i\pi}{4}y}F(x,y,t) F(y,x,t) \right|=\O(L^{-1})
\]
to complete the proof.  To get this bound, note that both $F(x,y,t) =1$ and $F(y,x,t) =1$ if and only if $x,y \in \mathcal{B}$ where
\[
\mathcal{B}=\{-M-L+2\left\lfloor t\right\rfloor ,\ldots,-M-1+2\left\lfloor t\right\rfloor \} \cap \left\{M+1-2\left\lfloor \frac{t}{\sqrt{2}}\right\rfloor ,\ldots,M+L-2\left\lfloor \frac{t}{\sqrt{2}}\right\rfloor \right\}.
\]
Observe that
 \[
\mathcal{B}=\{j,j+1,\ldots,j+l\}
\]
for some $j,l\in\mathbb{Z}$ with $l<L$. So
\begin{align*}
\frac{1}{2L^{2}}\left|\sum_{x\leq y}e^{\frac{3i\pi}{4}x}e^{-\frac{3i\pi}{4}y}F(x,y,t) F(y,x,t) \right| & = \frac{1}{2L^{2}}\left|\sum_{x,y\in \mathcal{B},\, x\leq y}e^{\frac{3i\pi}{4}x}e^{-\frac{3i\pi}{4}y}\right|\\
 &  = \frac{1}{2L^{2}}\left|\sum_{y=j}^{j+l}\sum_{x=j}^{y}e^{\frac{3i\pi}{4}x}e^{-\frac{3i\pi}{4}y}\right|\\
 &= \frac{1}{2L^{2}}\left|\sum_{y=j}^{j+l}e^{-\frac{3i\pi}{4}y}e^{3i\frac{\pi}{4}j}\frac{e^{3i\frac{\pi}{4}\left(y+1-j\right)}-1}{e^{3i\frac{\pi}{4}}-1}\right|\\
 & \leq \frac{l+1}{2L^{2}}\frac{2}{\left|e^{3i\frac{\pi}{4}}-1\right|}\\
 &= \O(L^{-1})
 \end{align*}
since $l<L$.
\end{proof}
\begin{lemma}
\label{lem:Let--and}Let $k\in(-\pi,0)\cup(0,\pi)$ and $0<\epsilon<\min\left\{ \pi-|k|,|k|\right\}$.
Let
\begin{align*}
D_{\epsilon} & =  \left[-\epsilon,\epsilon\right]\times\left[-\epsilon,\epsilon\right]\\
D_{\pi} & =  \left[-\pi,\pi\right]\times\left[-\pi,\pi\right].
\end{align*}
Then 
\begin{align*}
\iint_{D_{\pi}}\frac{d\phi_{1}d\phi_{2}}{4\pi^{2}}|A(\phi_{1},\phi_{2})|^{2} & =  1\\
\iint_{D_{\pi}\setminus D_{\epsilon}}\frac{d\phi_{1}d\phi_{2}}{4\pi^{2}}|A(\phi_{1},\phi_{2})|^{2} & \leq  \frac{4\pi}{L\epsilon}\\
\iint_{D_{\epsilon}}\frac{d\phi_{1}d\phi_{2}}{4\pi^{2}}|B(\phi_{1},\phi_{2},k)|^{2} & \leq  \frac{16\pi^{2}}{L^{2}\epsilon^{2}}\end{align*}
where $A(\phi_1,\phi_2)$ and $B(\phi_1,\phi_2,k)$ are given by equation \eq{A}.
 \end{lemma}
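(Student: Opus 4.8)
The plan is to exploit that, after collecting the coefficients of the summation indices $x$ and $y$, both $A$ and $B$ factor into products of one-dimensional geometric sums, so that $|A|^2$ and $|B|^2$ become products of Fej\'er kernels and all three claims reduce to standard one-dimensional estimates. Writing $u=\frac{\phi_1}{2}+\phi_2$ and $v=\frac{\phi_1}{2}-\phi_2$ and using $\bigl|\sum_{j=a}^{a+L-1}e^{ij\theta}\bigr|=\frac{|\sin(L\theta/2)|}{|\sin(\theta/2)|}$, equation \eq{A} gives
\[
|A(\phi_1,\phi_2)|^2=\frac{1}{L^2}\,F_L(u)\,F_L(v),\qquad F_L(\theta):=\frac{\sin^2(L\theta/2)}{\sin^2(\theta/2)},
\]
and the identical computation for $B$ yields $|B(\phi_1,\phi_2,k)|^2=\frac{1}{L^2}F_L(u')F_L(w')$ with $u'=\frac{\phi_1}{2}-\phi_2-2k$ and $w'=\frac{\phi_1}{2}+\phi_2+2k$. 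I will use the two one-dimensional facts $\frac{1}{2\pi}\int_{-\pi}^{\pi}F_L=L$ and, exactly as in \eq{fbound} (via $\sin^2(\theta/2)\ge\theta^2/\pi^2$ on $[-\pi,\pi]$), $\frac{1}{2\pi}\int_{\delta\le|\theta|\le\pi}F_L\le\frac{\pi}{\delta}$.

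For the first identity I would not change variables but integrate term by term. Expanding $|A|^2$ as a fourfold sum over $x,x',y,y'$, the $\phi_2$-integral $\frac{1}{2\pi}\int_{-\pi}^{\pi}e^{i\phi_2[(x-x')-(y-y')]}\,d\phi_2$ forces $y-y'=x-x'$; under this constraint the $\phi_1$-exponent collapses to $\phi_1(x-x')$, an integer multiple of $\phi_1$, so the $\phi_1$-integral forces $x=x'$ and hence $y=y'$. Only the $L^2$ diagonal terms survive, giving $\frac{1}{L^2}\cdot L^2=1$. The calculation is clean precisely because the two blocks of indices have opposite signs, so $(x+y)-(x'+y')$ is automatically even once $x-x'=y-y'$.

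For the second inequality I would pass to $(u,v)$, whose Jacobian is $1$, so $\iint|A|^2\,d\phi_1 d\phi_2=\frac{1}{L^2}\iint_{\Delta}F_L(u)F_L(v)\,du\,dv$, where $\Delta=\{|u+v|\le\pi,\ |u-v|\le2\pi\}$ is the image of $D_\pi$. For each fixed $u\in[-\tfrac{3\pi}{2},\tfrac{3\pi}{2}]$ the admissible $v$ form an interval of length at most $2\pi$, so by periodicity and nonnegativity of $F_L$ the inner integral $\frac{1}{2\pi}\int F_L(v)\,dv\le L$. Since $|\phi_1|>\epsilon$ or $|\phi_2|>\epsilon$ implies $|u|>\tfrac{\epsilon}{2}$ or $|v|>\tfrac{\epsilon}{2}$, the set $D_\pi\setminus D_\epsilon$ maps into $\{|u|>\tfrac{\epsilon}{2}\}\cup\{|v|>\tfrac{\epsilon}{2}\}$, and bounding the outer one-dimensional tail with the estimate above produces a bound of the form $\frac{c\pi}{L\epsilon}$. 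A naive split gives $c=8$; recovering the stated $c=4$ requires the sharper observation that on the folded range $\pi<|u|\le\tfrac{3\pi}{2}$ the admissible $v$-interval is pushed away from the peak of $F_L$, so its contribution is lower order. This constant bookkeeping is the only delicate point in the whole lemma.

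For the third inequality the mechanism is different: on $D_\epsilon$ the shifted arguments $u'$ and $w'$ are pinned near $\mp2k$, and the hypothesis $\epsilon<\min\{|k|,\pi-|k|\}$ keeps them a fixed positive distance from $2\pi\mathbb{Z}$. Hence $\sin^2(u'/2)$ and $\sin^2(w'/2)$ are bounded below by a constant, so $F_L(u'),F_L(w')=\O(1)$ and $|B|^2=\O(L^{-2})$ uniformly on $D_\epsilon$; integrating this constant bound over $D_\epsilon$ (of normalized measure $\epsilon^2/\pi^2$) gives something far below the stated $\frac{16\pi^2}{L^2\epsilon^2}$, so the claim follows with enormous room to spare. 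The real obstacle throughout is thus only the constant in the second inequality: the first identity is exact by orthogonality, and the third is a crude supremum bound.
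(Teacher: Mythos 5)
Your overall strategy is the same as the paper's: factor $A$ and $B$ into products of one\-/dimensional geometric sums so that $|A|^2$ and $|B|^2$ become products of Fej\'er kernels $F_L$, prove the first identity by term-by-term orthogonality, and handle the other two via the rotation $(u,v)=(\tfrac{\phi_1}{2}+\phi_2,\tfrac{\phi_1}{2}-\phi_2)$ together with $\frac{1}{2\pi}\int_{-\pi}^{\pi}F_L=L$ and the tail bound $\frac{1}{2\pi}\int_{\delta\le|\theta|\le\pi}F_L\le\pi/\delta$. Part one is correct and identical to the paper's. For part two your containment (cutoff at $\epsilon/2$ in the rotated variables) is in fact the \emph{correct} one --- the paper's claimed inclusion of the image of $D_\pi\setminus D_\epsilon$ into $D_{3\pi/2}\setminus D_\epsilon$ fails at points such as $(\phi_1,\phi_2)=(\tfrac{3}{2}\epsilon,0)$, which maps to $(\tfrac34\epsilon,\tfrac34\epsilon)\in D_\epsilon$ --- and your own ingredients already give $c=4$: the union bound over $\{|u|>\epsilon/2\}$ and $\{|v|>\epsilon/2\}$ costs a factor $2$, the inner integral over the admissible $v$-interval of length $\le 2\pi$ costs $L$, and the outer tail obeys $\frac{1}{2\pi}\int_{\epsilon/2\le|u|\le 3\pi/2}F_L\le(\tfrac{2\pi}{\epsilon}-1)+1=\tfrac{2\pi}{\epsilon}$, because the folded range $\pi\le|u|\le\tfrac{3\pi}{2}$ contributes at most $1$ (there $\sin^2(u/2)\ge\tfrac12$) and is absorbed by the $-1$ slack of the $[\epsilon/2,\pi]$ estimate. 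So ``a naive split gives $c=8$'' is pessimistic, and the ``sharper observation'' you invoke about the $v$-interval being pushed off the peak is neither needed nor, by itself, the source of the missing factor of $2$.

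The genuine gap is in part three. You assert that the hypothesis $\epsilon<\min\{|k|,\pi-|k|\}$ keeps $u'$ and $w'$ a \emph{fixed} positive distance from $2\pi\integer$, so that $F_L(u'),F_L(w')=\O(1)$ and the stated bound holds ``with enormous room to spare.'' The hypothesis guarantees no such uniform separation: it only gives $|k|>\epsilon$, so $k$ may be barely larger than $\epsilon$, in which case $u'/2$ can come within $\epsilon/4$ of $0$ on $D_\epsilon$ and $\sin^2(u'/2)$ can be as small as $\sin^2(\epsilon/4)=\Theta(\epsilon^2)$. Each kernel is then only $\O(\epsilon^{-2})$, so $|B|^2=\O(L^{-2}\epsilon^{-4})$ pointwise, and integrating over $D_\epsilon$ (normalized measure $\epsilon^2/\pi^2$) yields $\Theta(L^{-2}\epsilon^{-2})$: the stated bound is essentially tight in the worst case over admissible $(k,\epsilon)$, not a crude overestimate. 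As written, your argument proves the third inequality only when $\min\{|k|,\pi-|k|\}-\epsilon$ is bounded below by a constant --- which happens to hold in the paper's application ($k=3\pi/8$, $\epsilon=a/\sqrt{L}$) but not for the lemma as stated. The repair is exactly the paper's pointwise estimate: on $D_\epsilon$ one has $\epsilon/4\le|u'/2|,|w'/2|\le\pi-\epsilon/4$ modulo $2\pi$, hence $\sin^2\ge\sin^2(\epsilon/4)\ge\epsilon^2/(4\pi^2)$, giving $|B|^2\le 16\pi^4/(L^2\epsilon^4)$ and, after integration, the claimed $16\pi^2/(L^2\epsilon^2)$.
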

\begin{proof}
Using equation \eq{A} we get
\begin{align*}
|A(\phi_{1},\phi_{2})|^{2} & =  \frac{1}{L^{2}} \sum_{x,\tilde{x}=-(M+L)}
  ^{-(M+1)} \sum_{y,\tilde{y}=M+1}^{M+L}
	e^{i\frac{\phi_{1}}{2}\left(x+y-(\tilde{x}+\tilde{y})\right)}
        e^{i\phi_{2}\left(x-y-(\tilde{x}-\tilde{y})\right)}.
\end{align*}
Now 
\[
\int_{-\pi}^{\pi}\frac{d\phi_{2}}{2\pi}e^{i\phi_{2}
  \left(x-y-\tilde{x}+\tilde{y}\right)}=\delta_{x-y,\tilde{x}-\tilde{y}},
\]
so (suppressing the limits of summation for readability) 
\begin{align*}
\iint_{D_{\pi}}\frac{d\phi_{1}d\phi_{2}}{4\pi^{2}}
   |A(\phi_{1},\phi_{2})|^{2} 
   & =  \frac{1}{L^{2}} \int_{-\pi}^{\pi}\frac{d\phi_{1}}{2\pi}
     \sum_{x,\tilde{x}}\sum_{y,\tilde{y}}	e^{i\phi_{1}\left(y-\tilde{y}\right)}
     \delta_{x-y,\tilde{x}-\tilde{y}} \\
 & =  \frac{1}{L^{2}}\sum_{x,\tilde{x}}\sum_{y,\tilde{y}}\delta_{y,\tilde{y}}
 	\delta_{x-y,\tilde{x}-\tilde{y}}\\
 & =  1
\end{align*}
which proves the first part.

By performing the sums in equation \eq{A} we get 
\begin{equation}
|A(\phi_{1},\phi_{2})|^{2}
=\frac{1}{L^{2}}\frac{\sin^{2}(\frac{1}{2}L[\frac{\phi_1}{2}+\phi_{2}])} {\sin^{2}(\frac{1}{2}[\frac{\phi_1}{2}+\phi_{2}])}
\frac{\sin^{2}(\frac{1}{2}L[\frac{\phi_1}{2} -\phi_{2}])}
{\sin^{2}(\frac{1}{2}[\frac{\phi_1}{2}-\phi_{2}])}.
\label{eq:A_summed}
\end{equation}
Letting $\alpha_{1}={\phi_1}/{2}+\phi_{2}$ and
$\alpha_{2}={\phi_1}/{2}-\phi_{2}$, we see that
$|\alpha_{1}|\leq 3\pi/2$, $|\alpha_{2}|\leq 3\pi/2$,
and $\alpha_{1}^{2}+\alpha_{2}^{2}\geq  5\epsilon^{2}/2$ whenever
$(\phi_{1},\phi_{2})\in D_{\pi}\setminus D_{\epsilon}$. Defining
$D_{3\pi/2}=[-3\pi/2,3\pi/2]^{2}$
we get $(\alpha_{1},\alpha_{2})\in D_{3\pi/2}\setminus D_{\epsilon}$
whenever $(\phi_{1},\phi_{2})\in D_{\pi}\setminus D_{\epsilon}$.
Hence 
\begin{align*}
\iint_{D_{\pi}\setminus D_{\epsilon}}\frac{d\phi_{1}d\phi_{2}}{4\pi^{2}}|A(\phi_{1},\phi_{2})|^{2} 
	& \leq \frac{1}{L^{2}} \iint_{D_{3\pi/2}\setminus D_{\epsilon}} \frac{d\alpha_{1}d\alpha_{2}}{4\pi^{2}}
	\frac{\sin^{2}(\frac{1}{2}L\alpha_{1})}{\sin^{2}
	(\frac{1}{2}\alpha_{1})}\frac{\sin^{2}(\frac{1}{2}L\alpha_{2})}
	{\sin^{2}(\frac{1}{2}\alpha_{2})}\\
 & \leq  \frac{4}{L}\left(\frac{1}{L}\int_{-\frac{3\pi}{2}}^{\frac{3\pi}{2}}\frac{d\alpha_{1}}{2\pi}\frac{\sin^{2}(
 	\frac{1}{2}L\alpha_{1})}{\sin^{2}(\frac{1}{2}\alpha_{1})}\right)
 	\left(\int_{\epsilon}^{3\pi/2}\frac{d\alpha_{2}}{2\pi}\frac{\sin^{2}(\frac{1}{2}L\alpha_{2})}
 	{\sin^{2}(\frac{1}{2}\alpha_{2})}\right)\\
 & \leq  \frac{4}{L}\left(\int_{-2\pi}^{2\pi}\frac{d\alpha_{1}}{2\pi}\frac{1}{L}\frac{\sin^{2}(\frac{1}{2}L\alpha_{1})}{\sin^{2}(\frac{1}{2}\alpha_{1})}\right)\left(\int_{\epsilon}^{\frac{3\pi}{2}}\frac{d\alpha_{2}}{2\pi}\frac{1}{\sin^{2}(\frac{1}{2}\alpha_{2})}\right)\\
 & = \frac{8}{L} \left(\int_{\epsilon}^{\pi}\frac{d\alpha_{2}}{2\pi}\frac{1}{\sin^{2}(\frac{1}{2}\alpha_{2})}+\int_{\pi}^{\frac{3\pi}{2}}\frac{d\alpha_{2}}{2\pi}\frac{1}{\sin^{2}(\frac{1}{2}\alpha_{2})}\right)\\
 & \leq  \frac{8}{L}\left(\int_{\epsilon}^{\pi}\frac{d\alpha_{2}}{2\pi}\frac{\pi^{2}}{\alpha_{2}^{2}}+2\int_{\pi}^{\frac{3\pi}{2}}\frac{d\alpha_{2}}{2\pi}\right)\\
 & =  \frac{4\pi}{L\epsilon}
 \end{align*}
which proves the second inequality (in the next-to-last line we
have used the fact that $\sin({x}/{2})>{x}/{\pi}$ for $x \in (0,\pi)$
and $\sin^{2}({x}/{2})>{1}/{2}$ for $x \in (\pi,{3\pi}/{2})$).

Now
\begin{align*}
|B(\phi_{1},\phi_{2},k)|^{2} & =  |A(\phi_{1},-\phi_{2}-2k)|^{2}\\
 & \leq  \frac{1}{L^{2}}\frac{1}{\sin^{2}(\frac{1}{2}
     [\frac{\phi_{1}}{2}+\phi_{2}+2k])}
 \frac{1}{\sin^{2}(\frac{1}{2}
     [-\frac{\phi_{1}}{2}+{\phi_{2}}+2k])}.
\end{align*}
If $(\phi_{1},\phi_{2})\in D_{\epsilon}$ then $|k|-{3\epsilon}/{4} 
\leq\left|\pm{\phi_{1}}/{4}+{\phi_{2}}/{2}+k\right|
\leq|k|+{3\epsilon}/{4}$.
Noting that $\epsilon$ is chosen such that $0 < \epsilon < 
\min \{\pi-|k|,|k|\}$, we get
\[
\frac{\epsilon}{4}\leq\left|\pm\frac{\phi_{1}}{4}+\frac{\phi_{2}}{2}+k\right|\leq\pi-\frac{\epsilon}{4}
\]
so 
\begin{align*}
|B(\phi_{1},\phi_{2},k)|^{2} & \leq  \frac{1}{L^{2}}\frac{1}{\sin^{4}(\frac{\epsilon}{4})}\\
 	& \leq  \frac{16\pi^{4}}{L^{2}\epsilon^{4}}
\end{align*}
and 
\begin{align*}
\iint_{D_{\epsilon}}\frac{d\phi_{1}d\phi_{2}}{4\pi^{2}}|B(\phi_{1},\phi_{2},k)|^{2} & \leq  \frac{1}{4\pi^{2}}\left(2\epsilon\right)^{2}\left(\frac{16\pi^{4}}{L^{2}\epsilon^{4}}\right)\\
 & =  \frac{16\pi^{2}}{L^{2}\epsilon^{2}}. \qedhere
 \end{align*}
\end{proof}

\begin{lemma}
\label{lem:a_xy}
Let $a_{xy}(t)$ be as in \thm{twopart}. For $x\leq y$,
\begin{align*}
a_{xy}(t) & =  \f{1}{\sqrt{2}} e^{- i t\sqrt{2}}\left[e^{-i \pi x/2} e^{i \pi y/4} \left(\iint_{D_{\pi}} 
	\f{d\phi_1 d\phi_2}{4\pi^2} A(\phi_1,\phi_2)\right.\right.\\
	& \qquad\quad \left.e^{-i \phi_1 \left( -\lfloor t\rfloor + \lfloor
	\frac{t}{\sqrt{2}}\rfloor + \frac{x+y}{2} \right)} e^{-2 i \phi_2 \left(-\lfloor t 
	\rfloor -\lfloor \frac{t}{\sqrt{2}}\rfloor + \frac{x-y}{2}\right)}\right)\\
& \quad \pm e^{i\theta} e^{i\pi x/4}e^{-i \pi y/2} \left(\iint_{D_{\pi}} \frac{d\phi_1 d\phi_2}{4\pi^2}
	A(\phi_1,\phi_2) \right.\\
&	\qquad\quad\left.\left.e^{- i \phi_1 \left(- \lfloor t\rfloor + \lfloor \frac{t}{\sqrt{2}}\rfloor + 
	\frac{x+y}{2}\right) } e^{-2 i \phi_2 \left(-\lfloor t \rfloor -\lfloor \frac{t}{\sqrt{2}}\rfloor
	+ \frac{y-x}{2}\right)}\right)
	\right].
\end{align*}
\end{lemma}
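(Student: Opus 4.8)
The plan is to take the closed-form expression for $a_{xy}(t)$ supplied by \thm{twopart} and verify it term by term against the two integrals on the right-hand side. The prefactors $\frac{1}{\sqrt{2}}e^{-\sqrt{2}it}e^{-i\pi x/2}e^{i\pi y/4}$ and $\frac{1}{\sqrt{2}}e^{-\sqrt{2}it}e^{i\theta}e^{i\pi x/4}e^{-i\pi y/2}$ already agree on both sides, so it suffices to establish the two identities
\[
\frac{1}{L}F(x,y,t) = \iint_{D_\pi}\frac{d\phi_1 d\phi_2}{4\pi^2}A(\phi_1,\phi_2)\,e^{-i\phi_1\left(-\lfloor t\rfloor+\lfloor t/\sqrt{2}\rfloor+\frac{x+y}{2}\right)}e^{-2i\phi_2\left(-\lfloor t\rfloor-\lfloor t/\sqrt{2}\rfloor+\frac{x-y}{2}\right)}
\]
together with the analogous one in which $x-y$ is replaced by $y-x$ in the $\phi_2$ exponent, whose value I will show is $\frac{1}{L}F(y,x,t)$.

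First I would substitute the definition \eq{A} of $A(\phi_1,\phi_2)$, which carries its own factor of $1/L$, interchange the finite double sum over $x'\in\{-(M+L),\ldots,-(M+1)\}$ and $y'\in\{M+1,\ldots,M+L\}$ with the integral, and note that the integrand factorizes into a $\phi_1$-integral times a $\phi_2$-integral. I would evaluate the $\phi_2$-integral first: its exponent $(x'-y')-(x-y)+2\lfloor t\rfloor+2\lfloor t/\sqrt{2}\rfloor$ is an integer for every term in the sum, so $\int_{-\pi}^{\pi}\frac{d\phi_2}{2\pi}e^{i\phi_2(\cdots)}$ is a clean Kronecker delta enforcing $x'-y'=(x-y)-2\lfloor t\rfloor-2\lfloor t/\sqrt{2}\rfloor$.

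The one subtlety, which I would flag as the step to get right, is that the $\phi_1$-integral has exponent $\frac{(x'+y')-(x+y)}{2}+\lfloor t\rfloor-\lfloor t/\sqrt{2}\rfloor$, whose leading fraction is a priori only a half-integer, so $\int_{-\pi}^{\pi}\frac{d\phi_1}{2\pi}e^{i\phi_1(\cdots)}$ need not vanish off the integer points. The resolution is that we only require this integral on the support of the $\phi_2$-delta: once $x'-y'$ has the same parity as $x-y$, the sums $x'+y'$ and $x+y$ share parity too, so $\frac{(x'+y')-(x+y)}{2}$ is an integer and the $\phi_1$-integral is also a clean Kronecker delta, forcing $x'+y'=(x+y)-2\lfloor t\rfloor+2\lfloor t/\sqrt{2}\rfloor$.

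Solving the two linear constraints yields $x'=x-2\lfloor t\rfloor$ and $y'=y+2\lfloor t/\sqrt{2}\rfloor$, so exactly one term of the double sum survives, contributing $1$ precisely when $x-2\lfloor t\rfloor\in\{-M-L,\ldots,-M-1\}$ and $y+2\lfloor t/\sqrt{2}\rfloor\in\{M+1,\ldots,M+L\}$, which is the definition of $F(x,y,t)$; hence the first integral equals $\frac{1}{L}F(x,y,t)$. Repeating the computation for the second integral, the only change is $x-y\mapsto y-x$ in the $\phi_2$ exponent, which swaps the roles of $x$ and $y$ in the surviving constraints ($x'=y-2\lfloor t\rfloor$, $y'=x+2\lfloor t/\sqrt{2}\rfloor$) and yields $\frac{1}{L}F(y,x,t)$, completing the identification with the expression in \thm{twopart}.
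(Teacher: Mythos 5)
Your proposal is correct and takes essentially the same route as the paper: both substitute the definition of $A$, evaluate the $\phi_2$-integral first to get a Kronecker delta, use the resulting parity constraint to turn the $\phi_1$-integral into a second Kronecker delta, and identify the unique surviving term $x'=x-2\lfloor t\rfloor$, $y'=y+2\lfloor t/\sqrt2\rfloor$ with $\frac{1}{L}F(x,y,t)$ (and the swapped version with $\frac{1}{L}F(y,x,t)$). The paper packages this as a general formula for $\iint A\,e^{i\gamma_1\phi_1+2i\gamma_2\phi_2}$ under the hypothesis $\gamma_1\pm\gamma_2\in\integer$, which is exactly the half-integer subtlety you resolve directly via the support of the $\phi_2$-delta.
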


\begin{proof}
The lemma follows from \eq{a_xy} and the fact that, for any two numbers $\gamma_{1},\gamma_{2}$ such that $\gamma_{1}+\gamma_{2},\gamma_{1}-\gamma_{2}\in \mathbb{Z}$,
\[
\iint_{D_\pi}\frac{d\phi_{1} d\phi_{2}}{4\pi^2} A(\phi_{1},\phi_{2})e^{i\gamma_{1}\phi_{1}+2i \gamma_{2}\phi_{2}}=\begin{cases}
\frac{1}{L} &\text{ if }(-\gamma_{1}-\gamma_{2},-\gamma_{1}+\gamma_{2})\in \mathcal{S} \\
0 &\text{ otherwise}
\end{cases}
\]
where $\mathcal{S} = \{-M-L,\ldots, -M-1\} \times \{M+1,\ldots, M+L\}$.  To establish this formula, observe that
\begin{align*}
  \iint_{D_{\pi}}\frac{d\phi_1 d\phi_{2}}{4\pi^2}A(\phi_{1},\phi_{2})e^{ i\gamma_{1}\phi_{1}+2i\gamma_{2}\phi_{2}} 
  	& =  \frac{1}{L}\sum_{x=-M-L}^{-M-1}\sum_{y=M+1}^{M+L}\iint_{D_{\pi}}\frac{d\phi_1 d\phi_{2}}{4\pi^2}
  		e^{i\phi_{1}\left(\gamma_{1}+\frac{x+y}{2}\right)}e^{i\phi_{2}\left(x-y+2\gamma_{2}\right)}\\
 & =  \frac{1}{L}\sum_{x=-M-L}^{-M-1}\,\sum_{y=M+1}^{M+L}\int_{-\pi}^{\pi}\frac{d\phi_{1}}{2\pi}
 	e^{i\phi_{1}\left(\gamma_{1}+\frac{x-y}{2}\right)}\delta_{y,-x-2\gamma_{2}}.
\end{align*}
 Here we have performed the integral over $\phi_{2}$ using the fact
that $2\gamma_{2}$ is an integer.  We then have
\begin{align*}
\iint_{D_\pi}\frac{d\phi_{1} d\phi_{2}}{4\pi^2} A(\phi_{1},
	\phi_{2})e^{i\gamma_{1}\phi_{1}+2i\gamma_{2}\phi_{2}} 
 & =  \frac{1}{L}\sum_{x=-M-L}^{-M-1}\,
	\sum_{y=M+1}^{M+L}\int_{-\pi}^{\pi}\frac{d\phi_{1}}{2\pi}e^{i\phi_{1}
	\left(\gamma_{1}+x+\gamma_{2}\right)}\delta_{y,-x-2\gamma_{1}}\\
 & =  \frac{1}{L}\sum_{x=-M-L}^{-M-1}\,\sum_{y=M+1}^{M+L}\delta_{x,-\gamma_{1}-\gamma_{2}}\delta_{y,\gamma_{2}-\gamma_{1}}
 \end{align*}
as claimed.
\end{proof}

\begin{lemma}
\label{lem:psiA_P2}
Let $|\psi_{A}(t)\rangle$ be as in equation \eq{psiA} with $\epsilon=\frac{a}{\sqrt{L}}$ (for some constant $a$), and let $P_{2}$ be as in equation \eq{P1P2}. Then
\[
\left|\langle\psi_{A}(t)|P_{2}|\psi_{A}(t)\rangle\right|=\O\left(\frac{\log^2 L}{L}\right).
\]
\end{lemma}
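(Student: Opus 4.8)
The plan is to bound $\langle\psi_{A}(t)|P_{2}|\psi_{A}(t)\rangle=\Norm{P_2\ket{\psi_A(t)}}^2=\sum_{|x-y|<C}|\braket{x,y}{\psi_{A}(t)}|^{2}$ directly in position space, rather than through a momentum-space double integral (the latter would produce a singular $\delta$-comb from the sum over $x+y$ at fixed $x-y$, because $P_2$ projects onto an infinite strip rather than the finite internal graph that appears in \lem{defn_scatteringstates}). Using the form \eq{psiA} of $\ket{\psi_A(t)}$ together with the interaction-range branch of \eq{sc}, for $|x-y|<C$ I would write, with $s=x+y$ and $r=x-y$,
\[
\braket{x,y}{\psi_{A}(t)}=\frac{e^{-i\pi s/8}}{\sqrt{2}}\int_{-\epsilon}^{\epsilon}\frac{d\phi_{1}}{2\pi}e^{-i\phi_{1}s/2}h_{r}(\phi_{1}),
\]
where $h_{r}(\phi_{1})=\int_{-\epsilon}^{\epsilon}\frac{d\phi_{2}}{2\pi}e^{-itE(\phi_{1},\phi_{2})}A(\phi_{1},\phi_{2})\,g_{\pm}(\tfrac{\pi}{4}+\phi_{1},\tfrac{3\pi}{8}+\phi_{2},r)$, $E(\phi_1,\phi_2)=4\cos(\tfrac{\pi}{8}+\tfrac{\phi_1}{2})\cos(\tfrac{3\pi}{8}+\phi_2)$ is the dispersion in \eq{psiA}, and $g_{\pm}(p_{1},p_{2},r)=f(p_{1},p_{2},r)\pm f(p_{1},p_{2},-r)$. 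The essential structural fact is that the entire $s$-dependence of the amplitude is the pure phase $e^{-i(\pi/4+\phi_{1})s/2}$, so the $r$-dependent factor pulls out of the sum over $s$. I would first record that $g_\pm$ is uniformly bounded, $|g_{\pm}(p_{1},p_{2},r)|\le\Lambda$ for $|r|<C$ and $(p_1,p_2)$ in a neighborhood of $(\tfrac{\pi}{4},\tfrac{3\pi}{8})$; this is the two-particle analogue of \lem{defn_scatteringstates} and holds because $f(p_{1},p_{2},r)$ is a bounded rational function of $e^{ip_{2}}$ and $(2\cos(p_1/2))^{-1}$ with no poles near $(\tfrac{\pi}{4},\tfrac{3\pi}{8})$, exactly as invoked for $T,R$ in the bound on $\langle h(t)|h(t)\rangle$ above \cite{Childs_Gosset}.

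Next I would perform the sum over $s$ using Plancherel's theorem. For fixed $r$, after the substitution $\omega=\phi_{1}/2$ the coefficient $\hat{h}_{r}(s):=\int_{-\epsilon}^{\epsilon}\frac{d\phi_{1}}{2\pi}e^{-i\phi_{1}s/2}h_{r}(\phi_{1})$ is the $s$-th Fourier coefficient on $[-\pi,\pi]$ of the function $2h_{r}(2\omega)$ supported on $[-\tfrac{\epsilon}{2},\tfrac{\epsilon}{2}]$, so $\sum_{s\in\integer}|\hat{h}_{r}(s)|^{2}=\frac{1}{\pi}\int_{-\epsilon}^{\epsilon}|h_{r}(\phi_{1})|^{2}\,d\phi_{1}$. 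Since $\sum_{s\equiv r}|\hat h_r(s)|^2\le\sum_{s\in\integer}|\hat h_r(s)|^2$, and since only $2C-1$ values of $r$ satisfy $|r|<C$, this yields
\[
\langle\psi_{A}(t)|P_{2}|\psi_{A}(t)\rangle\le\frac{1}{2\pi}\sum_{|r|<C}\int_{-\epsilon}^{\epsilon}|h_{r}(\phi_{1})|^{2}\,d\phi_{1}\le\frac{(2C-1)\Lambda^{2}}{2\pi}\int_{-\epsilon}^{\epsilon}G(\phi_{1})^{2}\,d\phi_{1},
\]
where $G(\phi_{1}):=\int_{-\epsilon}^{\epsilon}\frac{d\phi_{2}}{2\pi}|A(\phi_{1},\phi_{2})|$ and I used the pointwise estimate $|h_{r}(\phi_{1})|\le\Lambda\,G(\phi_{1})$ (the time phase has unit modulus).

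It remains to show $\int_{-\epsilon}^{\epsilon}G(\phi_{1})^{2}\,d\phi_{1}=\O(\log^{2}L/L)$, which I expect to be the main obstacle: the naive Cauchy--Schwarz bound $G(\phi_1)\le 1$ only gives $\O(\epsilon)=\O(L^{-1/2})$, which is too weak. The gain comes from viewing $G$ as a restricted convolution of Dirichlet kernels. Taking the square root of \eq{A_summed} gives $|A(\phi_{1},\phi_{2})|=\frac{1}{L}|D_{L}(\tfrac{\phi_{1}}{2}+\phi_{2})|\,|D_{L}(\tfrac{\phi_{1}}{2}-\phi_{2})|$ with $D_{L}(\alpha)=\sin(L\alpha/2)/\sin(\alpha/2)$; substituting $u=\tfrac{\phi_{1}}{2}+\phi_{2}$ and extending the nonnegative integrand to the full circle yields $G(\phi_{1})\le\frac{1}{L}(|D_{L}|*|D_{L}|)(\phi_{1})$. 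Young's convolution inequality $\norm{|D_{L}|*|D_{L}|}_{2}\le\norm{D_{L}}_{1}\norm{D_{L}}_{2}$, the Lebesgue-constant estimate $\norm{D_{L}}_{1}=\O(\log L)$, and the exact value $\norm{D_{L}}_{2}=\sqrt{L}$ (all norms in the measure $d\alpha/2\pi$ on $[-\pi,\pi]$) then give $\int_{-\pi}^{\pi}\frac{d\phi_{1}}{2\pi}G(\phi_{1})^{2}\le\frac{1}{L^{2}}\big(\O(\log L)\cdot\sqrt{L}\big)^{2}=\O(\log^{2}L/L)$, hence $\int_{-\epsilon}^{\epsilon}G^{2}\,d\phi_{1}=\O(\log^{2}L/L)$ as well. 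Combining this with the previous display completes the proof. The only delicate inputs are the uniform boundedness of $g_{\pm}$ (inherited from \cite{Childs_Gosset}) and the correct $L^{1}$--$L^{2}$ splitting of the Dirichlet kernel; the rest is bookkeeping.
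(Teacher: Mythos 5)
Your proof is correct, and it reaches the bound by essentially the same mechanism as the paper's: the sum over $s=x+y$ diagonalizes the $\phi_1$ integration, the internal amplitudes $f(p_1,p_2,r)$ for $|r|<C$ are bounded uniformly near $(\tfrac{\pi}{4},\tfrac{3\pi}{8})$ (the paper gets this by applying \lem{defn_scatteringstates} to the $\phi_1$-dependent effective Hamiltonian in the relative coordinate), and the gain from the trivial $\O(L^{-1/2})$ to $\O(\log^2L/L)$ comes entirely from the $\O(\log L)$ Lebesgue-constant bound on $\int|D_L|$. The packaging differs in two places. You work in position space and invoke Plancherel over $s$; the paper instead writes $\langle\psi_A(t)|P_2|\psi_A(t)\rangle$ as a double momentum integral of scattering-state overlaps and extracts $2\pi\delta(\phi_1-\tilde\phi_1)$ from the $s$-sum, which collapses the expression to exactly your $\rho\int_{-\epsilon}^{\epsilon}G(\phi_1)^2\,\frac{d\phi_1}{2\pi}$ --- so the two reductions are the same computation in different clothes, with yours arguably the cleaner one since it avoids manipulating delta functions. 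Second, where you estimate $\norm{G}_2^2$ via Young's inequality $\norm{\,|D_L|*|D_L|\,}_2\leq\norm{D_L}_1\norm{D_L}_2=\O(\log L)\cdot\sqrt{L}$, the paper splits $\int G^2\leq\norm{G}_\infty\norm{G}_1$, bounding $G(\phi_1)\leq 1$ pointwise by AM--GM plus Parseval on the inner $\tilde\phi_2$ integral and $\norm{G}_1=\O(\log^2L/L)$ via its equation \eq{eqn_d} with $d=\Theta(1/L)$; both splittings yield the same numerology. Your one unproved input --- uniform boundedness of $g_\pm$ on a neighborhood of the scattering momenta, inherited from the rational, pole-free structure of $f$ --- is justified at the same level of rigor as the paper's own use of \lem{defn_scatteringstates} with a $\phi_1$-dependent potential, and you correctly handle the parity constraint on $(s,r)$ and the finiteness ($2C-1$ values) of the $r$-sum.
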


\begin{proof}
First note that 
\begin{align*}
\left|\langle\psi_{A}(t)|P_{2}|\psi_{A}(t)\rangle\right| & \leq\int_{D_{\epsilon}}\frac{d\phi_{1}d\tilde{\phi}_{1}}{4\pi^{2}}\int_{D_{\epsilon}}\frac{d\phi_{2}d\tilde{\phi}_{2}}{4\pi^{2}} \left|A(\phi_{1},\phi_{2})^{\ast}A(\tilde{\phi}_{1},\tilde{\phi}_{2})\right| \\
& \qquad \left|_{\pm}\langle\mathrm{sc}(\tfrac{\pi}{4}+\phi_{1};\tfrac{3\pi}{8}+\phi_{2})|P_{2}|\mathrm{sc}(\tfrac{\pi}{4}+\tilde{\phi}_{1};\tfrac{3\pi}{8}+\tilde{\phi}_{2})\rangle_{\pm}\right|.
\end{align*}
Now 
\begin{align*}
_{\pm}\langle\mathrm{sc}(\tfrac{\pi}{4}+\phi_{1};\tfrac{3\pi}{8}+\phi_{2})|P_2|\mathrm{sc}(\tfrac{\pi}{4}+\tilde{\phi}_{1};\tfrac{3\pi}{8}+\tilde{\phi}_{2})\rangle_{\pm} & = {}_{\pm}\langle\psi(\tfrac{\pi}{4}+\phi_{1};\tfrac{3\pi}{8}+\phi_{2})|J|\psi(\tfrac{\pi}{4}+\tilde{\phi}_{1};\tfrac{3\pi}{8}+\tilde{\phi}_{2})\rangle_{\pm}
\end{align*}
 where
\[
|\psi(\tfrac{\pi}{4}+\phi_{1};\tfrac{3\pi}{8}+\phi_{2})\rangle_{\pm}=\frac{1}{\sqrt{2}}\left(|\psi(\tfrac{\pi}{4}+\phi_{1};\tfrac{3\pi}{8}+\phi_{2})\rangle\pm|\psi(\tfrac{\pi}{4}+\phi_{1};-\tfrac{3\pi}{8}-\phi_{2})\rangle\right),
\]
$|\psi(p_1;p_2)\rangle$ is defined in equation \eq{psip1p2}, and 
\begin{align*}
J & = \sum_{s\text{ even }}e^{i\left(\phi_{1}-\tilde{\phi}_{1}\right)\frac{s}{2}}\sum_{\text{$r$ even, $|r|\!<\!C$}}|r\rangle\langle r|+\sum_{\text{$s$ odd}} e^{i\left(\phi_{1}-\tilde{\phi}_{1}\right)\frac{s}{2}} \sum_{\text{$r$ odd, $|r|\!<\!C$}}|r\rangle\langle r|\\
 & = 2\pi\delta(\phi_{1}-\tilde{\phi}_{1})\sum_{|r|<C}|r\rangle\langle r|.
\end{align*}
Define 
\[
g(\phi_{1},\phi_{2},\tilde{\phi}_{2})=\sum_{|r|<C}{}_{\pm}\langle\psi(\tfrac{\pi}{4}+\phi_{1};\tfrac{3\pi}{8}+\phi_{2})|r\rangle\langle r|\psi(\tfrac{\pi}{4}+\phi_{1};\tfrac{3\pi}{8}+\tilde{\phi}_{2})\rangle_{\pm}.
\]
The function $g$ satisfies $|g(\phi_{1},\phi_{2},\tilde{\phi}_{2})|\leq4C\lambda(\phi_{1})^{2}$, where for each $\phi_{1}$, $\lambda(\phi_{1})$ is the constant
obtained by applying \lem{defn_scatteringstates} to the
single-particle Hamiltonian 
\[
H_{r}^{(1)}+\left(2\cos(\tfrac{\pi}{8}+\tfrac{\phi_{1}}{2})\right)^{-1}\sum_{r\in\mathbb{Z}}\mathcal{V}(|r|)|r\rangle\langle r|.
\]
Let $\rho=4C\max_{[-\epsilon,\epsilon]}\lambda(\phi_{1})^{2}$.
Then 
\begin{align*}
\left|\langle\psi_{A}(t)|P_{2}|\psi_{A}(t)\rangle\right| & \leq  \int_{D_{\epsilon}}\frac{d\phi_{1}d\phi_{2}}{4\pi^{2}}\int_{-\epsilon}^{\epsilon}\frac{d\tilde{\phi}_{2}}{2\pi}\left|A(\phi_{1},\phi_{2})^{\ast}A(\phi_{1},\tilde{\phi}_{2})\right| \rho\\
 & = \int_{D_{\epsilon}} \frac{d\phi_{1}d\phi_{2}}{4\pi^{2}}\frac{\rho}{L}\left|\frac{\sin(\frac{L}{2}(\frac{1}{2}\phi_{1}+\phi_{2}))}{\sin(\frac{1}{2}(\frac{1}{2}\phi_{1}+\phi_{2}))} \frac{\sin(\frac{L}{2}(\frac{1}{2}\phi_{1}-\phi_{2}))}{\sin(\frac{1}{2}(\frac{1}{2}\phi_{1}-\phi_{2}))}\right|\\
 & \quad  \int_{-\epsilon}^{\epsilon}\frac{d\tilde{\phi}_{2}}{2\pi L}\left|\frac{\sin(\frac{L}{2}(\frac{1}{2}\phi_{1}+\tilde{\phi}_{2}))}{\sin(\frac{1}{2}(\frac{1}{2}\phi_{1}+\tilde{\phi}_{2}))} \frac{\sin(\frac{L}{2}(\frac{1}{2}\phi_{1}-\tilde{\phi}_{2}))}{\sin(\frac{1}{2}(\frac{1}{2}\phi_{1}-\tilde{\phi}_{2}))}\right| \\
 & \leq  \rho\int_{D_{\epsilon}} \frac{d\phi_{1}d\phi_{2}}{4\pi^{2}L}\left|\frac{\sin(\frac{L}{2}(\frac{1}{2}\phi_{1}+\phi_{2}))}{\sin(\frac{1}{2}(\frac{1}{2}\phi_{1}+\phi_{2}))} \frac{\sin(\frac{L}{2}(\frac{1}{2}\phi_{1}-\phi_{2}))}{\sin(\frac{1}{2}(\frac{1}{2}\phi_{1}-\phi_{2}))}\right|\\
 & \quad \int_{-\epsilon}^{\epsilon}\frac{d\tilde{\phi}_{2}}{4\pi L}\left(\left|\frac{\sin(\frac{L}{2}(\frac{1}{2}\phi_{1}+\tilde{\phi}_{2}))}{\sin(\frac{1}{2}(\frac{1}{2}\phi_{1}+\tilde{\phi}_{2}))}\right|^{2}+\left|\frac{\sin(\frac{L}{2}(\frac{1}{2}\phi_{1}-\tilde{\phi}_{2}))}{\sin(\frac{1}{2}(\frac{1}{2}\phi_{1}-\tilde{\phi}_{2}))}\right|^{2}\right).
\end{align*}
 Now for each $\phi_{1}\in[-\epsilon,\epsilon]$,
 \begin{align*}
\int_{-\epsilon}^{\epsilon}\frac{d\tilde{\phi}_{2}}{2\pi L}\left|\frac{\sin(\frac{L}{2}(\frac{1}{2}\phi_{1}\pm\tilde{\phi}_{2}))}{\sin(\frac{1}{2}(\frac{1}{2}\phi_{1}\pm\tilde{\phi}_{2}))}\right|^{2} & =  \int_{-\epsilon\pm\frac{1}{2}\phi_{1}}^{\epsilon\pm\frac{1}{2}\phi_{1}}\frac{d\tilde{\phi}_{2}}{2\pi L}\left|\frac{\sin(\frac{L}{2}\tilde{\phi}_{2})}{\sin(\frac{1}{2}\tilde{\phi}_{2})}\right|^{2}\\
 & \leq  \int_{-\frac{3}{2}\epsilon}^{\frac{3}{2}\epsilon}\frac{d\tilde{\phi}_{2}}{2\pi L}\left|\frac{\sin(\frac{L}{2}\tilde{\phi}_{2})}{\sin(\frac{1}{2}\tilde{\phi}_{2})}\right|^{2}\\
 & \leq  \int_{-\pi}^{\pi}\frac{d\tilde{\phi}_{2}}{2\pi L}\left|\frac{\sin(\frac{L}{2}\tilde{\phi}_{2})}{\sin(\frac{1}{2}\tilde{\phi}_{2})}\right|^{2}\\
 & = 1,
\end{align*}
so 
\begin{equation}
\label{eq:psiA_Q_P2}
\left|\langle\psi_{A}(t)|P_{2}|\psi_{A}(t)\rangle\right|\leq\rho\int_{D_{\epsilon}}\frac{d\phi_{1}d\phi_{2}}{4\pi^{2}L}\left|\frac{\sin(\frac{L}{2}(\frac{1}{2}\phi_{1}+\phi_{2}))}{\sin(\frac{1}{2}(\frac{1}{2}\phi_{1}+\phi_{2}))} \frac{\sin(\frac{L}{2}(\frac{1}{2}\phi_{1}-\phi_{2}))}{\sin(\frac{1}{2}(\frac{1}{2}\phi_{1}-\phi_{2}))}\right|.
\end{equation}
Letting $\alpha_{1}=\frac{1}{2}\phi_{1}+\phi_{2}$ and $\alpha_{2}=\frac{1}{2}\phi_{1}-\phi_{2}$, we get 
\begin{align*}
&\frac{1}{L}\int_{D_{\epsilon}}\frac{d\phi_{1}d\phi_{2}}{4\pi^{2}}\left|\frac{\sin(\frac{L}{2}(\frac{1}{2}\phi_{1}+\phi_{2}))}{\sin(\frac{1}{2}(\frac{1}{2}\phi_{1}+\phi_{2}))} \frac{\sin(\frac{L}{2}(\frac{1}{2}\phi_{1}-\phi_{2}))}{\sin(\frac{1}{2}(\frac{1}{2}\phi_{1}-\phi_{2}))}\right| \\
&\quad \leq \frac{1}{L} \int_{-\frac{3}{2}\epsilon}^{\frac{3}{2}\epsilon}\frac{d\alpha_{1}}{2\pi}\left|\frac{\sin\left(\frac{L}{2}\alpha_{1}\right)}{\sin\left(\frac{1}{2}\alpha_{1}\right)}\right| \int_{-\frac{3}{2}\epsilon}^{\frac{3}{2}\epsilon}\frac{d\alpha_{2}}{2\pi}\left|\frac{\sin\left(\frac{L}{2}\alpha_{2}\right)}{\sin\left(\frac{1}{2}\alpha_{2}\right)}\right| \\
&\quad \leq \left(\frac{d\sqrt{L}}{\pi}+\frac{\log \frac{1}{d}}{\sqrt{L}}\right)^{2} ~\text{for $d\in(0,\tfrac{3\epsilon}{2}]$}
\end{align*}
where in the last line we used equation \eq{eqn_d}. Setting $d=\Theta({1}/{L})$ and using this bound in \eq{psiA_Q_P2} gives the desired result.
\end{proof}

\section{Truncation lemma}\label{sec:Truncation-Lemma}

To prove \lem{trunc} we use the following two propositions:
\begin{prop}
\label{pro:trunc_prop}Let $H$ be a Hamiltonian acting on a Hilbert
space $\H$, and let $\ket{\Phi}\in\H$ be a normalized state. Let
$\K$ be a subspace of $\H$ such that there exists an $N_0\in\natural$
so that for all $\ket{\alpha}\in \K^{\perp}$ and for all $n\in\{0,1,2,\ldots, N_0-1\}$, $\bra{\alpha}H^{n}\ket{\Phi}=0$.
Let $P$ be the projector onto $\K$ and let $\tilde{H}=PHP$ be
the Hamiltonian within this subspace.  Then
\[
\norm{e^{-it\tilde{H}}\ket{\Phi}-e^{-itH}\ket{\Phi}} \leq 2\left(\frac{e\norm{H}t}{N_0}\right)^{N_0}.
\]
\end{prop}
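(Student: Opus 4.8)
The plan is to exploit the hypothesis, which states precisely that $H^{n}\ket{\Phi}\in\K$ for every $n\in\{0,1,\ldots,N_0-1\}$ (the condition $\bra{\alpha}H^{n}\ket{\Phi}=0$ for all $\ket{\alpha}\in\K^{\perp}$ is equivalent to $(1-P)H^{n}\ket{\Phi}=0$), and to show that the truncated and untruncated evolutions have identical Taylor expansions through order $N_0-1$. The entire discrepancy then lives in the tail of the exponential series, which is controlled because $x^{n}/n!$ decays once $n$ exceeds $x=\norm{H}t$.

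First I would establish, by induction on $n$, that $\tilde{H}^{n}\ket{\Phi}=H^{n}\ket{\Phi}$ for $0\le n\le N_0-1$. The base case $n=0$ holds because $\ket{\Phi}\in\K$ gives $P\ket{\Phi}=\ket{\Phi}$. For the inductive step, assuming $\tilde{H}^{n-1}\ket{\Phi}=H^{n-1}\ket{\Phi}\in\K$ (so that $PH^{n-1}\ket{\Phi}=H^{n-1}\ket{\Phi}$), I would compute
\[
\tilde{H}^{n}\ket{\Phi}=PHP\,H^{n-1}\ket{\Phi}=PH\,H^{n-1}\ket{\Phi}=P\,H^{n}\ket{\Phi}=H^{n}\ket{\Phi},
\]
where the final equality uses $H^{n}\ket{\Phi}\in\K$, valid because $n\le N_0-1$.

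Next I would subtract the two power series. Since the terms of order $0$ through $N_0-1$ agree,
\[
\bigl(e^{-it\tilde{H}}-e^{-itH}\bigr)\ket{\Phi}=\sum_{n=N_0}^{\infty}\frac{(-it)^{n}}{n!}\bigl(\tilde{H}^{n}-H^{n}\bigr)\ket{\Phi}.
\]
Using $\norm{H^{n}\ket{\Phi}}\le\norm{H}^{n}$ and $\norm{\tilde{H}^{n}\ket{\Phi}}\le\norm{\tilde{H}}^{n}\le\norm{H}^{n}$ (the last step because $\tilde{H}=PHP$ with $P$ an orthogonal projector), together with the triangle inequality, yields
\[
\Norm{\bigl(e^{-it\tilde{H}}-e^{-itH}\bigr)\ket{\Phi}}\le 2\sum_{n=N_0}^{\infty}\frac{(\norm{H}t)^{n}}{n!}.
\]

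The main obstacle is the final tail estimate, where the constant $2$ must come out exactly. Writing $x=\norm{H}t$ and $r=ex/N_0$, the claim reduces to $\sum_{n\ge N_0}x^{n}/n!\le r^{N_0}$. If $r\ge 1$ this is immediate, since $e^{-it\tilde{H}}$ and $e^{-itH}$ are unitary and $\ket{\Phi}$ is normalized, so the quantity being bounded is at most $2\le 2r^{N_0}$. For $r<1$ I would note that consecutive terms satisfy $\tfrac{x^{n+1}/(n+1)!}{x^{n}/n!}=\tfrac{x}{n+1}\le\tfrac{x}{N_0}=\tfrac{r}{e}$ for $n\ge N_0$, so the tail is dominated by a geometric series:
\[
\sum_{n=N_0}^{\infty}\frac{x^{n}}{n!}\le\frac{x^{N_0}}{N_0!}\cdot\frac{1}{1-r/e}.
\]
The delicate point is that bounding $x^{N_0}/N_0!\le r^{N_0}$ crudely loses a factor of $\sqrt{2\pi N_0}$ and only yields a constant like $2e/(e-1)$; to secure the stated constant I would instead invoke the Stirling lower bound $N_0!\ge\sqrt{2\pi N_0}\,(N_0/e)^{N_0}$, giving $x^{N_0}/N_0!\le r^{N_0}/\sqrt{2\pi N_0}$. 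Since $\sqrt{2\pi N_0}\,(1-r/e)\ge\sqrt{2\pi}\,(1-1/e)>1$ for all $N_0\ge 1$ and $r<1$, the prefactor is at most $1$, so $\sum_{n\ge N_0}x^{n}/n!\le r^{N_0}$, and combining with the factor $2$ from the previous step completes the proof.
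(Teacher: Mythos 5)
Your proposal is correct, and its overall architecture matches the paper's: compare the two exponential series term by term, observe that the first $N_0$ terms cancel, and bound the tail $2\sum_{n\ge N_0}(\norm{H}t)^n/n!$ by $2(e\norm{H}t/N_0)^{N_0}$. Two points of divergence are worth noting. First, you are more careful than the paper about the identity $\tilde{H}^{n}\ket{\Phi}=H^{n}\ket{\Phi}$ for $n<N_0$: the paper asserts this ``by assumption,'' whereas the hypothesis literally only gives $(1-P)H^{n}\ket{\Phi}=0$, and your short induction is the right way to bridge that gap. Second, your tail estimate takes a genuinely different route. The paper uses the one-line trick $\sum_{k\ge N_0}y^{k}/k!\le c^{-N_0}\sum_{k\ge N_0}(cy)^{k}/k!\le c^{-N_0}e^{cy}$ for any $c\ge1$, optimized at $c=N_0/y$ (with the degenerate case $c<1$ dismissed because the bound then trivially exceeds $2$); you instead dominate the tail by a geometric series with ratio $r/e$ and invoke Stirling's lower bound $N_0!\ge\sqrt{2\pi N_0}\,(N_0/e)^{N_0}$, checking that $\sqrt{2\pi N_0}\,(1-r/e)\ge\sqrt{2\pi}\,(1-1/e)>1$ so the prefactor is absorbed. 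Both yield the stated constant exactly; the paper's substitution is slicker and avoids Stirling, while yours is more elementary in spirit and makes the case analysis ($r\ge1$ versus $r<1$) fully explicit.
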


\begin{proof}
Define $\ket{\Phi(t)}$ and $\ket{\tilde\Phi(t)}$
as
\[
\ket{\Phi(t)}=e^{-itH}\ket{\Phi}=\sum_{k=0}^{\infty}\f{(-it)^{k}}{k!}H^{k}\ket{\Phi}\qquad\ket{\tilde\Phi(t)}=e^{-it\tilde{H}}\ket{\Phi}=\sum_{k=0}^{\infty}\f{(-it)^{k}}{k!}\tilde{H}^{k}\ket{\Phi}.
\]
Note that by assumption, $\tilde{H}^{k}\ket{\Phi}=H^{k}\ket{\Phi}$ for all $k< N_0$, and thus the first $N_0$ terms in the two above sums are equal. Looking at the difference between these two states, we have
\begin{align*}
\norm{\ket{\Phi(t)}-\ket{\tilde\Phi(t)}} & =\Norm{\sum_{k=0}^{\infty}\f{(-it)^{k}}{k!}\left(H^{k}-\tilde{H}^{k}\right)\ket{\Phi}}\\
 & =\Norm{\sum_{k=0}^{N_0-1}\f{(-it)^{k}}{k!}\left(H^{k}-\tilde{H}^{k}\right)\ket{\Phi}-\sum_{k=N_0}^{\infty}\f{(-it)^{k}}{k!}\left(H^{k}-\tilde{H}^{k}\right)\ket{\Phi}}\\
 & \leq\sum_{k=N_0}^{\infty}\f{t^{k}}{k!}\left(\norm{H}^{k}+\norm{\tilde{H}}^{k}\right) \\
 & \leq 2\sum_{k=N_0}^{\infty}\f{t^{k}}{k!} \norm{H}^{k}
\end{align*}
where the last step uses the fact that $\norm{\tilde{H}}\leq\norm{P}\norm{H}\norm{P} = \norm{H}$.  Thus for any $c \ge 1$, we have
\begin{align*}
\norm{\ket{\Phi(t)}-\ket{\tilde\Phi(t)}}
 & \leq\f{2}{c^{N_0}}\sum_{k=N_0}^{\infty}\f{(ct)^{k}}{k!}\norm{H}^{k}\\
 & \leq\f{2}{c^{N_0}}\exp(ct\norm{H}).
\end{align*}
We obtain the best bound by choosing $c=N_0/\norm{Ht}$, which gives
\[
  \norm{\ket{\Phi(t)}-\ket{\tilde\Phi(t)}}
  \le 2\left(\frac{e\norm{H}t}{N_0}\right)^{N_0}
\]
as claimed.  (If $c < 1$ then the bound is trivial.)
\end{proof}

\begin{prop}
\label{pro:hybrid}Let $U_1,\ldots, U_n$ and $V_1,\ldots, V_n$  be unitary operators.  Then for any $\ket{\psi}$,
\begin{equation}
  \Norm{\left(\prod_{i=n}^1 U_i - \prod_{i=n}^1 V_i \right) \ket{\psi}}   \le \sum_{j=1}^n \Bigg\|{(U_j - V_j)\prod_{i=j-1}^1 U_i \ket{\psi}}\Bigg\|.
\end{equation}
\end{prop}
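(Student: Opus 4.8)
The plan is to prove this by a standard hybrid (telescoping) argument, interpolating one factor at a time between the all-$U$ product and the all-$V$ product. For $j\in\{0,1,\ldots,n\}$ I would define the hybrid operators
\[
  W_j = \left(\prod_{i=n}^{j+1} V_i\right)\left(\prod_{i=j}^{1} U_i\right),
\]
adopting the convention that an empty product is the identity. With this definition $W_n = \prod_{i=n}^{1} U_i$ (the $V$-product is empty) and $W_0 = \prod_{i=n}^{1} V_i$ (the $U$-product is empty), so the operator whose action we want to bound is exactly $W_n - W_0$.

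The key step is to telescope: since $W_n - W_0 = \sum_{j=1}^{n}(W_j - W_{j-1})$, I would first check that consecutive hybrids differ only in their $j$th factor. Writing out $W_j$ and $W_{j-1}$ and noting that they share the prefix $\prod_{i=n}^{j+1} V_i$ and the suffix $\prod_{i=j-1}^{1} U_i$, the one differing slot is $U_j$ versus $V_j$, giving
\[
  W_j - W_{j-1} = \left(\prod_{i=n}^{j+1} V_i\right)(U_j - V_j)\left(\prod_{i=j-1}^{1} U_i\right).
\]
Applying both sides to $\ket{\psi}$, using the triangle inequality on the telescoped sum, and then discarding the leading unitary factor $\prod_{i=n}^{j+1} V_i$ (which preserves the norm, as it is a product of unitaries), each term collapses to $\Norm{(U_j - V_j)\prod_{i=j-1}^{1} U_i\ket{\psi}}$, which is precisely the summand in the claimed bound.

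There is no real analytic difficulty here; the argument is purely algebraic and the only thing requiring care is bookkeeping of the product ordering and the empty-product convention (in particular verifying the $j=1$ and $j=n$ boundary cases, where one of the two products degenerates to the identity). I would therefore spend the writeup making the indexing conventions explicit and confirming the prefix/suffix factorization of $W_j - W_{j-1}$, after which the triangle inequality and unitary invariance of the norm finish the proof immediately.
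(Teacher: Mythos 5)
Your telescoping argument is correct and is essentially the same hybrid argument the paper uses: the paper proves the statement by induction on $n$, and its induction step inserts exactly one of your hybrid terms $V_n\prod_{i=n-1}^{1}U_i$, uses the triangle inequality, and drops the unitary prefix, so unrolling that induction reproduces your telescoping sum term for term. No gaps; the identity $W_j - W_{j-1} = \bigl(\prod_{i=n}^{j+1} V_i\bigr)(U_j - V_j)\bigl(\prod_{i=j-1}^{1} U_i\bigr)$ and the use of unitarity are exactly as in the paper.
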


\begin{proof}
The proof is by induction on $n$.  The case $n=1$ is obvious.  For the induction step, we have \begin{align}   \Norm{\left(\prod_{i=n}^1 U_i - \prod_{i=n}^1 V_i \right) \ket{\psi}}   &= \Norm{\left(\prod_{i=n}^1 U_i - V_n \prod_{i=n-1}^1 U_i             + V_n \prod_{i=n-1}^1 U_i - \prod_{i=n}^1 V_i \right) \ket{\psi}} \\   &\le \Norm{(U_n - V_n) \prod_{i=n-1}^1 U_i \ket{\psi}}       +\Norm{\left(\prod_{i=n-1}^1 U_i - \prod_{i=n-1}^1 V_i \right) \ket{\psi}} \\   &\le \sum_{j=1}^n \Norm{(U_j - V_j)\prod_{i=j-1}^1 U_i \ket{\psi}} \end{align} where the last step uses the induction hypothesis. 
\end{proof}

We are now ready to prove \lem{trunc}:

\begin{proof}[Proof of \lem{trunc}]
For $M \in \natural$ write
\begin{align*}
  \norm{(e^{-iHt}-e^{-i\tilde{H}t}) \ket{\Phi}}
  &= \Norm{\left(\left(e^{-iH\frac{t}{M}}\right)^{M} -
     \left(e^{-i\tilde{H}\frac{t}{M}}\right)^{M}\right) \ket{\Phi}} \\
  & \leq \sum_{j=1}^{M} \Norm{\left(e^{-iH\frac{t}{M}} - 
     e^{-i\tilde{H}\frac{t}{M}}\right) e^{-iW\left(j-1\right)\frac{t}{M}}
     \ket{\Phi}} \\
  & \leq \sum_{j=1}^{M} 
    \Norm{\left(e^{-iH\frac{t}{M}} - e^{-i\tilde{H}\frac{t}{M}}\right)
    \left(\ket{\gamma(\tfrac{(j-1)t}{M})} +
          \ket{\epsilon(\tfrac{(j-1)t}{M})}\right)} \\
  & \leq 2M\delta+\sum_{j=1}^{M} 
    \Norm{ \left(e^{-iH\frac{t}{M}} - e^{-i\tilde{H}\frac{t}{M}}\right)
    \frac{\ket{\gamma(\tfrac{(j-1)t}{M})}} 
         {\Norm{\ket{\gamma(\tfrac{(j-1)t}{M})}}}}
    \Norm{\ket{\gamma(\tfrac{(j-1)t}{M})}}\\ 
  & \leq 2M\delta 
    + 2M\left(\frac{e\norm{H}t}{M N_0}\right)^{N_0}(1+\delta)
\end{align*}
where in the second line we have used \pro{hybrid} and in the last step we have used \pro{trunc_prop} and the fact that $\Vert |\gamma(t)\rangle\Vert \leq 1+\delta$.
Now, for some $\eta>1$, choose
\[
  M= \left\lceil \frac{\eta e\norm{H}t}{N_0} \right\rceil
\]
for $0<t\leq T$ to get
\begin{align*}
  \norm{(e^{-iHt}-e^{-i\tilde{H}t}) \ket{\Phi}}
  &\leq 2M\left(\delta + \eta^{-N_0}(1+\delta)\right) \\
  &\leq 2\left(\frac{\eta e\norm{H}t}{N_0} + 1 \right) 
        \left(\delta + \eta^{-N_0}(1+\delta)\right).
\end{align*}
The choice $\eta=2$ gives the stated conclusion.
\end{proof}

Note that it would be slightly better to take a smaller value of $\eta$.  However, this does not significantly improve the final result; the above bound is simpler and sufficient for our purposes.

\end{document}